\newcommand{\dhat}[1]{\underline{\hat\delta}^{(#1)}(\epsilon)}
\newcommand{\esgj}{\epsilon^{\tau_L,\tau_R}_{SG}}
\newcommand{\esg}[2]{\epsilon^{#1,#2}_{SG}}
\newcommand{\epslr}{\epsilon^*_{l,r}}
\newcommand{\udi}{\underline\delta_I}
\newcommand{\udo}{\underline\delta_O}
\newcommand{\udv}{\underline d_V}
\newcommand{\udc}{\underline d_C}
\newcommand{\udel}{\underline\delta}
\newcommand{\Delf}[1]{\Delta(#1,\underline\delta)}
\newcommand{\DelfL}[1]{\Delta_L(#1,\underline\delta)}
\newcommand{\DelfR}[1]{\Delta_R(#1,\underline\delta)}
\newcommand{\DelfLk}[2]{\Delta^{(#1)}_L(#2,\underline\delta)}
\newcommand{\DelfRk}[2]{\Delta^{(#1)}_R(#2,\underline\delta)}
\newcommand{\Bleft}{B_\mathrm{left}}
\newcommand{\Bright}{B_\mathrm{right}}
\begin{document}
\bstctlcite{IEEEexample:BSTcontrol}

\title{On the Decoding Performance of Spatially Coupled LDPC Codes with Sub-block Access}
\author{
\vspace*{0.8cm} Eshed Ram \qquad Yuval Cassuto\\
Andrew and Erna Viterbi Department of Electrical Engineering \\
Technion -- Israel Institute of Technology, Haifa 32000, Israel\\
E-mails: \{s6eshedr@campus, ycassuto@ee\}.technion.ac.il
}

\maketitle

\begin{abstract}

We study spatially coupled LDPC codes that allow access to sub-blocks much smaller than the full code block. Sub-block access is realized by a semi-global decoder that decodes a chosen target sub-block by only accessing the target, plus a prescribed number of helper sub-blocks adjacent in the code chain. This paper develops a theoretical methodology for analyzing the semi-global decoding performance of spatially coupled LDPC codes constructed from protographs. The main result shows that semi-global decoding thresholds can be derived from certain thresholds we define for the single-sub-block graph. These characterizing thresholds are also used for deriving lower bounds on the decoder's performance over channels with variability across sub-blocks, which are motivated by applications in data-storage.

\end{abstract}

{\bf{Keywords}}: coding for memories, decoding thresholds, density evolution, Markov chains, multi sub-block coding, spatially coupled low-density parity-check codes.

\section{Introduction}
\label{Sec:Intro}

Spatially coupled low-density parity-check (SC-LDPC) codes \cite{FelstromZigangirov99} have been shown to be an attractive class of graph codes, thanks to their many desired properties. Most of their good properties stem from the convolution-like structure imposed on their code graphs. Their structure as a terminated chain allows achieving capacity universally with belief-propagation (BP) decoding~\cite{KudekarRichUrb13}. Thanks to the so called threshold-saturation effect, approaching capacity with BP decoding is possible with much simpler graph structures than classical LDPC codes, thus alleviating many of the notorious finite-block issues, and offering superior performance in practice. A very popular and effective construction method for SC-LDPC codes uses chaining of coupled protographs~\cite{Thorpe03}, where extremely simple protographs (e.g. regular) are often sufficient for extremely good performance \cite{MitchLent15,LentSrid10,PusSamr11,MitchDol14}.

Another advantage of the convolutional structure lies in its enabling of efficient low-latency decoders such as the window decoder \cite{IyenPapa12,IyenSiegel13,Lentmaier11}, in which at each step the decoder spans a subset of the code graph and runs BP on it, and then shifts the window to a new position. More recent work further enhances protograph-based SC-LDPC codes and their decoding, through several novel ideas such as optimizing edge spreading  \cite{EsfHar19,MoChen20,LevEsf20}, connecting sub-chains \cite{TruMitch19}, and doping \cite{ZhuMitch20}. In this paper, we harness the convolutional structure of SC-LDPC codes toward a new feature: allowing selective decoding of target sub-blocks within the full code block, without requiring to start the decoding from the beginning of the block. This feature is attractive for deploying SC-LDPC codes in storage applications, which require read access to small units of data at low latency.  

In a recent series of papers \cite{SCLDPCL_arXiv,RamCass18,RamCass19,EsfRam20}, a new type of SC-LDPC codes for efficient sub-block access is presented and studied. 
These codes, called SC-LDPCL codes (suffix 'L' stands for locality), can be decoded locally at the level of sub-blocks that are much smaller than the full code block, thus offering fast access to the coded information alongside the strong reliability of the global full-block decoding. Earlier work on codes with sub-block access includes multi-sub-block Reed-Solomon codes in \cite{CassHemo17}, and multi-sub-block LDPC codes (without spatial coupling) in \cite{RamCass18a}. Toward the analysis and design of SC-LDPC with sub-block access, we present in this paper a detailed characterization of a BP-based decoding mode called {\em semi-global (SG)} decoding, which allows decoding a target sub-block by accessing only the target and a few other sub-blocks around it in the code chain. As a result, SG decoding offers improved correction capabilities for the target sub-block, with access and complexity costs that are much lower compared to full global decoding of the entire block. The SG decoding mode was defined in \cite{RamCass18}, and some tools to evaluate its performance on SC-LDPCL codes are given in \cite{SCLDPCL_arXiv}. These initial results motivate the development of a theoretical methodology toward constructing SC-LDPCL codes with maximal SG-decoding performance, which we pursue in this paper. The key component in the proposed theoretical methodology is the characterization of certain thresholds for the {\em single sub-block code} that govern the decoding performance in the SG mode involving $d+1$ sub-blocks (the target and $d$ other sub-blocks called helpers). As a result, code analysis and construction are reduced to calculation and maximization of these single-sub-block thresholds, in comparison to the unwieldy analysis and search over large graphs encompassing all the sub-blocks participating in SG decoding. The main idea driving the new methodology is to break down the {\em density-evolution} analysis of the full SG decoder to density evolution on single sub-blocks, while formalizing and accounting for the information transfer between subsequent sub-blocks in the decoding process. For simplicity and clarity, we present the results assuming density evolution over the binary erasure channel (BEC), but the results can be extended to other channels using known extensions of the density-evolution method \cite{RichUrb,TenBrink04,LivaChiani07}.

The paper is organized as follows. 
In Section~\ref{Sec:Pre}, we give the necessary background needed for the results presented in the paper. This includes a general construction of SC-LDPCL codes based on binary-regular protographs, a definition of the SG decoding mode, and some preliminaries on Markov chains needed for the codes' analysis on channels with memory in Section~\ref{Sec:SBMV}.
In Section~\ref{Sec:SingleSB}, we study SG decoding from the perspective of a single sub-block (SB) decoded in the process. We define various {\em SB thresholds}, discuss their operational meanings, and show how to calculate them. The derived results in this section are used in later sections to analyze the full SG process. 
In Section~\ref{Sec:Thrsholds}, we define and characterize thresholds for SG decoding ({\em SG thresholds}). In particular, we focus on memoryless channels and consider a limiting case, where the number of accessed SBs is large. Our results connect between SG thresholds and SB thresholds. 
In Section~\ref{Sec:SBMV}, we study the performance of SG decoding over a practically motivated data-storage model in which variability is introduced to the channel quality (as motivated by recent empirical studies \cite{TarUchiSieg16,ShaAl20}). 
While \cite{SCLDPCL_arXiv} considered a SB varying i.i.d. model, in this paper we allow spatial memory in the SBs' channel parameters, for which we consider a model based on Markov chains. 

The useful outcome from the results of this paper is that by examining only the local structure of the single-SB graph, we can tell a lot about the code's performance under different decoding modes. The values of the SB thresholds frame which channel parameters can be handled with local decoding (target only), which require SG decoding and with how many helpers, which require SG decoding to start from a termination sub-block, and which may only be handled by the classical global full-block decoder. To demonstrate this ability, in Section~\ref{Sub:code} we examine all possible unit-memory SC-LDPCL codes (up to symmetries) constructed from two regular protographs, and find the most attractive choice in different scenarios of channel parameters, channel memory, and the number of helpers accessed by the SG decoder.

\section{Background}
\label{Sec:Pre}

\subsection{Notations}\label{Sub:Notations}
The set of natural numbers is denoted by $ \mathbb{N} $. We use calligraphic letters (e.g., $ \mathcal{V,C,E,S} $) or curly brackets to mark discrete sets. 

Scalars are commonly denoted by lowercase Latin or Greek letters, for example, $ i,j,k,l $ or $ \epsilon,\delta,\psi,\phi$. Random variables are denoted by uppercase letters (e.g, $ X,Y,Z $).
For vector notations we add an underline (e.g., $ \underline d,\udel $). For a $ k $-dimensional vector $ \underline d=(d_1,\ldots,d_k)$, we mark by $ \|\underline d \|_1 $ its $ l_1 $-norm, i.e., $ \|\underline d \|_1 = \sum_{i=1}^k |d_i|$.
For two $ k $-dimensional vectors $ \underline \delta=(\delta_1,\ldots,\delta_k)$ and $ \underline \nu=(\nu_1,\ldots,\nu_k) $ we write $ \underline \delta \preceq \underline \nu $ if for every $ i\in\{1,\ldots,k\} $, we have $ \delta_i \leq \nu_i $. If in addition,  $ \delta_i < \nu_i $ for some $ i\in\{1,\ldots,k\} $, then we write $ \underline \delta \prec \underline \nu $.

Matrices are denoted by uppercase letters (e.g., $ B,P,Q $). The element in the $ i $-th row and $ j $-th column of a matrix $ A $ is marked by square brackets: $ [A]_{i,j} $.
If all of the elements of a $ k\times l $ matrix $ A $ equal to some scalar $ a$, then we write $ A= a^{k\times l}$.
For two matrices $ A_1,A_2 $, $ (A_1;A_2 )$ and $ (A_1\;A_2) $ stand for vertical and horizontal concatenation, respectively, of the matrices $ A_1 $ and $ A_2 $.
The Kronecker product of matrices is denoted by $ \otimes $.

\subsection{SC-LDPC Codes with Sub-Block Locality}\label{Sub:SCLDPCL}
An LDPC protograph is a (small) bipartite graph $G=\left(\mathcal{V}\cup\mathcal{C},\mathcal{E}\right)$, where $\mathcal{V},\mathcal{C},$ and $\mathcal{E}$ are the sets of variable nodes (VNs), check nodes (CNs), and edges, respectively.
A protograph $\mathcal G=\left(\mathcal{V}\cup\mathcal{C},\mathcal{E}\right)$ is frequently represented through a bi-adjacency matrix $B$ (called protomatrix), where the VNs in $\mathcal{V}$ are indexed by the columns of $B$, the CNs in $\mathcal{C}$ by the rows, and an element in $B$ represents the number of edges connecting the corresponding VN and CN.
A Tanner graph is generated from a protograph by a lifting operation specified by some lifting parameter (see \cite{MitchLent15}). The design rate of the derived LDPC code is independent of the lifting parameter and given by $1-\big| \mathcal{C} \big|/\big| \mathcal{V} \big| $.
In the limit of large lifting parameters, we can analyze the performance of the BP decoder on the resulting ensemble of Tanner graphs via density evolution (DE) on the protograph. 
The BP decoding threshold of an LDPC protograph is defined as the largest (worst) channel parameter such that DE on the protograph converges to error-free distributions for the protograph's VNs. In this paper, we 
pursue such asymptotic protograph analysis for the binary erasure channel BEC($\epsilon$), but the same analysis can be extended to other channels, e.g., through the EXIT method \cite{TenBrink04}.
Specifically, we write $ \epsilon^*(B) $ for the asymptotic threshold of the protograph $ B $.

An $ (l,r) $-regular SC-LDPC protograph\footnote{the term regular refers to the protomatrix $ B $, while the resulting coupled graph is not regular due to termination.} is constructed as follows. Let $ B =1^{l\times r}$ be an all-ones base matrix representing an $(l,r)$-regular LDPC protograph, let $ T\geq 1 $ be an integer \emph{memory} parameter, and let $ \{B_\tau\}_{\tau=0}^T $ be binary matrices such that $ B=\sum_{\tau=0}^T B_\tau$ (in this paper we consider only binary $ B $ matrices). Coupling $ M >1$ copies of $ B $ amounts to diagonally placing $ M $ copies of $\begin{pmatrix} B_0 ; B_1 ; \cdots ; B_T \end{pmatrix}$ in the coupled matrix. As an example, for $T=1  $ the coupled protomatrix is given by
\begin{align}\label{Eq:SCProto}
&\left ({\arraycolsep=5pt\begin{array}{ccccccc}
B_0	&		&		&	&		&		&\\
B_1	&		&		&	&		&		&\\
&\ddots	&		&	&		&		&\\
&		&B_0	&	&		&		&\\
&		&B_1	&B_0&		&		&\\
&		&		&B_1&B_0	&		&\\
&		&		&	&B_1	&		&\\
&		&		&	&		&\ddots	&\\
&		&		&	&		&		&B_0\\
&		&		&	&		&		&B_1
\end{array}}\right ).\\[-6mm]\notag
&\;\;\;\;\;{\small\arraycolsep=2pt\def\arraystretch{2.2}\begin{array}{ccccccc}
\underbrace{\phantom{1}}_1&\underbrace{\phantom{1}}_{\cdots}&\underbrace{\phantom{1}}_{m\!-\!1}&\underbrace{\phantom{1}}_{m}&\underbrace{\phantom{1}}_{m\!+\!1}&\underbrace{\phantom{1}}_{\cdots}&\underbrace{\phantom{1}}_{M}
\end{array}}
\end{align}
Throughout this paper, we consider $ (l,r)$-regular SC-LDPC protographs with memory $ T=1$, i.e., $ B=1^{l\times r} $ and $ B_1 =1^{l\times r}-B_0  $. We call such codes \emph{unit-memory binary-regular SC-LDPC codes}. The results can be extended to higher-memory codes with some technical modifications.

To endow SC-LDPC codes with more flexible access, we divide the codeword to $ M $ sub-blocks (SBs), where each SB corresponds to one copy of $(B_0;\cdots;B_T)$  in the coupled matrix; the SB size is $ r $ times the lifting factor. We define an $ (l,r,t) $-regular SC-LDPC code with SB locality (in short SC-LDPCL) to be an $ (l,r) $-regular SC-LDPC protograph with a partitioning that is constrained such that $B_0  $ has $ l-t\geq 2 $ all-one rows and $ t $ mixed rows (i.e., with ones and zeros). The all-one and mixed rows correspond to \emph{local checks} (LC) and \emph{coupling checks} (CC), respectively (LCs are connected to VNs only inside SBs, and CCs connect between SBs). It was proved in \cite{SCLDPCL_arXiv} that this is a general description of an $ (l,r,t) $-regular SC-LDPCL protograph that allows local SB decoding.
The resulting protograph can be visualized as a chain of $ M>1 $ coupled SBs, where each SB is an $ (l-t,r) $-regular local code, and adjacent SBs are connected via $ t $ coupling checks with connections specified by the $ t $ mixed rows in $ B_0 $. 

Let $ B_\mathrm{loc} $ be the $ (l-t)\times r $ all-ones matrix that forms the \emph{local} part of $ B_0 $, and let $ B_\mathrm{left},B_\mathrm{right} $ be the $ t\times r $ matrices that form the \emph{coupling} part of $ B_0 ,B_1 $, respectively, i.e., $ B_0=\left (B_\mathrm{left}\;;\;B_\mathrm{loc}\right),\,B_1=\left (B_\mathrm{right}\;;\;0^{(l-t)\times r}\right )$. Then, \eqref{Eq:SCProto} can be re-written as 
\begin{align}\label{Eq:SCProto2}
\begin{pmatrix*}[l]
B_\mathrm{left}	&		&		&	&		&		&\\
B_\mathrm{loc}	&		&		&	&		&		&\\
B_\mathrm{right}	&		&		&	&		&		&\\
&\ddots	&		&	&		&		&\\
&		&B_\mathrm{left}	&	&		&		&\\
&		&B_\mathrm{loc}	&	&		&		&\\
&		&B_\mathrm{right}	&B_\mathrm{left}&		&		&\\
&		&		&B_\mathrm{loc}&		&		&\\
&		&		&B_\mathrm{right}&B_\mathrm{left}	&		&\\
&		&		&	&B_\mathrm{loc}	&		&\\
&		&		&	&B_\mathrm{right} 	&		&\\
&		&		&	&		&\ddots	&\\
&		&		&	&		&		&B_\mathrm{left}\\
&		&		&	&		&		&B_\mathrm{loc}\\
&		&		&	&		&		&B_\mathrm{right}
\end{pmatrix*}.
\end{align}
In view of \eqref{Eq:SCProto2}, $ B_\mathrm{left} $ and $ B_\mathrm{right} $ connect a SB to its neighbors on the left and right, respectively. 

In \cite{RamCass18}, a special case of SC-LDPCL protographs are introduced. In particular, a unit-memory $ (l,r,t) $-regular SC-LDPCL protograph with $ t\in\{1,2,\ldots,l-2\} $ is constructed by setting  $ B_\mathrm{left} $ to have a uniform staircase structure (uniform cutting-vector partition \cite{MitchDol14}). That is, if we set $w=\left \lfloor r/(t+1) \right \rfloor  $, then
\begin{align}\label{Eq:lrtConst}
\left [B_\mathrm{left}\right ]_{i,j}=
\left \{
\begin{array}{ll}
1& 1\leq j \leq iw\\
0& \text{otherwise}
\end{array}
\right .\qquad
\left [B_\mathrm{right}\right ]_{i,j}=
\left \{
\begin{array}{ll}
1& iw+1\leq j \leq r\\
0& \text{otherwise}
\end{array}
\right .\;.
\end{align}

\begin{example}\label{Ex:361_SCLDPCL}
	For $ l=3,r=6,t=1 $, the construction in \cite{RamCass18} yields
	\begin{align}\label{Eq:361}
	\left (
	\begin{array}{c}
	\phantom{
		\begin{array}{c}
		11
		\end{array}
	}
	B_\mathrm{left} 
	\phantom{
		\begin{array}{c}
		11
		\end{array}
	}
	\\
	\hline
	\phantom{
		\begin{array}{c}
		1\\1
		\end{array}
	}
	B_\mathrm{loc} 
	\phantom{
		\begin{array}{c}
		1\\1
		\end{array}
	} \\
	\hline
	B_\mathrm{right}
	\end{array}\right )=
	\left ({\arraycolsep=3pt\begin{array}{cccccc}
	1&1&1&0&0&0\\
	\hline
	1&1&1&1&1&1\\
	1&1&1&1&1&1\\
	\hline
	0&0&0&1&1&1
	\end{array}}\right ).
	\end{align}
	Figure~\ref{Fig:361} illustrates a single SB in the $ (l=3,r=6,t=1) $ SC-LDPCL protograph from \cite{RamCass18}. This example will serve as a running example throughout the paper.
\end{example}

\begin{example}\label{Ex:l4r16Const}
	For $ l=4,r=16,t=2 $, the following partition induces a different SC-LDPCL protograph (not covered by the construction of \cite{RamCass18})
	\begin{align*}
	\left (
	\begin{array}{c}
	\phantom{
		\begin{array}{cc}
		11111&11111\\
		1&1
		\end{array}
	}
	B_\mathrm{left} 
	\phantom{
		\begin{array}{cc}
		1111&11111
		\end{array}
	}
	\\
	\hline
	\phantom{
		\begin{array}{c}
		1\\1
		\end{array}
	}
	B_\mathrm{loc} 
	\phantom{
		\begin{array}{c}
		1\\1
		\end{array}
	} \\
	\hline
	\phantom{
		\begin{array}{cc}
		1\\	1
		\end{array}
	}
	B_\mathrm{right}
	\phantom{
		\begin{array}{cc}
		1\\	1
		\end{array}
	}
	\end{array}\right )=
	\left ({\arraycolsep=3pt\begin{array}{cccccccccccccccc}
	1&1&1&1&1&1&1&1&0&0&0&0&0&0&1&1\\
	0&0&0&1&1&1&1&1&1&1&1&1&0&1&0&1\\
	\hline
	1&1&1&1&1&1&1&1&1&1&1&1&1&1&1&1\\
	1&1&1&1&1&1&1&1&1&1&1&1&1&1&1&1\\
	\hline
	0&0&0&0&0&0&0&0&1&1&1&1&1&1&0&0\\
	1&1&1&0&0&0&0&0&0&0&0&0&1&0&1&0
	\end{array}}\right ).
	\end{align*}
\end{example}

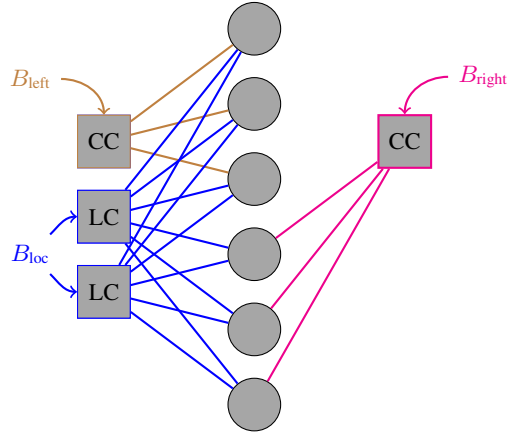
\begin{figure}
	\begin{center}
	\begin{tikzpicture}\label{Tikz:36 SCLDPCL}
	\tikzstyle{cnode}=[rectangle,draw,fill=gray!70!white,minimum size=7mm]
	\tikzstyle{vnode}=[circle,draw,fill=gray!70!white,minimum size=7mm]
	\tikzstyle{edge}=[thick,blue]
	\pgfmathsetmacro{\x}{4}
	\pgfmathsetmacro{\y}{1}
	
	\foreach \m in {2}
	{
		
		\foreach \c in {1,2,3}
		{
			\node[cnode,draw=blue] (c\c\m) at (\m*\x-\x,\c*\y-\y) {};	
		}
		\foreach \v in {1,2,3,4,5,6}
		{
			\node[vnode] (v\v\m) at (\m*\x-0.5*\x,\v*\y-2.5*\y) {};	
		}	
	}
	\node[cnode,draw=blue] (c22) at (c22) {\footnotesize LC};
	\node[cnode,draw=blue] (c12) at (c12) {\footnotesize LC};
	\node[cnode,draw=brown] (c32) at (c32) {\textcolor{black}{\footnotesize CC}};
	\node[cnode,thick,draw=magenta] (c33) at (3*\x-\x,3*\y-\y) {\textcolor{black}{\footnotesize CC}};	
	
	\node (Bl) [above left=3mm of c32,align=center,brown] {\footnotesize $B_{\text{left}}$}; 
	\node (Br) [above right=3mm of c33,align=center,magenta] {\footnotesize $B_{\text{right}}$};
	\node (x) at ($(c22)!0.5!(c12)$){};
	\node (Bloc) at (Bl|-x)[align=center,blue] {\footnotesize $B_{\text{loc}}$}; 
	\draw[->,thick,brown] (Bl)to[in= 90, out = 0] (c32.north);
	\draw[->,thick,magenta] (Br)to[in= 90, out = 180] (c33.north);
	\draw[->,edge] (Bloc)to[in= 180, out = 45] (c22);
	\draw[->,edge] (Bloc)to[in= 180, out = 315] (c12);
	\foreach \m in {2}
	{	
		\pgfmathtruncatemacro{\k}{\m + 1}
		\draw[edge] (v1\m)--(c1\m) ;
		\draw[edge] (v1\m)--(c2\m) ;
		\draw[thick,magenta] (v1\m)--(c3\k) ;	
		\draw[edge] (v2\m)--(c1\m) ;
		\draw[edge] (v2\m)--(c2\m) ;
		\draw[thick,magenta] (v2\m)--(c3\k) ;
		
		\draw[edge] (v3\m)--(c1\m) ;
		\draw[edge] (v3\m)--(c2\m) ;
		\draw[thick,magenta] (v3\m)--(c3\k) ;
		\draw[edge] (v4\m)--(c1\m) ;
		\draw[edge] (v4\m)--(c2\m) ;
		\draw[thick,brown] (v4\m)--(c3\m) ;
		
		\draw[edge] (v5\m)--(c1\m) ;
		\draw[edge] (v5\m)--(c2\m) ;
		\draw[thick,brown] (v5\m)--(c3\m) ;
		\draw[edge] (v6\m)--(c1\m) ;
		\draw[edge] (v6\m)--(c2\m) ;
		\draw[thick,brown] (v6\m)--(c3\m) ;			
	}	
	\end{tikzpicture}
	\caption{\label{Fig:361}Illustration of the $ (l=3,r=6,t=1) $ SC-LDPCL SB protograph in \eqref{Eq:361}. The connections corresponding to $B_{\text{left}}  $,$B_{\text{loc}}  $, and $B_{\text{right}}  $, are drawn in brown, blue, and magenta, respectively.}		
	\end{center}
	
\end{figure}

\subsection{Semi-Global Decoding}

SC-LDPCL codes can be decoded locally for fast read access (as done in \cite{RamCass18a} for non-SC LDPC codes) and globally for increased data reliability. 
Semi-global (SG) decoding \cite{RamCass18,RamCass19} is another decoding mode for SC-LDPCL codes that offers a middle way between local and global decoding. 

\subsubsection{Decoding-Mode Description}

In SG decoding, the user is interested in SB $ m\in\{1,\ldots,M\}$, which is called the \emph{target} SB, and the decoder decodes it with the help of additional $d$ neighbor SBs called \emph{helper} SBs. $d$ is a parameter that bounds the number of additional SBs read for decoding one SB; hence, the smaller $d$ is, the faster access the code offers for single SBs.
In semi-global decoding there are two phases: the \emph{helper phase}, and the \emph{target phase}. In the former, helper SBs are decoded locally, incorporating information from other previously decoded helper SBs. In the latter, the target SB is decoded while incorporating information from its neighboring helper SBs. 
Figure~\ref{Fig:SG decoding} exemplifies semi-global decoding with $ d=4 $ helper SBs. In this example, the helper phase consists of decoding helper SBs $ m - 2 $ and $ m + 2 $ locally, and decoding helper SBs $ m - 1 $ and $ m + 1 $ using the information from helper SBs $ m - 2 $ and $ m + 2 $, respectively. In the target phase, SB $ m $ is decoded using information from both SB $ m - 1$ and $ m + 1$. 
\begin{figure}
	\begin{center}
		\begin{tikzpicture}[>=latex]\label{Tikz:SG decoding}
		
			\tikzstyle{SB}=[rectangle,very thick,draw, rounded corners, minimum width=1.1cm,minimum height=0.5cm,fill=white]
			\pgfmathsetmacro{\x}{1.1}
			\pgfmathsetmacro{\y}{1.2}
			\node (SB0) [SB] at (0*\x,0) {\footnotesize $m$};
			\node (SB1) [SB] at (1*\x,0) {\footnotesize $m + 1$};
			\node (SB2) [SB,fill=gray!30!white] at (2*\x,0) {\footnotesize $m + 2$};
			\node (SB-1) [SB] at (-1*\x,0) {\footnotesize $m - 1$};
			\node (SB-2) [SB,fill=gray!30!white] at (-2*\x,0) {\footnotesize $m - 2$};
			\node (rdots1) [right=0.3mm of SB2] {\footnotesize $ \dots $};
			\node (ldots) [left =0.3mm of SB-2] {\footnotesize step 1 $ \dots $};

			\node (SB0) [SB] at (0*\x,-\y) {\footnotesize$m$};
			\node (SB1) [SB,fill=gray!30!white] at (1*\x,-\y) {\footnotesize$m + 1$};
			\draw [->] (SB2)--(SB1) ;
			\node (SB2) [SB] at (2*\x,-\y) {\footnotesize$m + 2$};
			\node (SB-1) [SB,fill=gray!30!white] at (-1*\x,-\y) {\footnotesize$m - 1$};
			\draw [->] (SB-2)--(SB-1) ;
			\node (SB-2) [SB] at (-2*\x,-\y) {\footnotesize$m - 2$};
			\node (rdots2) [right=0.3mm of SB2] {\footnotesize$ \dots $};
			\node (ldots) [left =0.3mm of SB-2] {\footnotesize step 2 $ \dots $};
			\draw [thick,decorate,decoration={brace,amplitude=5pt,raise=6pt}] (rdots1.north)--(rdots2.south) node [black,midway,xshift=9mm,text width=1cm,align=center] {\footnotesize helper phase};
			
			\node (SB0) [SB,fill=gray!30!white] at (0*\x,-2*\y) {\footnotesize$m$};
			\draw [->] (SB1)--(SB0) ;
			\draw [->] (SB-1)--(SB0) ;
			\node (SB1) [SB] at (1*\x,-2*\y) {\footnotesize$m + 1$};
			
			\node (SB2) [SB] at (2*\x,-2*\y) {\footnotesize$m + 2$};
			\node (SB-1) [SB] at (-1*\x,-2*\y) {\footnotesize$m - 1$};
			
			\node (SB-2) [SB] at (-2*\x,-2*\y) {\footnotesize$m - 2$};
			\node (rdots3) [right=0.3mm of SB2] {\footnotesize$ \dots $};
			\node (ldots) [left =0.3mm of SB-2] {\footnotesize step 3 $ \dots $};
			\draw [thick,decorate,decoration={brace,amplitude=3pt,raise=33pt}] (SB2.north)--(SB2.south) node [black,midway,xshift=17mm,text width=1cm,align=center] {\footnotesize target phase};
			
		\end{tikzpicture}
	\end{center}
	\caption{\label{Fig:SG decoding}Example of semi-global decoding with target SB $ m\in[1:M] $, and $ d=4 $; the steps are shown from top to bottom. The gray SBs are those that are decoded in a given step, and the arrows represent information passed between SBs.}
\end{figure}
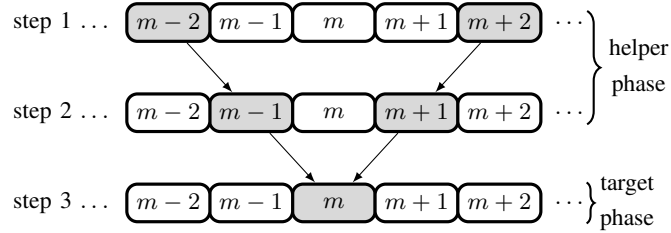
Note that semi-global decoding resembles window decoding of SC-LDPC codes (see \cite{IyenPapa12,IyenSiegel13,Lentmaier11}) but differs in: 1) for a given target, there is no overlap between two window positions, which decreases latency and complexity, and 2) decoding can start close to the target SB (i.e., not necessarily at the first or last SBs), allowing low-latency access to SBs anywhere in the block. 

Semi-global decoding is highly motivated by the locality property of SBs in SC-LDPCL codes (SBs can be decoded locally), the spatial coupling of SBs (SBs can help their neighbor SBs), and by practical channels in storage devices, i.e., channels with variability \cite{TarUchiSieg16}. 

\subsubsection{Semi-Global Density-Evolution}\label{Sub:SG_DE}
When analyzing semi-global decoding in the asymptotic regime (i.e., performing DE), we can exploit the sequential nature of the algorithm to reduce the analysis over $d$ SBs to analysis of individual SBs with information carried between adjacent SBs. This information is represented in the analysis through fixed DE values input to the SB's DE calculation. 
We develop such a method in this paper, where for simplicity the DE is performed on the BEC such that DE values are simply erasure probabilities \cite{RichUrb}.

For a left (resp. right) helper SBs, the DE values from the previously decoded helper are directed to the $ t $ CCs corresponding to the $ t $ rows of $ \Bleft $ (resp. $ \Bright $) and remain fixed during the SB's DE iterations. We call these CCs, \emph{incoming} CCs and we denote by $ \underline \delta_I \in[0,1]^t$ the incoming DE values; $ \udi $ affects the DE equations for incoming CCs. 
Note that for termination helper SBs, i.e., the endpoint SBs in the coupled chain, we have $  \underline \delta_I=\underline 0  $.
When the SB's DE iterations stop (i.e., reach a DE fixed-point or any other stopping criteria), the \emph{outgoing} DE values, which are denoted by $ \udo\in[0,1]^t $ are calculated and sent to the next SB via the $ t $ \emph{outgoing} CCs corresponding to the $ t $ rows of $ \Bright $, for left helper SBs, or $ \Bright $, for right helper SBs. These outgoing DE values turn into incoming DE values for the next helper SB, which runs a similar process and outputs DE values to the next SB, etc. We call the DE values passing between SBs \emph{inter-SB DE values}.

Similarly, the target SB receives incoming DE values from both of its neighbor helper SBs, and these DE values are kept fixed during the target's DE iterations. We denote the incoming DE values to the target from the left (resp. right) neighbor by $ \underline \delta_L  \in[0,1]^t$ (resp. $ \underline \delta_R\in[0,1]^t $); $ \udel_L $ (resp. $ \udel_R $) is directed to the $ t $ left (resp. right) CCs corresponding to the $ t $ rows of $ \Bleft $ (resp. $ \Bright $). Same as for the helper SB, DE iterations at the target are executed until some stopping criteria is met. If there is some non-trivial DE fixed point (i.e., not at the origin), then decoding will fail; we call such a fixed-point \emph{SG-DE fixed-point}.

\begin{example}\label{Ex:SGGraph}
Figure~\ref{Fig:SGgraph} illustrates the above notations for the SB in Example~\ref{Ex:361_SCLDPCL}. Note that since $ t=1 $, then $ \delta_I,\delta_O,\delta_R,\delta_L $ are scalars.
\end{example}   

\begin{figure}
	\begin{center}
		\begin{tikzpicture}\label{Tikz:SG}
		\tikzstyle{cnode}=[rectangle,draw,fill=gray!70!white,minimum size=7mm]
		\tikzstyle{vnode}=[circle,draw,fill=gray!70!white,minimum size=7mm]
		\tikzstyle{edge}=[thick]
		\pgfmathsetmacro{\x}{4}
		\pgfmathsetmacro{\y}{1}
		
		\foreach \m in {2}
		{
			
			\foreach \c in {1,2,3}
			{
				\node[cnode] (c\c\m) at (\m*\x-\x,\c*\y-\y) {};	
			}
			\foreach \v in {1,2,3,4,5,6}
			{
				\node[vnode] (v\v\m) at (\m*\x-0.5*\x,\v*\y-2.5*\y) {};	
			}	
		}
		\node[cnode] (c22) at (c22) {\footnotesize LC};
		\node[cnode] (c12) at (c12) {\footnotesize LC};
		\node[cnode] (c32) at (c32) {\footnotesize CC};
		\node[cnode] (c33) at (3*\x-\x,3*\y-\y) {\footnotesize CC};	
		\node(dl) [left of=c32] {$ \delta_L $};
		\node(dr) [right of =c33] {$ \delta_R $};
		\draw [edge,->,>=latex] (dl)--(c32);\draw [edge,->,>=latex] (dr)--(c33);
		\foreach \m in {2}
		{	
			\pgfmathtruncatemacro{\k}{\m + 1}
			\draw[edge] (v1\m)--(c1\m) ;
			\draw[edge] (v1\m)--(c2\m) ;
			\draw[edge] (v1\m)--(c3\k) ;	
			\draw[edge] (v2\m)--(c1\m) ;
			\draw[edge] (v2\m)--(c2\m) ;
			\draw[edge] (v2\m)--(c3\k) ;
			
			\draw[edge] (v3\m)--(c1\m) ;
			\draw[edge] (v3\m)--(c2\m) ;
			\draw[edge] (v3\m)--(c3\k) ;
			\draw[edge] (v4\m)--(c1\m) ;
			\draw[edge] (v4\m)--(c2\m) ;
			\draw[edge] (v4\m)--(c3\m) ;
			
			\draw[edge] (v5\m)--(c1\m) ;
			\draw[edge] (v5\m)--(c2\m) ;
			\draw[edge] (v5\m)--(c3\m) ;
			\draw[edge] (v6\m)--(c1\m) ;
			\draw[edge] (v6\m)--(c2\m) ;
			\draw[edge] (v6\m)--(c3\m) ;			
		}
		
		\node (A) [below = 5mm of v12] {(a): Target};
		
		\begin{scope}[xshift=2*\x cm]
			\foreach \m in {2}
		{
			
			\foreach \c in {1,2,3}
			{
				\node[cnode] (c\c\m) at (\m*\x-\x,\c*\y-\y) {};	
			}
			\foreach \v in {1,2,3,4,5,6}
			{
				\node[vnode] (v\v\m) at (\m*\x-0.5*\x,\v*\y-2.5*\y) {};	
			}	
		}
		\node[cnode] (c22) at (c22) {\footnotesize LC};
		\node[cnode] (c12) at (c12) {\footnotesize LC};
		\node[cnode] (c32) at (c32) {\footnotesize CC};
		\node[cnode] (c33) at (3*\x-\x,3*\y-\y) {\footnotesize CC};	
		\node (in) [above =6mm of c32,text width=35pt,align=center] {\footnotesize incoming CC};
		\node (ix) at ($(in)!0.65!(c32)$) {};
		\draw [->] (in)--(ix);
		\node (out) [above =6mm of c33,text width=35pt,align=center] {\footnotesize outgoing CC};
		\node (ox) at ($(out)!0.65!(c33)$) {};
		\draw [->] (out)--(ox);
		\node(dl) [left of=c32] {$ \delta_I $};
		\node(dr) [right of =c33] {$ \delta_O $};
		\draw [edge,->,>=latex] (dl)--(c32);\draw [edge,<-,>=latex] (dr)--(c33);
		
		\foreach \m in {2}
		{	
			\pgfmathtruncatemacro{\k}{\m + 1}
			\draw[edge] (v1\m)--(c1\m) ;
			\draw[edge] (v1\m)--(c2\m) ;
			\draw[edge,dashed] (v1\m)--(c3\k) ;	
			\draw[edge] (v2\m)--(c1\m) ;
			\draw[edge] (v2\m)--(c2\m) ;
			\draw[edge,dashed] (v2\m)--(c3\k) ;
			
			\draw[edge] (v3\m)--(c1\m) ;
			\draw[edge] (v3\m)--(c2\m) ;
			\draw[edge,dashed] (v3\m)--(c3\k) ;
			\draw[edge] (v4\m)--(c1\m) ;
			\draw[edge] (v4\m)--(c2\m) ;
			\draw[edge] (v4\m)--(c3\m) ;
			
			\draw[edge] (v5\m)--(c1\m) ;
			\draw[edge] (v5\m)--(c2\m) ;
			\draw[edge] (v5\m)--(c3\m) ;
			\draw[edge] (v6\m)--(c1\m) ;
			\draw[edge] (v6\m)--(c2\m) ;
			\draw[edge] (v6\m)--(c3\m) ;			
		}
		\end{scope}
		\node (B) at(A-|v62) {(b): Helper};

		\end{tikzpicture}
		\caption{\label{Fig:SGgraph}The graphs corresponding to the target (a) and helper (b) SBs during semi-global decoding for the SB in Example~\ref{Ex:361_SCLDPCL}. Dashed edges do not participate during the SB decoding, except in sending DE values $\delta_O$ at the end.}		
	\end{center}
	
\end{figure}
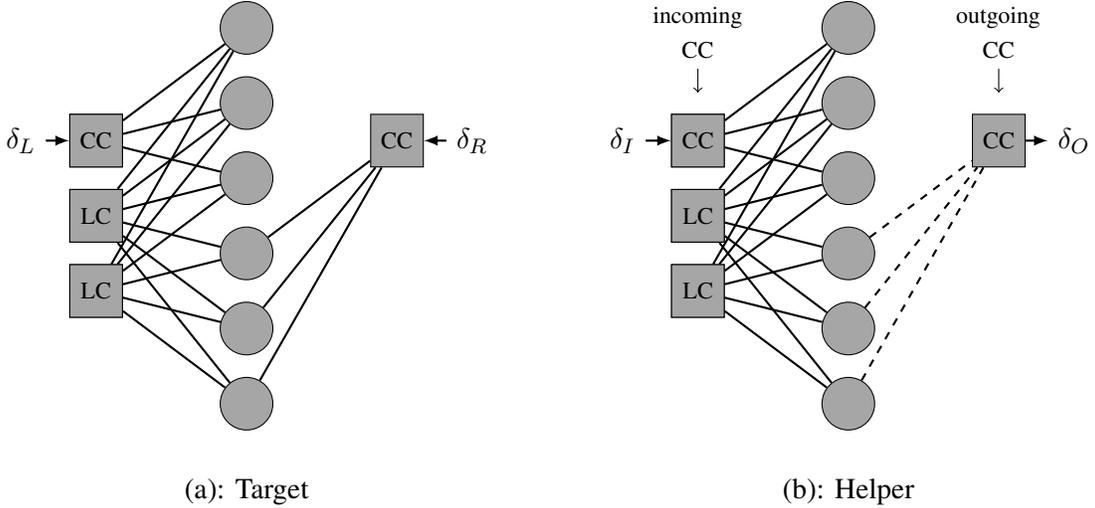
In \cite{RamCass19,SCLDPCL_arXiv}, DE equations for SG decoding of cutting-vector-based SC-LDPCL codes are derived. In this paper, we extend the results to general constructions, and give a detailed characterization of the decoder's performance (i.e., thresholds). 

\subsection{Markov Chains}

In Section~\ref{Sec:SBMV}, we study the performance of semi-global decoding over a channel model which is based on Markov chains. We now give a brief introduction to the basic notations and properties of Markov chains from \cite{Ross}. 
We will consider discrete Markov chains, i.e., the time index and state space are discrete. In our context, the time index is a SB index, and the state space represents the channel quality.
Let $ \mathcal S=\{s_1,s_2,\ldots,s_{|\mathcal S|}\} $ be the state space. The random process $ \{X_k\}_{k=1}^\infty $ is said to be a Markov chain if 
\begin{align*}
\Pr\left (X_k=s_j\big| X_{k-1}=s_{i},X_{k-2}=s_{i_{k-2}},\ldots ,X_{1}=s_{i_1}\right )= \Pr\left (X_k=s_j\big| X_{k-1}=s_{i}\right ),\quad \forall 2\leq k\;, \forall 1\leq i,j\leq |\mathcal{S}|.
\end{align*}
For every $ i,j\in\{1,2,\ldots,|\mathcal S|\} $, we write $ \Pr\left (X_k=s_j\big| X_{k-1}=s_{i}\right )=P_{i,j} $, where $ P $ is a $ |\mathcal{S}|\times |\mathcal S| $ transition-probability matrix, i.e., a matrix with non-negative entries and rows summing-up to $ 1 $.

The Chapman-Kolmogorov equations state that for every $ i,j\in\{1,2,\ldots,|\mathcal S|\} $ and $ k,d\in\mathbb N $,
\begin{align}\label{Eq:CahpKol}
\Pr \left ( X_{k+d}=s_j|X_{k}=s_i\right ) =\left [P^d\right ]_{i,j} ,
\end{align}
where $ P^d $ is the $ d $-th power of $ P $.

A distribution $ \underline \nu=\left (\nu_1,\nu_2,\ldots,\nu_{|\mathcal S|}\right ) $ such that for every $ j\in\{1,2\ldots,
|\mathcal S|\} $, $ \nu_j=\Pr(X_1=s_j) $, is a \emph{stationary distribution} if for every $ k\in \mathbb N $ and every $ j\in\{1,2,\ldots,
|\mathcal S|\} $, $ \nu_j=\Pr(X_k=s_j) $. This happens if and only if $ \underline \nu $ is a left eigenvector of $ P $ with eigenvalue $ 1 $. 

Let $\{ X_k\}_{k=1}^\infty $ be a Markov chain with a transition matrix $ P $, and let $ \underline \nu $ be a stationary distribution for $\{ X_k\}_{k=1}^\infty $. Let $ m $ be a large integer. If we consider the reverse process $ Y_k = X_{m-k} $, then $ Y_k $ is a Markov chain with a transition matrix $ \hat P $ given by
\begin{align}\label{Eq:ReverseMarkov}
\hat P_{i,j} = \frac{\nu_j}{\nu_i}P_{j,i},\quad \forall 1\leq i,j\leq |\mathcal S|,\, 
\end{align}
and with the same stationary distribution $ \underline \nu$.
Finally, let $ Z_k=\left (X_{m+k},X_{m-k}\right ) $ be a two-dimensional process taking values in $ \mathcal{S}\times \mathcal S $. Then, $ Z_k $ is a Markov chain with a transition matrix $ P\otimes\hat P $, where $ \otimes  $ stands for the Kronecker product:
\begin{align*}
\Pr (&Z_k=\left (s_{j_1},s_{j_2}\right ) \big| Z_{k-1}=\left (s_{i_1},s_{i_2}\right ),Z_{k-2}=\left (s_{i_3},s_{i_4}\right ),\ldots)
\\
=&\Pr\left (X_{m+k}=s_{j_1},X_{m-k}=s_{j_2} \big| X_{m+k-1}=s_{i_1},X_{m-k+1}=s_{i_2},X_{m+k-2}=s_{i_3},X_{m-k+2}=s_{i_4}\ldots\right )\\
=&\Pr\left (X_{m+k}=s_{j_1} \big| X_{m+k-1}=s_{i_1},X_{m-k+1}=s_{i_2},X_{m+k-2}=s_{i_3},X_{m-k+2}=s_{i_4}\ldots\right )\\
\cdot&\Pr\left (X_{m-k}=s_{j_2} \big| X_{m+k}=s_{j_1},X_{m+k-1}=s_{i_1},X_{m-k+1}=s_{i_2},X_{m+k-2}=s_{i_3},X_{m-k+2}=s_{i_4}\ldots\right )\\
=&\Pr\left (X_{m+k}=s_{j_1} \big| X_{m+k-1}=s_{i_1}\right )\cdot
\Pr\left (X_{m-k}=s_{j_2} \big| X_{m-k+1}=s_{i_2}\right )\\
=&P_{i_1,j_1}\cdot \hat P_{i_2,j_2}.
\end{align*}

\section{Single-Sub-Block Analysis}\label{Sec:SingleSB}
In this section, we study the semi-global decoding procedure from the perspective of a single SB decoded in the process (with inputs from adjacent SBs) and derive results that will be used in later sections when we characterize and analyze the full process of semi-global decoding. 

\subsection{Degree Profile}

\begin{definition}\label{Def:DegProfile}
	Let $ a,b\in\mathbb{N} $ such that $ a<b $, and let $ A\in\{0,1\}^{a\times b} $ be a protomatrix. The \emph{degree profile} of $ A $ consists of two vectors $ \udc(A)\in \mathbb{N}^{a},\udv(A)\in \mathbb{N}^{b} $ given by
		\begin{align*}
		&\left [\udc(A)\right ]_i = \sum_{j=1}^b A_{i,j},\quad 1\leq i \leq a,\\
		&\left [\udv(A)\right ]_j = \sum_{i=1}^a A_{i,j},\quad 1\leq j \leq b.
		\end{align*}	
	
\end{definition}
Note that not every pair of vectors $ \udc\in \mathbb N^{a},\udv\in \mathbb N^{b} $ is a \emph{realizable} degree profile of some protomatrix (i.e., a protomatrix with  $ \udc\in \mathbb N^{a}$ and $\udv\in \mathbb N^{b} $ as its degree profile exists). For every protomatrix $ A $, we must have $ \|\udc(A)\|_1 = \sum_{i=1}^a\sum_{j=1}^bA_{i,j} = \|\udv(A)\|_1 $. If we allow non-binary protomatrices, then this condition suffices for a realizable degree profile. For binary protomatrices, this condition is \emph{not} sufficient. For an example of a \emph{non-realizable} degree profile, consider $ a=3,b=4 $ and $  \udc = (4,4,1),\udv = (1,2,3,3) $. Since the first two rows are full, then the VN degrees should be all at least $ 2 $. 

Consider a single SB $ \left (B_{\text{left}} ;B_{\text{loc}} ;B_{\text{right}}\right ) $ in the coupled protograph. Since by construction $ B_\mathrm{left}+B_\mathrm{right} = {1}^{t\times r}, B_\mathrm{loc}=  {1}^{(l-t)\times r}$, then 
\begin{subequations}
\begin{align}
\label{Eq:dBU=dBD1}
&\udc(B_\mathrm{left})+\udc(B_\mathrm{right}) = r^{1\times t},\\
\label{Eq:dBU=dBD11}
&\udc(B_\mathrm{loc})= r^{1\times (l-t)},\\
\label{Eq:dBU=dBD2}
&\udv(B_\mathrm{left})+\udv(B_\mathrm{right}) = t^{1\times r}\\
\label{Eq:dBU=dBD3}
&\udv(B_\mathrm{loc})=(l-t)^{1\times r}.
\end{align}
\end{subequations}

\subsection{Erasure Transfer in Binary-Regular Spatially Coupled Protographs}

We now move to the characterization and analysis of a helper SB during semi-global decoding (see Figure~\ref{Fig:SGgraph}(b) for an example). Our derivations assume that the underlying base (uncoupled) protograph is binary and regular (see Section~\ref{Sub:SCLDPCL}). To analyze the performance of SG decoders in the asymptotic regime, we now develop a DE framework that captures the transfer of erasures between adjacent SBs through the vectors $\udi,\udo,\udel_R,\udel_L$. This analysis is an extension of \cite{SCLDPCL_arXiv,RamCass19} which focused on cutting-vector constructions.
Decoding of sub-graphs with incoming and outgoing DE values was considered in \cite{TruMitch19}, for the purpose of inter-connecting sub-chains of SC-LDPC codes. In this work, smaller units (SBs) are inter-connected, and the performance objective is ability to decode a single target SB with an efficient SG decoder.

The propositions in this and the next sub-sections will be used in Sections~\ref{Sec:Thrsholds} and~\ref{Sec:SBMV} to prove results concerning the performance of the semi-global decoder on general $(l,r,t)$ SC-LDPCL codes constructed from a binary base protomatrix $ B $. 
In what follows, we distinguish between helper SBs that have lower SB indices than the target SB (e.g., SBs $ m-2,m-1 $ in Figure~\ref{Fig:SG decoding}) and helper SBs that have higher SB indices (SBs $ m+1,m+2 $ in Figure~\ref{Fig:SG decoding}). We call the former \emph{left helper SBs} and the latter \emph{right helper SBs}.

\begin{definition}\label{Def:Delta}
	Consider a helper SB during semi-global decoding.
	 Let $\epsilon\in[0,1]$ be a SB erasure rate, and let $ \udi\in[0,1]^t $ be the \emph{incoming} DE values from a previously decoded helper.
	The \emph{erasure-transfer function} calculates the DE values outgoing towards the next SB (i.e., $ \udel_O \in[0,1]^t$) given $ \epsilon $ and $ \udel_I $.
	We denote by $ \Delta_L\left (\epsilon,\udi\right ) $ and $ \Delta_R\left (\epsilon,\udi\right )  $ the erasure-transfer functions corresponding to right and left helper SBs, respectively (see Figure~\ref{Fig:Delta}).
\end{definition}
Note that $ \Delta_L\left (\cdot,\cdot\right ) $ and $ \Delta_R\left (\cdot,\cdot\right )  $ depend on the connectivity imposed by $ \left (B_\mathrm{left}\;;\;B_\mathrm{loc}\;;\;B_\mathrm{right}\right)$. In particular, for a left (respectively right) helper SB, the incoming and outgoing CCs are represented by the rows of $B_\mathrm{left}  $ and $B_\mathrm{right}  $ (respectively $B_\mathrm{right}  $ and $B_\mathrm{left}  $), respectively. 

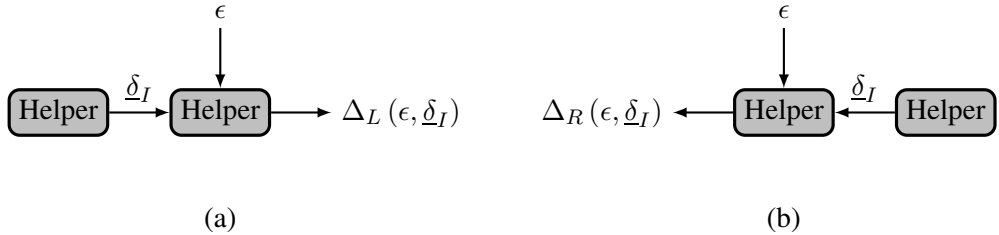
\begin{figure}[!h]
	\begin{center}
		\begin{tikzpicture}
		\tikzstyle{SB}=[fill=gray!50!white,rectangle,very thick,draw, rounded corners, minimum width=1.1cm,minimum height=0.5cm]
		\pgfmathsetmacro{\x}{8}
		
		\node(SBh)[SB]{Helper};
		\node(SB)[SB,right=\x mm of SBh]{Helper};
		\node (eps) [above=\x mm of SB] {$ \epsilon $};
		\node (a) [below=\x mm of SB] {(a)};
		
		\node (dout) [right=\x mm of SB] {$ \Delta_L\left (\epsilon,\udi\right )$};
		\draw [thick,->,>=latex] (SBh)--(SB) node [pos=0.5,above] {$\udi$};
		\draw [thick,->,>=latex] (eps)--(SB);\draw [thick,->,>=latex] (SB)--(dout);
		
		\node (dout) [right=\x mm of dout] {$ \Delta_R\left (\epsilon,\udi\right )$};
		\node(SB)[SB,right=\x mm of dout]{Helper};
		\node(SBh)[SB,right=\x mm of SB]{Helper};
		
		\node (eps) [above=\x mm of SB] {$ \epsilon $};
		\node (b) at (a-|SB) {(b)};

		\draw [thick,->,>=latex] (SBh)--(SB) node [pos=0.5,above] {$\udi$};
		\draw [thick,->,>=latex] (eps)--(SB);\draw [thick,->,>=latex] (SB)--(dout);
		\end{tikzpicture}
	\end{center}
	\caption{\label{Fig:Delta}Illustration of the operational meaning of $ \Delta_L\left (\epsilon,\udi\right ) $ and $ \Delta_R\left (\epsilon,\udi\right ) $: (a) a left helper SB; (b) a right helper SB.}
\end{figure}

Let $ \left (B_\mathrm{left};B_\mathrm{loc};B_\mathrm{right}\right )\in \{0,1\}^{(t+l)\times r} $ be a protograph of a left helper SB in an SC-LDPCL protograph (the right-helper case follows similarly). Let $ \sigma_1,\ldots,\sigma_r $ be the bit-erasure rates at the VNs of the SB when the helper-DE process stops (with some stopping criteria). Then, the outgoing DE values from this SB (to the next helper) $ \udel_O = \left (\delta_{O,1},\ldots,\delta_{O,t}\right ) $ are given by 
\begin{align}\label{Eq:udelO}
\delta_{O,i} = 1-\prod_{j=1}^r \left (1-\sigma_j\right )^{\left [B_\mathrm{right}\right ]_{i,j}},\quad \forall 1\leq i \leq t.
\end{align}
For right helper SBs, the above holds with the substitution of $ B_\mathrm{right} $ with $ B_\mathrm{left} $. Note that the values $ \sigma_1,\ldots,\sigma_r $ depend on  the channel parameter and the incoming DE values to the helper.

\begin{proposition}\label{Prop:del>0Gen}
	Consider a helper SB in a unit-memory binary-regular SC-LDPCL protograph. Then, the bit-erasure rates of all of the VNs in the SB are zero if and only if  $ \udel_O = \underline 0$. 
\end{proposition}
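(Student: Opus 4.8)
The plan is to prove the two directions of the equivalence separately, working with the erasure-transfer function $\DelfL{\epsilon}$ (the left-helper case; the right-helper case is symmetric). Recall from \eqref{Eq:udelO} that the outgoing value $\delta_{O,i}$ is $1-\prod_{j=1}^r(1-\sigma_j)^{[\Bright]_{i,j}}$, where $\sigma_1,\dots,\sigma_r$ are the bit-erasure rates at the SB's VNs when the helper-DE process stops. The easy direction is ($\Rightarrow$): if every $\sigma_j=0$, then each factor $(1-\sigma_j)^{[\Bright]_{i,j}}=1$, so $\delta_{O,i}=0$ for all $i$, hence $\udel_O=\underline 0$.

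For the ($\Leftarrow$) direction, suppose $\udel_O=\underline 0$, i.e.\ $\delta_{O,i}=0$ for all $1\le i\le t$. From \eqref{Eq:udelO} this forces $\sigma_j=0$ for every VN $j$ that is connected to at least one outgoing CC, i.e.\ every $j$ with $[\Bright]_{i,j}=1$ for some $i$. Let $\mathcal J_R=\{j:\udv(\Bright)_j\ge 1\}$ be this set of VNs; we have just shown $\sigma_j=0$ for all $j\in\mathcal J_R$. The task is to propagate this to all remaining VNs, i.e.\ those $j$ with $\udv(\Bright)_j=0$; by \eqref{Eq:dBU=dBD2} these are exactly the VNs with $\udv(\Bleft)_j=t$, meaning they are connected to \emph{all} $t$ coupling checks on the left side. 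The key structural fact I would invoke is that in a unit-memory binary-regular SC-LDPCL protograph, $B_\mathrm{loc}=1^{(l-t)\times r}$, so every local check is connected to every VN in the SB; in particular, each VN outside $\mathcal J_R$ shares a local check with some VN inside $\mathcal J_R$ (indeed with all of them). Since the DE process has reached a stopping fixed point with $\sigma_j=0$ on $\mathcal J_R$, I would argue that a local check of degree $r$ having even one non-erased neighbor is not enough — rather, I need the reverse: a VN $j'\notin\mathcal J_R$ is checked by $l-t\ge 2$ local checks, each of degree $r$; at a DE fixed point, the erasure rate at $j'$ is the product over its checks of the probability the check cannot recover it. The cleanest route is: if some $\sigma_{j'}>0$ for $j'\notin\mathcal J_R$, then at the fixed point the set $S=\{j:\sigma_j>0\}$ is a nonempty stopping set of the residual graph (the graph on the SB with the incoming-CC values treated as known/erased sources), and $S\subseteq\{j:\udv(\Bright)_j=0\}$ since we showed $\mathcal J_R$ is erasure-free. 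But every such $j'$ is connected to all $l-t\ge 2$ local checks $B_\mathrm{loc}$, each of which has degree $r$ and hence touches VNs in $\mathcal J_R$ (which are erasure-free) — so each local check adjacent to $j'$ has at most the VNs of $S$ erased among its neighbors; for a stopping set, each such check must have at least two erased neighbors in $S$. This gives a contradiction only once I also use that the incoming-side CCs $\Bleft$ cannot rescue $S$ either, which I would handle by noting the incoming DE values $\udi$ are fixed and the CC equations for them were already accounted for when the DE stopped; since $S$ persists at the fixed point, the CCs on the left also fail to resolve any $j\in S$.

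The main obstacle, and the step I expect to require the most care, is making the stopping-set / fixed-point argument for the ($\Leftarrow$) direction fully rigorous: one must show that $\udel_O=\underline 0$ together with the structure $B_\mathrm{loc}=1^{(l-t)\times r}$ and $l-t\ge2$ forces the entire residual erasure pattern to vanish, ruling out a ``hidden'' stopping set supported on the VNs not touched by $\Bright$. The crux is that such a stopping set would have to survive despite every one of its VNs being connected (via the $l-t$ all-one local checks) to the erasure-free VNs in $\mathcal J_R$ — but $l-t\ge 2$ is exactly what prevents a single all-one local check from being ``blocked'' by the rest of the stopping set, because a degree-$r$ check with only VNs in a small set $S$ erased resolves all of $S$ unless $|S\cap(\text{check's neighbors})|\ge 2$, and with two such independent checks per VN one derives $|S|$ must shrink, contradicting $S$ being a fixed point. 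I would formalize this by a short peeling argument on $S$, or alternatively invoke directly that the all-one local submatrix makes the SB's local code an $(l-t,r)$-regular code whose DE on BEC, once seeded with the erasure-free VNs $\mathcal J_R$, converges to the all-zero profile whenever $\mathcal J_R\ne\emptyset$ — and $\mathcal J_R\ne\emptyset$ holds because $\Bright\ne 0$ (each of its $t$ rows is nonzero, since $\udc(\Bleft)_i+\udc(\Bright)_i=r$ and a cutting-type partition leaves $\Bright$ with positive row sums). This reduces the proposition to the elementary fact that a regular local code with at least one erasure-free coordinate fully decodes under DE, which completes the argument.
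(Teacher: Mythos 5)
Your forward direction and the first half of your converse (that $\udel_O=\underline 0$ forces $\sigma_j=0$ for every VN touched by $\Bright$) are correct and coincide with the paper's argument. The gap is in the second half, where you try to propagate zero erasure from $\mathcal J_R$ to the VNs with $\left[\udv(\Bright)\right]_j=0$; both of your proposed routes fail. The stopping-set/peeling route fails because every local check is connected to \emph{all} $r$ VNs, so as soon as $|S|\ge 2$ every local check automatically has at least two neighbors in $S$: the condition you want to violate is trivially satisfied and peeling removes nothing. The fallback ``elementary fact'' --- that a regular local code seeded with one erasure-free coordinate fully decodes under DE --- is false: a degree-$r$ check with one known neighbor still sends an erased message to a VN unless \emph{all} of its other neighbors are known, so seeding DE with $\mathcal J_R$ does not drive the remaining erasure probabilities to zero (take $(l-t,r)=(2,6)$ and $\epsilon$ near $1$).

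The implication actually runs in the opposite direction, and this is how the paper closes the argument. A VN $j$ with $\left[\udv(\Bright)\right]_j=0$ is connected to \emph{every} check that participates in the helper's DE (all $l-t$ local checks and all $t$ incoming CCs). If $\sigma_j>0$, then every participating check receives a positive erasure message from $j$ and therefore sends a positive erasure message to every \emph{other} VN; since each VN's erasure rate is $\epsilon$ times the product of its incoming check messages, every VN in the SB --- including those in $\mathcal J_R$ --- ends up with $\sigma>0$, contradicting $\udel_O=\underline 0$ through \eqref{Eq:udelO}. So a contamination argument (one bad VN poisons all the others) replaces your rescue argument (good VNs save the bad ones), and it is the former, not the latter, that the all-ones structure of $B_\mathrm{loc}$ supports.
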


\begin{proof}
	In view of \eqref{Eq:udelO}, if for every $ j\in\{1,\ldots,r\} $, $ \sigma_j=0 $, then for every $ i\in\{1,\ldots,t\} $, $ \delta_{O,i} =0$.
	For the converse, assume to the contrary that for every $ i\in\{1,\ldots,t\} $, $ \delta_{O,i} =0$, and there exists a VN $ j\in\{1,2,\ldots,r\} $ such that $ \sigma_j>0 $. We divide the proof into two cases: 1) VN $ j $ is connected to all of the incoming CCs, 2) VN $ j $ is not connected to all of the incoming CCs.
	\begin{enumerate}
		\item Since VN $ j $ is connected to all of the incoming CCs and all the LCs (every VN is connected to all of the LCs), and $ \sigma_j>0 $, then all of the CNs during the SB's DE process receive positive (and bounded away from zero) values from VN $ j $. Consequently, all of the VNs in the SB receive positive (and bounded away from zero) values from all the CNs that participate in the SB's decoding. We conclude that at the end of the DE process, $\sigma_j>0$ for every $j\in\{1,\ldots,r\}$, contradicting \eqref{Eq:udelO}.
		\item Since VN $ j $ is not connected to all of the incoming CCs, and the SC protograph is based on binary-regular base protographs, then VN $ j $ is connected to some outgoing CC $ i\in\{1,\ldots,t\} $, i.e., $ \left [B_\mathrm{right}\right ]_{i,j}=1 $. In view of \eqref{Eq:udelO}, we get $ \delta_{O,i}>1-\left (1-\sigma_j\right )=\sigma_j>0 $, a contradiction.
	\end{enumerate}
	
\end{proof}

The meaning of Proposition~\ref{Prop:del>0Gen} is that, for unit-memory binary-regular SC-LDPC codes, providing full outgoing information by a helper is equivalent to the helper SB's full decoding. This fact simplifies the analysis and construction pursued later in the paper.

For the following derivations, consider the DE equation for the $ (l,r) $ LDPC ensemble \cite{RichUrb}, i.e., 
\begin{align}\label{Eq:x_ell}
x_\ell(\epsilon) = \epsilon\cdot\left (1-\left (1-x_{\ell-1}(\epsilon)\right )^{r-1}\right) ^{l-1},\quad x_{-1}=1.
\end{align}
Let $ \epslr $ be the asymptotic threshold of the $ (l,r) $ LDPC ensemble, and for every $ \epsilon\in[0,1] $ let $ x(\epsilon) = \lim_{\ell\to\infty}x_{\ell}(\epsilon) $ such that
\begin{align}\label{Eq:xs}
x(\epsilon) = \epsilon\cdot\left (1-\left (1-x(\epsilon)\right )^{r-1}\right) ^{l-1}.
\end{align}
The value $ x(\epsilon) $ is commonly referred to as a \emph{fixed point} for the DE equation \eqref{Eq:x_ell}. It is known that $ x(\epsilon)=0 $ if and only if $ \epsilon<\epsilon^*_{l,r} $. 
  
The following Proposition provides a tool we use in the sequel for bounding the SG decoding performance using properties of the single-SB code. 
\begin{proposition}\label{Prop:delta(eps)}
	Let $ \udc\left (B_\mathrm{left}\right )=\left (d_1,\ldots,d_t\right ) $ be the check degree profile of $ \Bleft $. For every $ \epsilon\in[0,1] $ define
	$ \underline\phi(\epsilon)=\left (\phi_{1}(\epsilon),\phi_{2}(\epsilon),\ldots,\phi_{t}(\epsilon)\right )$ and $ \underline\psi(\epsilon)=\left (\psi_{1}(\epsilon),\psi_{2}(\epsilon),\ldots,\psi_{t}(\epsilon)\right )$ by
	\begin{align}\label{Eq:delUdelD}
	\begin{array}{lll}
	\phi_{i}(\epsilon) &\triangleq &1-(1-x(\epsilon))^{r-d_i}\\
	\psi_{i}(\epsilon) &\triangleq &1-(1-x(\epsilon))^{d_i}
	\end{array},\quad 1\leq i\leq t.
	\end{align}
	Then, for every $ \epsilon\in [0,1] $
	\begin{subequations}
		\begin{align}
		\label{Eq:DeltLMono}
		&\Delta_L\left (\epsilon,\underline\phi(\epsilon) \right )\succeq\underline\phi(\epsilon), \\
		\label{Eq:DeltRMono}
		&\Delta_R\left (\epsilon,\underline\psi(\epsilon) \right )\succeq\underline\psi(\epsilon).
		\end{align}
	\end{subequations}
\end{proposition}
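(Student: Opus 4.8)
The plan is to run density evolution (DE) on the helper SB with the incoming values \emph{frozen} at the prescribed point --- $\udi=\underline\phi(\epsilon)$ for a left helper, $\udi=\underline\psi(\epsilon)$ for a right helper --- and to show that throughout the iterations every variable-to-check (V2C) message keeps erasure rate at least $x(\epsilon)$; feeding this into the outgoing-value formula \eqref{Eq:udelO} then produces \eqref{Eq:DeltLMono}--\eqref{Eq:DeltRMono}. The point of the specific values in \eqref{Eq:delUdelD} is a small identity. For a left helper an incoming CC is a degree-$d_i$ row of $\Bleft$; when it is given the external value $1-\phi_i(\epsilon)=(1-x(\epsilon))^{r-d_i}$, its message to an attached VN is $1-(1-x(\epsilon))^{r-d_i}\prod_{j'}(1-p_{j'})$, the product running over its $d_i-1$ other V2C inputs $p_{j'}$; if all the $p_{j'}$ equal $x(\epsilon)$ this collapses to $1-(1-x(\epsilon))^{r-1}$, which is precisely the check-to-VN value of the $(l,r)$-regular DE fixed point \eqref{Eq:xs}. (For a right helper the incoming CCs are the degree-$(r-d_i)$ rows of $\Bright$, and $1-\psi_i(\epsilon)=(1-x(\epsilon))^{d_i}$ is tuned to produce the same collapse.) Thus the chosen inputs force the helper-SB DE to dominate a copy of the plain $(l,r)$-regular DE.

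Concretely, I would first set up the per-edge BEC DE for a left-helper protograph $(\Bleft;\Bloc;\Bright)$: during the SB's iterations a VN $j$ is attached only to the $l-t$ local checks (each an all-ones row, hence degree $r$) and to the $[\udv(\Bleft)]_j=t-[\udv(\Bright)]_j$ incoming CCs (see \eqref{Eq:dBU=dBD2}), the outgoing CCs (rows of $\Bright$) being inactive until the last step; by the all-ones structure of $\Bloc$ the messages are symmetric across local checks. The core step is to show that $\mathcal R=\{\text{all V2C erasure rates}\ge x(\epsilon)\}$ is invariant under one DE update: if all V2C rates are $\ge x(\epsilon)$, then every local-check-to-VN message is $\ge 1-(1-x(\epsilon))^{r-1}$, every incoming-CC-to-VN message is $\ge 1-(1-x(\epsilon))^{r-1}$ by the identity above, and the updated V2C rate at $j$ equals $\epsilon$ times a product of $l-1-[\udv(\Bright)]_j\le l-1$ such messages, hence is $\ge\epsilon\bigl(1-(1-x(\epsilon))^{r-1}\bigr)^{l-1}=x(\epsilon)$ by \eqref{Eq:xs}, using $1-(1-x(\epsilon))^{r-1}\in[0,1]$. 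Since the standard initialization (V2C rates $1$, or $\epsilon\ge x(\epsilon)$) lies in $\mathcal R$, every iterate --- in particular the V2C rates at whatever time the DE is stopped --- stays in $\mathcal R$; note that monotonicity of BEC DE is not even needed, only closedness of $\mathcal R$.

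It then remains to read off $\udo$ from \eqref{Eq:udelO}. The key observation is that a VN $j$ with $[\Bright]_{i,j}=1$ has $[\udv(\Bright)]_j\ge 1$, so during the iterations it is attached to only $l-[\udv(\Bright)]_j\le l-1$ active checks; with all their messages $\ge 1-(1-x(\epsilon))^{r-1}$, its post-decoding erasure rate satisfies $\sigma_j\ge\epsilon\bigl(1-(1-x(\epsilon))^{r-1}\bigr)^{l-[\udv(\Bright)]_j}\ge x(\epsilon)$. Plugging in and using $[\udc(\Bright)]_i=r-d_i$ from \eqref{Eq:dBU=dBD1} gives $\delta_{O,i}=1-\prod_{j:[\Bright]_{i,j}=1}(1-\sigma_j)\ge 1-(1-x(\epsilon))^{r-d_i}=\phi_i(\epsilon)$, which is \eqref{Eq:DeltLMono}. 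The right-helper case is identical after exchanging $\Bleft$ and $\Bright$: the incoming CCs become the rows of $\Bright$ and the outgoing CCs the rows of $\Bleft$, the $\mathcal R$-invariance goes through verbatim with $\udi=\underline\psi(\epsilon)$, a VN $j$ with $[\Bleft]_{i,j}=1$ again has $\sigma_j\ge x(\epsilon)$, and \eqref{Eq:udelO} yields $\delta_{O,i}\ge 1-(1-x(\epsilon))^{d_i}=\psi_i(\epsilon)$.

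I expect the main obstacle to be exactly the irregular degree structure inside the SB: different VNs are attached to different numbers of incoming versus outgoing CCs, so the clean ``helper DE dominates $(l,r)$-regular DE'' idea must be reconciled with these varying degrees. In particular, a VN $j$ that feeds no outgoing CC only satisfies the strictly weaker $\sigma_j\ge x(\epsilon)\bigl(1-(1-x(\epsilon))^{r-1}\bigr)$; the argument works only because such VNs never appear in the product in \eqref{Eq:udelO}, whereas exactly the VNs that do appear there carry the sharp bound $\sigma_j\ge x(\epsilon)$. A minor point to state up front is the convention $x(\epsilon)=0$ for $\epsilon<\epslr$, in which case $\underline\phi(\epsilon)=\underline\psi(\epsilon)=\underline 0$ and the claim is trivial, so the substance is entirely in the regime $\epsilon\ge\epslr$.
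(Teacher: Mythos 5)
Your proof is correct and follows essentially the same route as the paper's: the heart of both arguments is the identity that the inputs $\phi_i(\epsilon)$ and $\psi_i(\epsilon)$ are tuned so that every coupling-check-to-VN message collapses to $1-(1-x(\epsilon))^{r-1}$, reducing the helper-SB DE to a copy of the $(l,r)$-regular recursion \eqref{Eq:xs} and forcing all VN values to stay at least $x(\epsilon)$, after which \eqref{Eq:udelO} gives the claim. The only (harmless) difference is structural: the paper first exhibits $x(\epsilon)$ as an exact two-sided SG-DE fixed point with $\udel_L=\underline\phi(\epsilon)$, $\udel_R=\underline\psi(\epsilon)$ and then invokes monotonicity with $\udel_R=\underline 1$, whereas you prove directly that the region $\{\text{V2C rates}\ge x(\epsilon)\}$ is invariant for the one-sided helper DE, taking explicit care of the varying number of active checks per VN.
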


\begin{proof}
	For $ \epsilon<\epslr $ we have $ x(\epsilon)=0 $ and thus $ \underline\phi(\epsilon)=\underline\psi(\epsilon)=\underline 0 $. In this case,  \eqref{Eq:DeltLMono} and \eqref{Eq:DeltRMono} follow trivially.
	For $ \epsilon\geq\epslr $ we will first show that if  $ \underline\delta_L=\underline\phi(\epsilon) $ and $ \underline\delta_R=\underline\psi(\epsilon) $, then $ x(\epsilon)>0 $ is a SG-DE fixed-point (see Section~\ref{Sub:SG_DE}), i.e., we will show that if all VN-to-CN values in some DE iteration $ \ell $ are equal to $x_\ell(\epsilon) = x(\epsilon) $, then after one iteration, all of the the VN-to-CN values in iteration $ \ell+1 $ remain $ x(\epsilon) $.
	
	Consider row $ i\in\{1,2,\ldots,t\} $ in $ \Bleft $. This row represents a coupling check node with an incoming DE value $\phi_{i}(\epsilon)  $ given in \eqref{Eq:delUdelD}, and $ d_i $ edges between it and the VNs of the SB. By our assumption, $x_\ell(\epsilon)=x(\epsilon)$. Thus, if we mark the outgoing CN-to-VN message from a left coupling check node $ i $ to any VN in the SB in iteration $ \ell $ by $u^{(i)}_\ell(\epsilon)$, then for every $ i\in\{1,2,\ldots,t\} $ we have 
	\begin{align}\label{Eq:LCC2VN}
	\begin{split}
	u^{(i)}_\ell(\epsilon) 
	&= 1-(1-\phi_{i}(\epsilon))(1-x(\epsilon))^{d_i-1}\\
	&= 1-(1-x(\epsilon))^{r-d_i}(1-x(\epsilon))^{d_i-1}\\
	&= 1-(1-x(\epsilon))^{r-1}
	\end{split}
	\end{align}
	(see Figure~\ref{Fig:SG_DE_Proof}(a) for a graphical representation of \eqref{Eq:LCC2VN}).
	Similarly, let $w^{(i)}_\ell(\epsilon)$ denote the outgoing CN-to-VN message from a right coupling check node $ i\in\{1,2,\ldots,t\}  $ to any VN in the SB. In view of \eqref{Eq:dBU=dBD1}--\eqref{Eq:dBU=dBD3}, a right coupling check node with index $ i\in\{1,\ldots,t\} $ is connected to VNs in the SB via $ r-d_i $ edges. Hence,
	\begin{align}\label{Eq:RCC2VN}
	\begin{split}
	w^{(i)}_\ell(\epsilon) 
	&= 1-(1-\psi_{i}(\epsilon))(1-x(\epsilon))^{r-d_i-1}\\
	&= 1-(1-x(\epsilon))^{d_i}(1-x(\epsilon))^{r-d_i-1}\\
	&= 1-(1-x(\epsilon))^{r-1},
	\end{split}
	\end{align}
	(see Figure~\ref{Fig:SG_DE_Proof}(b)). 
	Moreover, let $y_\ell(\epsilon)$ denote the outgoing CN-to-VN message from a local check node to any VN in the SB. Since every local check node is of degree $ r $, then
	\begin{align}
	\label{Eq:LC2VN}
	y_\ell(\epsilon) = 1-(1-x(\epsilon))^{r-1}.
	\end{align}
	In view of \eqref{Eq:LCC2VN}--\eqref{Eq:LC2VN}, given that in iteration $ \ell $ all of the DE values emanating from every VN equal $ x_\ell(\epsilon) = x(\epsilon)$, and in addition $ \underline\delta_L=\underline\phi(\epsilon) $ and $ \underline\delta_R=\underline\psi(\epsilon) $, then all of the CN-to-VN DE values equal to $1-(1-x(\epsilon))^{r-1}  $. 
	Furthermore, every VN in the SB has degree $ l $, thus the DE values from any VN to any CN in iteration $ \ell+1 $ are given by
	$x_{\ell+1}(\epsilon) = \epsilon\cdot  (1-(1-x\left (\epsilon)\right )^{r-1} )^{l-1} $. 
	In view of \eqref{Eq:xs}, this implies that $x_{\ell+1}(\epsilon)=x(\epsilon)= x_{\ell}(\epsilon) $, and thus $ x(\epsilon) $ is a fixed point when $ \underline\delta_L=\underline\phi(\epsilon) $ and $ \underline\delta_R=\underline\psi(\epsilon) $. 
	
	In order to calculate the erasure transfer of a left helper SB $ \Delta_L\left (\epsilon,\underline \phi(\epsilon)\right ) $, we should run the SG DE equations with $ \epsilon $ as the channel parameter, and with $ \udel_L=\underline \phi(\epsilon) $ and $ \udel_R=\underline 1 $ as incoming DE values from the left and right, respectively. From the above result and from monotonicity of DE, 
	each outgoing (right) CC receives DE values from the SB that are lower bounded by $ x(\epsilon) $. Since the degree in the SB of an outgoing CC $ i\in\{1,2,\ldots,t\} $ is $ r-d_i $, then the outgoing DE values $ \Delta_L\left (\epsilon,\underline \phi(\epsilon)\right ) $ are element-wise lower bounded by $ \left ((1-x(\epsilon))^{r-d_1},\ldots,(1-x(\epsilon))^{r-d_t} \right ) $, which in view \eqref{Eq:delUdelD} of proves \eqref{Eq:DeltLMono}. Similarly, \eqref{Eq:DeltRMono} holds for right helper SBs.

\end{proof}

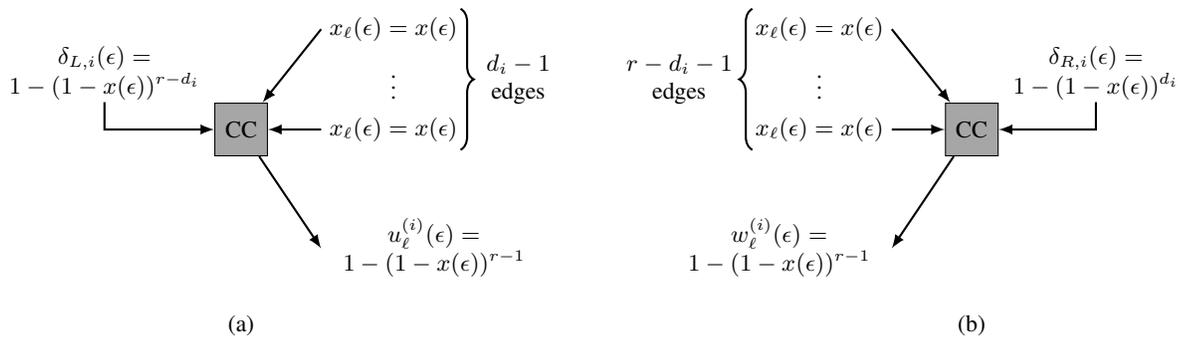
\begin{figure}
	\begin{tikzpicture}
	\tikzstyle{cnode}=[rectangle,draw,fill=gray!70!white,minimum size=7mm]
	\tikzstyle{edge}=[thick,->,>=latex]
	\pgfmathsetmacro{\x}{90}
	\pgfmathsetmacro{\y}{10}
	\footnotesize
	\node (ccl) [cnode] {CC}; \node (dl) [above left=0.01*\y mm of ccl,inner sep=0pt] {$\begin{array}{c}
		 \delta_{L,i}(\epsilon)=\\1-(1-x(\epsilon))^{r-d_i}
		\end{array}$}; \node (x) at (dl|-ccl) {};
	\node (e1) [above right=\y mm of ccl] {$ x_\ell(\epsilon)=x(\epsilon) $};
	\node (e2) at (ccl-|e1) {$ x_\ell(\epsilon)=x(\epsilon) $};
	\node (e3) [below right=\y mm of ccl] {$ \begin{array}{c}
		u^{(i)}_\ell(\epsilon)=\\1-(1-x(\epsilon))^{r-1}
		\end{array} $};
	\node at ($(e1)!0.5!(e2)$) {$ \vdots $ };
	
	\draw [edge] (dl)--(x.center)--(ccl);\draw [edge] (e1.west)--(ccl);\draw [edge] (e2.west)--(ccl);\draw [edge] (ccl)--(e3.west);
	\draw [thick,decorate,decoration={brace,amplitude=5pt,raise=-2pt}] (e1.north east)--(e2.south east) node [black,midway,xshift=7mm,text width=10mm,align=center] {\footnotesize $ d_i-1 $ edges};
	
	\node (a) [below = 2*\y mm of ccl] {(a)};
	
	
	\node (ccr) [cnode,right = \x mm of ccl] {CC}; \node (dr) [above right=0.01*\y mm of ccr,inner sep=0pt] {$\begin{array}{c}
		\delta_{R,i}(\epsilon)=\\1-(1-x(\epsilon))^{d_i}
		\end{array}$};\node (x) at (ccr-|dr) {};
	\node (e1) [above left=\y mm of ccr] {$ x_\ell(\epsilon)=x(\epsilon) $};
	\node (e2) at (e1|-ccr) {$ x_\ell(\epsilon)=x(\epsilon) $};
	\node (e3) [below left=\y mm of ccr] {$ \begin{array}{c}
		w^{(i)}_\ell(\epsilon)=\\1-(1-x(\epsilon))^{r-1}
		\end{array} $};
	\node at ($(e1)!0.5!(e2)$) {$ \vdots $ };
	
	\draw [edge] (dr)--(x.center)--(ccr);\draw [edge] (e1.east)--(ccr);\draw [edge] (e2.east)--(ccr);\draw [edge] (ccr)--(e3.east);
	\draw [thick,decorate,decoration={brace,amplitude=5pt,raise=-2pt}] (e2.south west)--(e1.north west) node [black,midway,xshift=-9mm,text width=15mm,align=center] {\footnotesize $ r-d_i-1 $ edges};
	
	\node (b) at (a-|ccr) {(b)};

	\end{tikzpicture}
	\caption{\label{Fig:SG_DE_Proof}Graphical representations of the coupling-check-nodes updates in the proof of Proposition~\ref{Prop:delta(eps)}: (a) left CC, row $ i $ in $ \Bleft $; (b) right CC, row $ i $ in $ \Bright $.}
\end{figure}

Protographs with the following definition of symmetry simplify the analysis.

\begin{definition}\label{Def:SymSB}
	Let $\left (B_\mathrm{left}\;;\;B_\mathrm{loc}\;;\;B_\mathrm{right}\right )$ be a SB protomatrix. We say that the SB is \emph{symmetric} if using only row and column permutations one can transform $B_\mathrm{left}$ into $B_\mathrm{right}$.
\end{definition}

\begin{example}\label{Ex:SymSB}
	The SB in Example~\ref{Ex:361_SCLDPCL} is symmetric while the SB in Example~\ref{Ex:l4r16Const} is not symmetric.
	Moreover, it can be shown that a SB from the cutting-vector $ (l,r,t) $ SC-LDPCL construction in \cite{RamCass18} is symmetric if and only if $ r $ is divisible by $ t+1 $.
\end{example}
For symmetric SBs we have
\begin{align*}
\Delta_L\left (\epsilon,\udi\right )= \Delta_R \left (\epsilon,\udi\right ),\quad\forall\epsilon\in[0,1],\quad \forall \udi \in [0,1]^t.
\end{align*}

\begin{definition}\label{Def:Deltak}
	Let $ k,t\in\mathbb N $, $ \epsilon\in[0,1]$, and $\udel\in[0,1]^t$. 
	We denote by $ \DelfLk{k}{\epsilon} $ (respectively $ \DelfRk{k}{\epsilon} $), the function that computes $ \DelfL{\epsilon} $ (respectively $ \DelfR{\epsilon} $) recursively $ k $ times, i.e.,
	\begin{align*}
	&\DelfLk{1}{\epsilon}=\DelfL{\epsilon},\\
	&\DelfLk{k}{\epsilon}=\Delta\left (\epsilon,\DelfLk{k-1}{\epsilon}\right ),\quad k>1.
	\end{align*}
\end{definition}

When considering an erasure-transfer function $\Delta(\epsilon, \udel)$ (right or left), we identify three important channel-parameter thresholds\footnote{These thresholds are properties of the protograph used, however, for ease of reading we make this implicit in the notations.} $ \epsilon^*_1, \epsilon^*_2 ,\epsilon^*_3  $. We call them the \emph{SB thresholds}. They play an important role in the analysis and code design that we perform in Sections~\ref{Sec:SBMV} and~\ref{Sec:Thrsholds}. The first threshold $ \epsilon^*_1$ is the largest channel parameter such that local decoding of the SB succeeds, regardless of the input DE values, i.e.,
\begin{align}\label{Eq:s1}
\Delf{\epsilon}=0  ,\quad \forall \epsilon<\epsilon^*_1,\quad\forall \udel \in [0,1]^t.
\end{align}
This threshold can be easily calculated by the observation that $ \epsilon^*_1=\epsilon^*\left (B_{\text{loc}}\right ) $, where $ \epsilon^*\left (B_{\text{loc}}\right ) $ is the threshold induced by the protograph $ B_{\text{loc}} $. 
The second threshold $ \epsilon^*_2$ is the largest channel parameter such that the outgoing DE values are element-wise smaller (with strict inequality in at least one element) than the incoming DE values, i.e., 
\begin{align}\label{Eq:s2}
\Delf{\epsilon}\prec\udel  ,\quad \forall \epsilon<\epsilon^*_2,\quad \forall \udel \in [0,1]^t.
\end{align}
This means that a sequence of inter-SB DE values passing between SBs with channel parameters $ \epsilon<\epsilon^*_2 $, will gradually decrease towards zero. 
We calculate $ \epsilon^*_2 $ by computing $ \Delf{\epsilon} $ for all $ \udel \in [0,1]^t $ and with increasing values of $ \epsilon $ until \eqref{Eq:s2} is violated. 
In view of Definition~\ref{Def:Deltak} and \eqref{Eq:s2}, for every $ \epsilon<\epsilon^*_2 $, we define
\begin{align}\label{Eq:q}
q(\epsilon)=\min\{k\in\mathbb N\colon \Delta^{(k)}\left (\epsilon,\underline 1\right )=\underline 0\}. 
\end{align} 
The value of $ q(\epsilon) $ will play an important role later in the evaluation of semi-global decoding (Section~\ref{Sub:SBMVPerf}).  

The third threshold $ \epsilon^*_3$ is the largest channel parameter such that zero incoming DE values are preserved, i.e., 
\begin{align}\label{Eq:s3}
\Delta(\epsilon,\underline 0) =\underline 0,\quad \forall \epsilon<\epsilon^*_3.
\end{align}

\begin{example}\label{Ex:thresholds}
Consider the $ (l=3,r=6,t=1) $ SC-LDPCL protograph from Example~\ref{Ex:361_SCLDPCL}. 
Since $ t+1=2 $ divides $ r=6 $, then the SBs in this protograph are symmetric (see a remark in Example~\ref{Ex:SymSB}), so we omit the $ L $ and $ R $ sub-scripts and mark  $  \Delta(\cdot, \cdot)  \triangleq \Delta_L(\cdot, \cdot)  = \Delta_R(\cdot, \cdot)$, and $ \delta \triangleq \delta_L=\delta_R $ (note that for $t=1$, $\delta$ is a scalar).
The SB thresholds of this protograph are given by $ \epsilon^*_1=0.2,\epsilon^*_2=0.3719,\epsilon^*_3=0.4297. $ 
Figure~\ref{Fig:delta_O} illustrates erasure-transfer functions for a SB in this protograph and for channel parameters $\epsilon\in\{0.18,0.3547,0.4239,0.5438\}$ (note that each of these channel parameters falls into a different threshold interval, i.e.,  $0< 0.18<\epsilon^*_1<0.3547<\epsilon^*_2<0.4239<\epsilon^*_3<0.5438<1 $). Also shown is the value $ q(0.3547) $ in \eqref{Eq:q}; the dotted red line shows the inter-SB DE values, starting from $ 1 $ (all erasure) down to $ 0 $ (no erasures) after $ q(0.3547)= 4$ SBs. 

Note that changing the value of the channel parameters could change the plots, however, different channel parameters within the same threshold interval will produce plots with the same behavior and properties as described in \eqref{Eq:s1}--\eqref{Eq:s3}.
\begin{figure}
	\begin{center}
		\begin{tikzpicture}

\begin{axis}[%
width=6in,
height=3in,
at={(0,0)},
scale only axis,
xmin=0,
xmax=1,
xlabel style={font=\color{black}},
xlabel={$\delta$},
ymin=0,
ymax=1,
ylabel style={font=\color{black}},
ylabel={$\Delta(\epsilon,\udel)$},
xmajorgrids,
ymajorgrids,
legend style={
	at={(1,0.1)}, 
	anchor=south east, 
	legend cell align=left, 
	align=left, draw=black
}
]

\addplot [color=orange,line width=1pt]
table[row sep=crcr]{%
	0	0.743408934278332\\
	0.01	0.745504811766463\\
	0.02	0.747583001509365\\
	0.03	0.749643720467772\\
	0.04	0.75168718092428\\
	0.05	0.753713590634806\\
	0.06	0.755723152973665\\
	0.07	0.757716067072591\\
	0.08	0.75969252795401\\
	0.09	0.761652726658861\\
	0.1	0.763596850369228\\
	0.11	0.765525082526046\\
	0.12	0.767437602942124\\
	0.13	0.769334587910702\\
	0.14	0.771216210309771\\
	0.15	0.773082639702349\\
	0.16	0.774934042432908\\
	0.17	0.776770581720128\\
	0.18	0.778592417746156\\
	0.19	0.780399707742524\\
	0.2	0.782192606072885\\
	0.21	0.783971264312701\\
	0.22	0.785735831326037\\
	0.23	0.787486453339571\\
	0.24	0.789223274013953\\
	0.25	0.790946434512633\\
	0.26	0.792656073568254\\
	0.27	0.794352327546736\\
	0.28	0.796035330509121\\
	0.29	0.797705214271309\\
	0.3	0.799362108461738\\
	0.31	0.801006140577126\\
	0.32	0.802637436036327\\
	0.33	0.804256118232404\\
	0.34	0.805862308582971\\
	0.35	0.807456126578887\\
	0.36	0.809037689831364\\
	0.37	0.810607114117552\\
	0.38	0.812164513424659\\
	0.39	0.813709999992671\\
	0.4	0.81524368435572\\
	0.41	0.816765675382152\\
	0.42	0.818276080313348\\
	0.43	0.819775004801353\\
	0.44	0.821262552945335\\
	0.45	0.822738827326945\\
	0.46	0.824203929044606\\
	0.47	0.825657957746764\\
	0.48	0.82710101166416\\
	0.49	0.828533187641133\\
	0.5	0.829954581166011\\
	0.51	0.831365286400611\\
	0.52	0.832765396208886\\
	0.53	0.834155002184742\\
	0.54	0.835534194679062\\
	0.55	0.836903062825964\\
	0.56	0.838261694568313\\
	0.57	0.839610176682519\\
	0.58	0.840948594802648\\
	0.59	0.842277033443861\\
	0.6	0.843595576025209\\
	0.61	0.84490430489181\\
	0.62	0.84620330133642\\
	0.63	0.847492645620419\\
	0.64	0.848772416994247\\
	0.65	0.850042693717281\\
	0.66	0.851303553077197\\
	0.67	0.852555071408819\\
	0.68	0.853797324112475\\
	0.69	0.855030385671881\\
	0.7	0.856254329671556\\
	0.71	0.857469228813798\\
	0.72	0.858675154935222\\
	0.73	0.859872179022885\\
	0.74	0.861060371229999\\
	0.75	0.862239800891255\\
	0.76	0.863410536537764\\
	0.77	0.864572645911631\\
	0.78	0.865726195980165\\
	0.79	0.866871252949752\\
	0.8	0.86800788227938\\
	0.81	0.869136148693847\\
	0.82	0.870256116196645\\
	0.83	0.871367848082543\\
	0.84	0.872471406949862\\
	0.85	0.873566854712473\\
	0.86	0.874654252611505\\
	0.87	0.875733661226779\\
	0.88	0.876805140487987\\
	0.89	0.8778687496856\\
	0.9	0.878924547481537\\
	0.91	0.879972591919584\\
	0.92	0.881012940435581\\
	0.93	0.882045649867377\\
	0.94	0.883070776464561\\
	0.95	0.884088375897981\\
	0.96	0.885098503269047\\
	0.97	0.886101213118831\\
	0.98	0.887096559436965\\
	0.99	0.888084595670352\\
	1	0.889065374731678\\
};
\addlegendentry{$\epsilon=0.5438\in(\epsilon^*_3,1)$}

\addplot [color=cyan,line width=1pt]
table[row sep=crcr]{%
	0	0\\
	0.01	6.66133814775094e-16\\
	0.02	9.99200722162641e-16\\
	0.03	0.109005425639857\\
	0.04	0.161641740149829\\
	0.05	0.186290044993605\\
	0.06	0.205772449850714\\
	0.07	0.222545969383147\\
	0.08	0.237564522644154\\
	0.09	0.251319304467609\\
	0.1	0.264103441282026\\
	0.11	0.276108841970643\\
	0.12	0.287469467014207\\
	0.13	0.298283342575036\\
	0.14	0.308624845937651\\
	0.15	0.318552055733137\\
	0.16	0.32811139546395\\
	0.17	0.337340698479743\\
	0.18	0.346271305369791\\
	0.19	0.35492954332804\\
	0.2	0.363337796827614\\
	0.21	0.371515299903803\\
	0.22	0.379478733877978\\
	0.23	0.387242686030213\\
	0.24	0.39482000691018\\
	0.25	0.402222092452746\\
	0.26	0.409459109427732\\
	0.27	0.416540177580215\\
	0.28	0.423473518244064\\
	0.29	0.430266576698543\\
	0.3	0.436926123742249\\
	0.31	0.443458340656588\\
	0.32	0.449868890774107\\
	0.33	0.456162980154989\\
	0.34	0.462345409339171\\
	0.35	0.468420617733997\\
	0.36	0.4743927218843\\
	0.37	0.480265548629181\\
	0.38	0.486042663960119\\
	0.39	0.491727398245589\\
	0.4	0.497322868368768\\
	0.41	0.502831997230047\\
	0.42	0.50825753098979\\
	0.43	0.513602054364969\\
	0.44	0.518868004242988\\
	0.45	0.524057681834811\\
	0.46	0.529173263555525\\
	0.47	0.534216810792471\\
	0.48	0.539190278697638\\
	0.49	0.544095524121617\\
	0.5	0.548934312789972\\
	0.51	0.553708325809195\\
	0.52	0.558419165577707\\
	0.53	0.563068361167528\\
	0.54	0.567657373233817\\
	0.55	0.57218759850227\\
	0.56	0.576660373878241\\
	0.57	0.581076980216108\\
	0.58	0.585438645782855\\
	0.59	0.58974654944589\\
	0.6	0.594001823611652\\
	0.61	0.598205556938614\\
	0.62	0.60235879684566\\
	0.63	0.606462551834536\\
	0.64	0.610517793643089\\
	0.65	0.614525459244247\\
	0.66	0.618486452704153\\
	0.67	0.6224016469115\\
	0.68	0.626271885188901\\
	0.69	0.63009798279609\\
	0.7	0.633880728333757\\
	0.71	0.637620885056023\\
	0.72	0.641319192098774\\
	0.73	0.644976365630419\\
	0.74	0.648593099931053\\
	0.75	0.652170068405415\\
	0.76	0.655707924534627\\
	0.77	0.659207302771194\\
	0.78	0.662668819381401\\
	0.79	0.666093073238875\\
	0.8	0.669480646572768\\
	0.81	0.672832105673717\\
	0.82	0.676148001560509\\
	0.83	0.679428870610095\\
	0.84	0.682675235153444\\
	0.85	0.685887604039475\\
	0.86	0.689066473169174\\
	0.87	0.692212326001814\\
	0.88	0.695325634035062\\
	0.89	0.698406857260628\\
	0.9	0.701456444596966\\
	0.91	0.704474834300461\\
	0.92	0.707462454356395\\
	0.93	0.710419722850918\\
	0.94	0.713347048325162\\
	0.95	0.716244830112526\\
	0.96	0.719113458660144\\
	0.97	0.721953315835418\\
	0.98	0.724764775218485\\
	0.99	0.727548202381402\\
	1	0.730303955154792\\
};
\addlegendentry{$\epsilon=0.4239\in(\epsilon^*_2,\epsilon^*_3) $}

\addplot [color=red,line width=1pt]
  table[row sep=crcr]{%
0	0\\
0.01	6.66133814775094e-16\\
0.02	1.33226762955019e-15\\
0.03	2.33146835171283e-15\\
0.04	2.66453525910038e-15\\
0.05	3.99680288865056e-15\\
0.06	3.99680288865056e-15\\
0.07	5.32907051820075e-15\\
0.08	5.99520433297585e-15\\
0.09	6.32827124036339e-15\\
0.1	6.99440505513849e-15\\
0.11	8.65973959207622e-15\\
0.12	9.99200722162641e-15\\
0.13	9.65894031423886e-15\\
0.14	9.99200722162641e-15\\
0.15	1.19904086659517e-14\\
0.16	1.19904086659517e-14\\
0.17	1.29896093881143e-14\\
0.18	1.43218770176645e-14\\
0.19	1.59872115546023e-14\\
0.2	1.76325620770967e-12\\
0.21	3.83372279477712e-08\\
0.22	3.06613109425369e-05\\
0.23	0.00043885766684626\\
0.24	0.0016338903097507\\
0.25	0.0040209865060612\\
0.26	0.00810983455036318\\
0.27	0.0143276257563153\\
0.28	0.0227176403005931\\
0.29	0.0328448123349178\\
0.3	0.0440662287994253\\
0.31	0.0558313002451578\\
0.32	0.0677762780926909\\
0.33	0.0796902129014074\\
0.34	0.0914592177473008\\
0.35	0.103025445636137\\
0.36	0.114362289312337\\
0.37	0.125460330455003\\
0.38	0.13631951319735\\
0.39	0.146944778731009\\
0.4	0.157343613790698\\
0.41	0.167524666227893\\
0.42	0.177496963298401\\
0.43	0.187269474902781\\
0.44	0.196850876436266\\
0.45	0.206249427956348\\
0.46	0.215472921216235\\
0.47	0.22452866601795\\
0.48	0.233423498901652\\
0.49	0.242163804006271\\
0.5	0.250755540010068\\
0.51	0.259204269519744\\
0.52	0.267515188774178\\
0.53	0.275693156445011\\
0.54	0.283742720877265\\
0.55	0.291668145455368\\
0.56	0.299473431986384\\
0.57	0.307162342114144\\
0.58	0.314738416847184\\
0.59	0.32220499431984\\
0.6	0.329565225922113\\
0.61	0.336822090937985\\
0.62	0.343978409828671\\
0.63	0.351036856290108\\
0.64	0.357999968204744\\
0.65	0.364870157597603\\
0.66	0.371649719696545\\
0.67	0.378340841186795\\
0.68	0.38494560774077\\
0.69	0.391466010895728\\
0.7	0.397903954344144\\
0.71	0.404261259694788\\
0.72	0.41053967175627\\
0.73	0.416740863389285\\
0.74	0.422866439968834\\
0.75	0.428917943493333\\
0.76	0.434896856373598\\
0.77	0.440804604931231\\
0.78	0.446642562632898\\
0.79	0.4524120530842\\
0.8	0.458114352804472\\
0.81	0.463750693801667\\
0.82	0.469322265964569\\
0.83	0.474830219287884\\
0.84	0.480275665944236\\
0.85	0.485659682215741\\
0.86	0.49098331029664\\
0.87	0.49624755997735\\
0.88	0.501453410219388\\
0.89	0.506601810629702\\
0.9	0.5116936828422\\
0.91	0.516729921813551\\
0.92	0.521711397039719\\
0.93	0.526638953699123\\
0.94	0.531513413727796\\
0.95	0.536335576831473\\
0.96	0.541106221439116\\
0.97	0.545826105602002\\
0.98	0.550495967842166\\
0.99	0.555116527953698\\
1	0.559688487760067\\
};
\addlegendentry{$\epsilon= 0.3547 \in(\epsilon^*_1,\epsilon^*_2)$}

\addplot [color=blue,line width=1pt]
table[row sep=crcr]{%
	0	1e-4\\
	1	1e-4\\	
};
\addlegendentry{$\epsilon=0.1800\in(0,\epsilon^*_1) $}

\addplot [color=black, line width=1pt]
table[row sep=crcr]{%
	0	0\\
	1	1\\
};

\addplot [color=red,line width=0.75pt,dotted]
table[row sep=crcr]{%
	0		0\\
	0.04	0\\
	0.04	0.04\\
	0.3	0.0440662287994253\\
	0.3	0.3\\
	0.56	0.299473431986384\\
	0.56	0.56\\
	1	0.559688487760067\\
};

\end{axis}

\end{tikzpicture}%
		\caption{\label{Fig:delta_O} $  \Delta(\epsilon, \delta)  $ for the $ (l=3,r=6,t=1) $ protograph and for four possible channel parameters, each in a different interval defined by the thresholds in \eqref{Eq:s1}--\eqref{Eq:s3}: $0.18$ (blue), $ 0.3547$ (red), $0.4239$ (cyan), $0.5438$ (orange); black represents the identity function $\Delta(\epsilon,\delta)=\delta$. For $ \epsilon=0.3547 $, four consecutive SBs suffice to decrease the inter-SB DE values from $ 1 $ to $ 0 $ (dotted red); i.e., $ q(0.3547 )=4 $.}
	\end{center}
\end{figure}
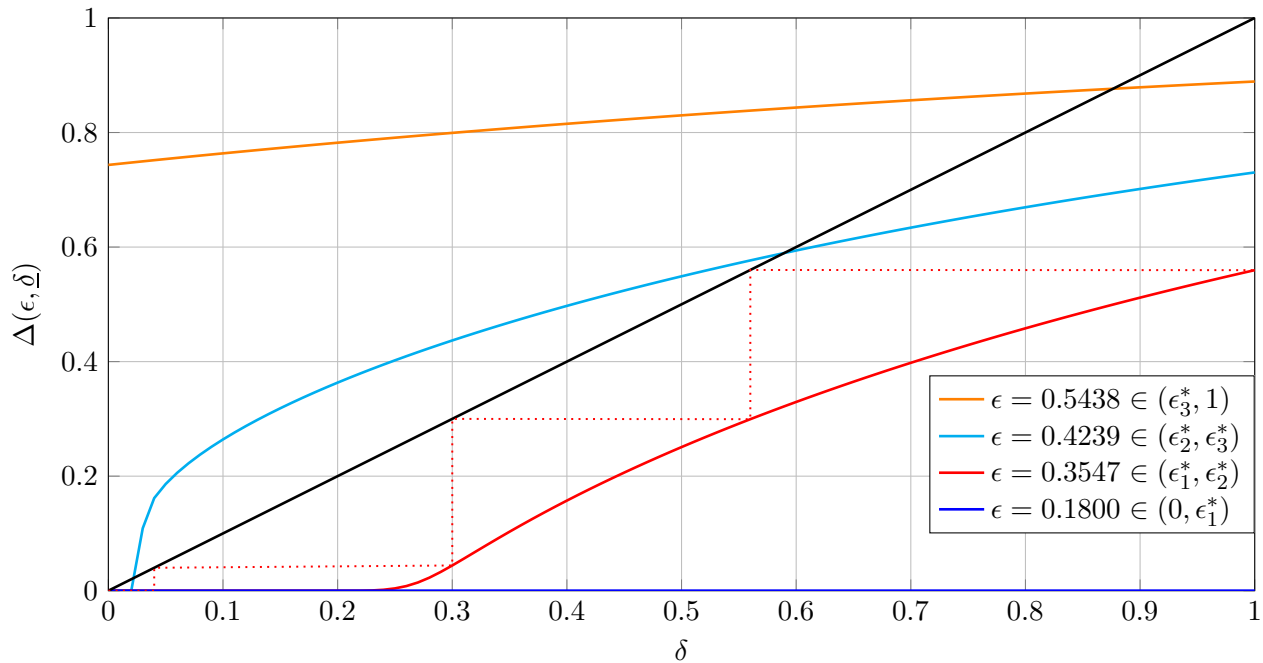
\end{example}

\subsection{Target Threshold}\label{Sub:TargetTh}

We define the target threshold $ \epsilon^*(\udel_L, \udel_R) $ as the maximum erasure rate such that the target is successfully decoded \emph{locally} given incoming DE values $ \udel_L $ and $ \udel_R $. For example, if $ \udel_L=\udel_R=\underline 1 $, then the target cannot use any side information, and its threshold is the local threshold, i.e., $ \epsilon^*(\underline 1,\underline 1)=\epsilon^*_1  $. 
Furthermore, we have the following result.
\begin{proposition}\label{Prop:s3}
	 For left and right helper SBs, $ \epsilon^*(\underline 0,\underline 1)=\epsilon^*_3  $, and $ \epsilon^*(\underline 1,\underline 0)=\epsilon^*_3  $, respectively.
\end{proposition}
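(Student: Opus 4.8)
The plan is to identify the density-evolution (DE) recursion that the target SB runs with incoming values $\udel_L=\underline 0$, $\udel_R=\underline 1$ with the DE recursion that a \emph{left} helper SB runs with incoming value $\udi=\underline 0$, and then to translate the resulting decoding condition via Proposition~\ref{Prop:del>0Gen}. I would start from a standard fact about BEC DE: a check node whose incoming DE value is fixed at $1$ is inert -- its outgoing message to each neighboring VN is forced to $1$ (the check-node update contains the zero factor $1-1$), and an incoming message equal to $1$ at a VN contributes only a factor $1$ to each outgoing VN message and to that VN's a-posteriori erasure probability -- so such a check may be deleted from the graph without affecting DE. Applying this to the target SB with $\udel_R=\underline 1$, its DE coincides with DE on the graph consisting of $\Bloc$ together with the $t$ coupling checks of $\Bleft$ carrying the fixed incoming values $\udel_L$; taking in addition $\udel_L=\underline 0$, this is exactly the graph on which a left helper SB with $\udi=\underline 0$ runs its iterations, since (Section~\ref{Sub:SG_DE}, Figure~\ref{Fig:SGgraph}(b)) the outgoing coupling checks of a left helper -- the $\Bright$ rows -- do not participate in the SB's DE iterations and are used only at the end to form $\udo$ via \eqref{Eq:udelO}. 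Hence the two DE trajectories are identical, and the target's VN erasure rates vanish if and only if those of the left helper do.

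Next I would apply Proposition~\ref{Prop:del>0Gen} to the left helper with $\udi=\underline 0$: all its VN erasure rates are zero if and only if $\udo=\Delta_L(\epsilon,\underline 0)=\underline 0$. Chaining the two equivalences, for every $\epsilon\in[0,1]$ the target decodes with $(\udel_L,\udel_R)=(\underline 0,\underline 1)$ if and only if $\Delta_L(\epsilon,\underline 0)=\underline 0$. Since BEC DE is monotone in the channel parameter \cite{RichUrb}, each side holds on a downward-closed subset of $[0,1]$; the two subsets coincide, so their suprema agree, and the definitions of $\epsilon^*(\cdot,\cdot)$ and of $\epsilon^*_3$ then give $\epsilon^*(\underline 0,\underline 1)=\epsilon^*_3$. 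The statement $\epsilon^*(\underline 1,\underline 0)=\epsilon^*_3$ for right helper SBs follows from the mirror-image argument, exchanging throughout the roles of $\Bleft$ and $\Bright$ (and of the target's left and right neighbors), so that now a right helper with $\udi=\underline 0$ plays the role above.

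The step I expect to be the main obstacle is the first one: making rigorous that ``a $\Bright$ coupling check fed the all-erasure value,'' as seen by the target, and ``a $\Bright$ coupling check omitted from the iterations,'' as seen by a left helper, induce the same VN update map, and then carrying out the small amount of boundary bookkeeping needed to pass from the pointwise equivalence of the two decoding conditions to equality of the thresholds themselves. Both are routine once the BEC DE equations are written out explicitly, but this is where the argument needs care.
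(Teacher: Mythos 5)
Your proposal is correct and follows essentially the same route as the paper's proof: it uses Proposition~\ref{Prop:del>0Gen} to equate ``zero outgoing DE values'' with ``successful decoding'' of a helper fed $\udi=\underline 0$, and identifies that helper's DE with the target's DE under $(\udel_L,\udel_R)=(\underline 0,\underline 1)$. The only difference is that you make explicit two points the paper leaves implicit -- that a coupling check fed the all-erasure value is inert and hence equivalent to an omitted outgoing check, and the monotonicity bookkeeping needed to pass from the pointwise equivalence to equality of the thresholds.
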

\begin{proof}
	Recall that $ \epsilon^*_3 $ is largest channel parameter such that a helper SB that receives zero DE values from the previous decoded helper SB outputs zero DE values to the next SB. In view of Proposition~\ref{Prop:del>0Gen}, we get that $ \epsilon^*_3 $  is the largest channel parameter such that a helper SB is successfully decoded with zero incoming DE values. Since helper SBs receive information from one side only (they output information to the other side), then $  \epsilon^*_3 =  \epsilon^*(\underline 0,\underline 1)$ for left helper SBs and $  \epsilon^*_3 =  \epsilon^*(\underline 1,\underline 0)$ for right helper SBs.
\end{proof}
Proposition~\ref{Prop:s3} provides a simple way to calculate $ \epsilon^*_3 $. Consider a SB given by $ [B_{\text{left}};B_{\text{loc}};B_{\text{right}}] $. Zero incoming DE values means that all incoming CCs are fully available, i.e., they can be treated as LCs. In addition, outgoing CCs are not valid.
We get that for left and right helper SBs
\begin{subequations}
\begin{align}\label{Eq:s3leftCalc}
\epsilon^*_3 =\epsilon^*([B_{\text{left}};B_{\text{loc}}]),
\end{align}
and
\begin{align}\label{Eq:s3rightCalc}
\epsilon^*_3 =\epsilon^*([B_{\text{loc}};B_{\text{right}}]),
\end{align}
respectively.	
\end{subequations}

We add another SB threshold that will be used in the sequel
\begin{align}\label{Eq:s4}
\epsilon^*_4=\epsilon^*(\underline 0,\underline 0).
\end{align}
It can be verified that
\begin{align}\label{Eq:order_eps*}
\epsilon^*_1\leq \epsilon^*_2\leq \epsilon^*_3\leq \epsilon^*_4.
\end{align}

The following result is the parallel of Proposition~\ref{Prop:delta(eps)} applying to the target SB.
\begin{proposition}\label{Prop:targetTh}
	Let $ \underline\phi(\epsilon),\underline\psi(\epsilon)$ be defined as in Proposition~\ref{Prop:delta(eps)}. Then, for every $ \epsilon\in[0,1] $
	\begin{align}\label{Eq:TargetTh}
	\epsilon^*\left (\underline\phi(\epsilon),\underline\psi(\epsilon)\right )<\epsilon^*_{l,r}.
	\end{align}
\end{proposition}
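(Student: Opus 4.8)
The plan is to follow the template of the proof of Proposition~\ref{Prop:delta(eps)}, now run on the target graph, and then sharpen its conclusion from ``decoding fails at channel parameter $\epsilon$'' to ``decoding fails at some channel parameter strictly below $\epsilon^*_{l,r}$''. First, reductions. For $\epsilon<\epsilon^*_{l,r}$ one has $x(\epsilon)=0$, hence $\underline\phi(\epsilon)=\underline\psi(\epsilon)=\underline 0$; on $[\epsilon^*_{l,r},1]$ the map $x(\cdot)$ is nondecreasing, so $\epsilon\mapsto\underline\phi(\epsilon)$ and $\epsilon\mapsto\underline\psi(\epsilon)$ are nondecreasing componentwise. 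Since the target threshold $\epsilon^*(\cdot,\cdot)$ is nonincreasing in each argument (more incoming erasure on a coupling check can only hurt, by monotonicity of density evolution), it suffices to prove $\epsilon^*\bigl(\underline\phi(\epsilon^*_{l,r}),\underline\psi(\epsilon^*_{l,r})\bigr)<\epsilon^*_{l,r}$; every $\epsilon>\epsilon^*_{l,r}$ then follows by monotonicity, while the range $\epsilon<\epsilon^*_{l,r}$, where the assertion reduces to $\epsilon^*_4=\epsilon^*(\underline 0,\underline 0)<\epsilon^*_{l,r}$, is discussed at the end.

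Write $\epsilon_0:=\epsilon^*_{l,r}$ and $x_0:=x(\epsilon_0)>0$. Feeding the target $\udel_L=\underline\phi(\epsilon_0)$ and $\udel_R=\underline\psi(\epsilon_0)$, the identities \eqref{Eq:LCC2VN}--\eqref{Eq:LC2VN} from the proof of Proposition~\ref{Prop:delta(eps)} show that when every VN-to-CN message equals $x_0$, each local check and each left or right coupling check (with its fixed value on the virtual edge) emits exactly $1-(1-x_0)^{r-1}$, so the VN update reproduces \eqref{Eq:xs}. Hence the all-$x_0$ message vector is an SG-DE fixed point of the target's density evolution; being nontrivial, it makes decoding the target at channel parameter $\epsilon_0$ with these incoming values fail, which already gives $\epsilon^*\bigl(\underline\phi(\epsilon_0),\underline\psi(\epsilon_0)\bigr)\le\epsilon_0$.

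The remaining, and main, difficulty is to turn this into a \emph{strict} inequality: with the incoming values held fixed at $\underline\phi(\epsilon_0),\underline\psi(\epsilon_0)$, a nontrivial fixed point must persist for channel parameters just below $\epsilon^*_{l,r}$. The lever is that a coupling check fed a fixed value responds to a perturbation of its incident VN messages strictly less than a degree-$r$ check: at $y=x_0$ the $y$-derivative of $1-(1-\phi_i(\epsilon_0))(1-y)^{d_i-1}$ equals $(d_i-1)(1-x_0)^{r-2}$ and that of $1-(1-\psi_i(\epsilon_0))(1-y)^{r-d_i-1}$ equals $(r-d_i-1)(1-x_0)^{r-2}$, both strictly below the degree-$r$ value $(r-1)(1-x_0)^{r-2}$, whereas the local checks sit exactly at that value and the $(l,r)$-regular recursion is pinned precisely at the edge of stability at $\epsilon_0$ (its fixed-point derivative equals $1$). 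Therefore the linearization of the target's DE map at the all-$x_0$ fixed point is strictly contracting (spectral radius $<1$, using also connectedness of the single-SB graph), so $I-Dg$ is invertible, the implicit function theorem produces a continuous branch of nontrivial fixed points for $\epsilon$ near $\epsilon_0$, decoding fails for some $\epsilon<\epsilon^*_{l,r}$, and monotonicity of decoding success in $\epsilon$ yields $\epsilon^*\bigl(\underline\phi(\epsilon_0),\underline\psi(\epsilon_0)\bigr)<\epsilon^*_{l,r}$. Equivalently, one can sandwich the DE between perturbations of the uniform states all-$(x_0-\eta)$ and all-$x_0$, using the same derivative estimates to show the one-step map lies above the diagonal on a small interval for $\epsilon$ slightly below $\epsilon_0$, which again forces a nontrivial fixed point.

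I expect this last step to be the delicate one. The target's density evolution is genuinely multidimensional — the VN-to-CN messages are not uniform across edge types, as is already visible in the single-SB graph $\bigl(\Bleft;\Bloc;\Bright\bigr)$ of the running example — so the clean one-dimensional tangency picture for $(l,r)$-regular codes must be replaced by the Jacobian (or monotone-operator) comparison above; care is also needed because the way decoding fails just below $\epsilon^*_{l,r}$ differs according to whether $l-t=2$ (the origin itself becomes linearly unstable, each VN then having only one spare local check against the constant messages emitted by the coupling checks) or $l-t\ge 3$ (the origin stays stable and failure is due solely to the surviving positive fixed point). Finally, the regime $\epsilon<\epsilon^*_{l,r}$ needs a separate argument, since there the monotonicity reduction runs the wrong way and the claim amounts to the independent statement $\epsilon^*_4<\epsilon^*_{l,r}$.
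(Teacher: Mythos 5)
Your first two paragraphs are exactly the paper's proof: reuse the computation \eqref{Eq:LCC2VN}--\eqref{Eq:LC2VN} from Proposition~\ref{Prop:delta(eps)} to show that, with incoming values $\underline\phi(\epsilon),\underline\psi(\epsilon)$, the all-$x(\epsilon)$ message vector is an SG-DE fixed point of the target; since $x(\epsilon)>0$ for $\epsilon>\epslr$, target decoding at channel parameter $\epsilon$ fails, hence $\epsilon^*\bigl(\underline\phi(\epsilon),\underline\psi(\epsilon)\bigr)<\epsilon$ for every $\epsilon>\epslr$. The paper stops there and declares \eqref{Eq:TargetTh} proved. Everything in your third paragraph --- the derivative comparison between coupling and local checks, the spectral-radius/contraction claim, the implicit-function-theorem continuation of the nontrivial fixed point to channel parameters just below $\epslr$ --- is a genuinely different and much heavier continuation, aimed at the literal uniform bound against the constant $\epslr$ (including strictness at $\epsilon=\epslr$). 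That extra machinery looks sound (the linearization's row sums are $1$ through local checks and strictly smaller through coupling checks, so irreducibility gives spectral radius below $1$), and it does repair the paper's slightly loose final sentence, but it buys nothing operationally: the only place the proposition is invoked, in the proof of Theorem~\ref{Th:eps*SG}, is for $\epsilon>\epsilon^*_3>\epslr$, where the chain $\epsilon^*(\udel_L,\udel_R)\le\epsilon^*\bigl(\underline\phi(\epsilon),\underline\phi(\epsilon)\bigr)<\epsilon$ already follows from the simple fixed-point step.

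Your closing caveat is also correct and worth keeping: neither your argument nor the paper's covers $\epsilon<\epslr$, where $\underline\phi(\epsilon)=\underline\psi(\epsilon)=\underline 0$ and \eqref{Eq:TargetTh} would assert $\epsilon^*_4=\epsilon^*(\underline 0,\underline 0)<\epslr$. That is not just unproved but generally false, since $\epsilon^*_4\ge\epsilon^*_3$ by \eqref{Eq:order_eps*} and $\epsilon^*_3\ge\epslr$ for the constructions of interest (see the remark after Theorem~\ref{Th:eps*SG}). The quantifier ``for every $\epsilon\in[0,1]$'' should therefore be read as ``for every $\epsilon>\epslr$'', with the conclusion $\epsilon^*\bigl(\underline\phi(\epsilon),\underline\psi(\epsilon)\bigr)<\epsilon$ --- which is precisely what the paper's proof establishes and what is used downstream.
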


\begin{proof}
	Using the same arguments in the proof of Proposition~\ref{Prop:delta(eps)}, we get that if the left and right incoming DE values are $ \underline\phi(\epsilon)$ and $\underline\psi(\epsilon)$, respectively, then $ x(\epsilon) $ is a semi-global fixed point. Since $ x(\epsilon)>0 $ for every $ \epsilon>\epslr $, then  target decoding fails. Hence, $ \epsilon>  \epsilon^*\left (\underline\phi(\epsilon),\underline\psi(\epsilon)\right )$. Since this is true for all $ \epsilon>\epslr $, then \eqref{Eq:TargetTh} holds.	
\end{proof}

\begin{remark}
	For symmetric SBs (Definition~\ref{Def:SymSB}) we have $ \epsilon^*(\udel_L, \udel_R) = \epsilon^*(\udel_R, \udel_L) $, for every $ \udel_L,\udel_R\in [0,1]^t $.
\end{remark}
%
%

\section{Memoryless Channels: Threshold Analysis} \label{Sec:Thrsholds}
In this section, we characterize and calculate the semi-global thresholds of SC-LDPCL protographs. We focus in this section on memoryless channels, before generalizing to channels with memory in the next section.
Formally, the semi-global threshold $ \epsilon^*_{SG}(m,d) $ is defined as the maximal erasure rate such that SG decoding successfully decodes SB $ m $ with $ d $ helper SBs. In general, semi-global thresholds depend on the protograph (i.e., local graph, coupling checks and total number of SBs), the position of the target SB $ m $, and on the number of helper SBs accessed during semi-global decoding $ d $. However, in this section, we study the case of many helper SBs (in the limit where $ d\to \infty $) and show that in this case semi-global thresholds depend solely on the single-SB structure and on the question whether the first accessed helper SBs are termination SBs (i.e., at the endpoint of the coupled chain of SBs) or not. In what follows, we use a binary parameter $ \tau\in\{0,1\} $ to mark the first-accessed-helper type: termination ($\tau= 0 $) or non-termination ($\tau= 1 $). While needing to start from a termination SB is undesirable in practical SG decoding of a random target SB, analyzing this mode is useful for ``pseudo-termination" SBs defined in the next section for channels with variability. The results in this section apply to unit-memory binary-regular SC-LDPC codes, but for terseness we keep this implicit in most of the result statements.
	
Our ultimate goal in this section is to characterize the SG thresholds defined by
\begin{definition}\label{Def:esgj}
	Let $ \tau_L,\tau_R\in\{0,1\} $ indicate termination SBs ($ \tau=0 $) or not ($ \tau=1 $) from left and right of the target SB, respectively. We define $ \esgj $ as the largest channel parameter such that SG successfully decodes the target SB in the limit of $ d\to\infty $ helper SBs.
\end{definition}

For simplicity of the derivations, we assume that the SBs in the code are symmetric (see Definition~\ref{Def:SymSB}). Consequently, we remove the subscripts that were used to indicate left and right helper SBs (i.e., use $ \Delta  $ to mark both $ \Delta_L $ and $ \Delta_R $). In addition, in view of \eqref{Eq:delUdelD}, we have that for every $ \epsilon\in[0,1] $, $ \underline\phi(\epsilon)=\underline\psi(\epsilon) $.

For $\tau\in\{0,1\} $, let $ \underline\delta_{0}^{(\tau)}(\epsilon),\underline\delta_{1}^{(\tau)}(\epsilon) ,\underline\delta_{2}^{(\tau)}(\epsilon) ,\ldots $ be the sequence of inter-SB DE values between helper SBs during semi-global decoding, with $ \tau=0$ if decoding starts in a termination SB, and $\tau= 1 $ otherwise, i.e.,
\begin{align}\label{Eq:Delta}
\begin{split}
&\udel_{i+1}^{(\tau)}(\epsilon)=\Delta\left (\epsilon, \udel_{i}^{(\tau)}\right ) ,\quad i\geq 0,\,\tau=0,1,\\
&\udel_{0}^{(0)} = \underline 0,\quad \udel_{0}^{(1)} =\underline  1,
\end{split}
\end{align}
where $ \Delf{\epsilon} $ is the erasure-transfer function given in Section~\ref{Sec:SingleSB}.  
From the monotonicity of the DE equations, for every $ \epsilon $, $  \Delf{\epsilon} $ is monotonically non-decreasing in $ \udel $. Consequently, for $ \tau=0,1 $ the sequences $ \{ \udel_{i}^{(\tau)}(\epsilon)\}_{i\geq 0} $ (which are bounded by $[0,1] $) converge to some limit value.
\begin{definition}\label{Def:del_hat}
	For $ \tau=0,1 $, let $  {\hat{\udel}}^{(\tau)} (\epsilon)=\lim_{i\to \infty} \udel_i^{(\tau)}(\epsilon)$.
\end{definition}
In view of \eqref{Eq:Delta} and Definition~\ref{Def:del_hat}, for every $ \epsilon\in [0,1] $, $\Delta\left (\epsilon,\dhat{\tau}\right ) = \dhat{\tau}$. 

\begin{lemma}\label{Lemma:delhat}
	For every $ \epsilon\in[0,1] $, $ \dhat{1}\succ \underline\phi(\epsilon) $, where $\underline\phi(\epsilon) $ is defined in \eqref{Eq:delUdelD}.
\end{lemma}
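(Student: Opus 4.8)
The plan is to show that the fixed point $\dhat{1}$, which is the decreasing limit of the sequence $\udel_i^{(1)}$ starting at $\underline 1$, dominates the special fixed point $\underline\phi(\epsilon)$ strictly. First I would dispose of the easy regime $\epsilon<\epslr$: here $x(\epsilon)=0$, so $\underline\phi(\epsilon)=\underline 0$, and I must instead argue $\dhat{1}\succ\underline 0$, i.e. $\dhat{1}$ is not the all-zero vector. This should follow from the relation between $\epsilon^*_{l,r}$ and the SB thresholds: since $\Delta(\epsilon,\underline 1)$ with a non-termination start cannot collapse to $\underline 0$ below $\epsilon^*_2$ only in the trivial range, and more to the point, for $\epsilon\ge\epsilon^*_1$ (note $\epsilon^*_1=\epsilon^*(B_\mathrm{loc})\le\epslr$ is not automatic — I would need to check the ordering, but $\epsilon^*_1\le\epsilon^*_2\le\epsilon^*_3\le\epsilon^*_4$ from \eqref{Eq:order_eps*}) local decoding from $\underline 1$ fails and the fixed point is nonzero; for $\epsilon<\epsilon^*_1$ one has $\dhat{1}=\underline 0=\underline\phi(\epsilon)$ and the claimed \emph{strict} inequality would be false, so I expect the lemma is implicitly restricted to (or only used for) $\epsilon\ge\epsilon^*_{l,r}$, or the intended reading is $\succeq$ in the degenerate range. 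I would resolve this by focusing the argument on $\epsilon\ge\epslr$, where $x(\epsilon)>0$ and $\underline\phi(\epsilon)\succ\underline 0$.

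The core argument for $\epsilon\ge\epslr$ proceeds by monotonicity and a sandwiching. By Proposition~\ref{Prop:delta(eps)}, $\Delta(\epsilon,\underline\phi(\epsilon))\succeq\underline\phi(\epsilon)$; I would iterate this to get $\Delta^{(k)}(\epsilon,\underline\phi(\epsilon))\succeq\underline\phi(\epsilon)$ for all $k$, using that $\Delta(\epsilon,\cdot)$ is monotone non-decreasing (stated just after Definition~\ref{Def:del_hat}). On the other hand $\underline\phi(\epsilon)\preceq\underline 1$, so by monotonicity $\Delta^{(k)}(\epsilon,\underline\phi(\epsilon))\preceq\Delta^{(k)}(\epsilon,\underline 1)=\udel_k^{(1)}(\epsilon)$. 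Combining, $\underline\phi(\epsilon)\preceq\udel_k^{(1)}(\epsilon)$ for every $k$, and taking $k\to\infty$ gives $\underline\phi(\epsilon)\preceq\dhat{1}$. That yields the non-strict bound; to upgrade to strict I would argue that $\underline\phi(\epsilon)\ne\dhat{1}$. Suppose for contradiction $\dhat{1}=\underline\phi(\epsilon)$; then running the target/helper DE with incoming values $\underline\phi(\epsilon)$ reproduces, by the computation in the proof of Proposition~\ref{Prop:delta(eps)} (equations \eqref{Eq:LCC2VN}--\eqref{Eq:LC2VN}), the fixed point $x(\epsilon)$ as a \emph{stable} attractor reached from $\underline 1$; but one must check this is incompatible with the actual limiting behavior — specifically, the helper DE from $\underline 1$ down along $\udel_i^{(1)}$ would then also have its internal VN values converge to exactly $x(\epsilon)$, and I would derive a contradiction from strict monotonicity of the scalar map $f(x)=\epsilon(1-(1-x)^{r-1})^{l-1}$ at $x(\epsilon)$ combined with the fact that the helper graph has the \emph{full} right side ($\udel_R=\underline 1$) available, which strictly lowers some VN erasure below $x(\epsilon)$.

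The main obstacle I anticipate is exactly this strictness upgrade. The inequality $\underline\phi(\epsilon)\preceq\dhat{1}$ is a clean two-line sandwich; turning $\preceq$ into $\prec$ requires understanding \emph{why} the helper-SB fixed point reached from $\underline 1$ is strictly larger than the synthetic fixed point $\underline\phi(\epsilon)$ constructed in Proposition~\ref{Prop:delta(eps)}. The cleanest route is probably: in the helper DE the incoming side carries $\udel_I = \dhat{1}$ while the outgoing side is "cut" (the dashed edges of Figure~\ref{Fig:SGgraph}(b)), so the VNs see strictly fewer check constraints than in the target computation that produced the $x(\epsilon)$ fixed point with \emph{both} sides loaded as $\underline\phi(\epsilon)=\underline\psi(\epsilon)$; hence the helper's internal erasure profile is element-wise $\ge x(\epsilon)$ with at least one strict coordinate (this is essentially the content of case 2 in the proof of Proposition~\ref{Prop:del>0Gen} and of the lower-bound step at the end of Proposition~\ref{Prop:delta(eps)}'s proof). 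Pushing that strict gap through \eqref{Eq:udelO} to the outgoing coordinates, and noting a fixed point must be consistent with this, would force $\dhat{1}\ne\underline\phi(\epsilon)$, completing the proof. I would also double-check the boundary case $\epsilon=\epsilon^*_{l,r}$ separately, since there $x(\epsilon)=0$ but $\dhat{1}$ may still be nonzero depending on whether $\epsilon^*_{l,r}$ is $\le$ or $>$ the relevant SB thresholds.
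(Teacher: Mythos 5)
Your core argument is exactly the paper's proof: the paper proves by induction on $i$ that $\udel_i^{(1)}\succeq\underline\phi(\epsilon)$, using $\udel_{i+1}^{(1)}=\Delta(\epsilon,\udel_i^{(1)})\succeq\Delta(\epsilon,\underline\phi(\epsilon))\succeq\underline\phi(\epsilon)$ (monotonicity of DE plus Proposition~\ref{Prop:delta(eps)}), and then passes to the limit --- your two-line sandwich is the same computation. Where you diverge is in trying to upgrade $\succeq$ to $\succ$. Your instinct that something is off there is correct: as you observe, for $\epsilon<\min(\epsilon^*_{l,r},\epsilon^*_2)$ both $\underline\phi(\epsilon)$ and $\dhat{1}$ equal $\underline 0$, so the strict relation in the lemma's statement cannot hold in general. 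But the paper's own proof establishes only $\succeq$, and the only place the lemma is invoked (the proof of Theorem~\ref{Th:eps*SG}) uses it in the form $\dhat{0}=\dhat{1}\succeq\underline\phi(\epsilon)$. So the elaborate contradiction argument you sketch for strictness --- which you rightly flag as the main obstacle and leave unfinished --- is neither needed nor achievable in the degenerate range; you should simply prove the non-strict bound and note that the $\succ$ in the statement is best read as $\succeq$ (or as holding only for $\epsilon$ above the relevant thresholds).
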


\begin{proof}
We will prove by a mathematical induction on $i $ that 
for every 
$ i\geq0 $, $ \udel_{i}^{(1)}\succeq \underline\phi(\epsilon) $. Consequently, this will prove the Lemma.
For $ i=0 $ 
we have $ \udel_{0}^{(1)} =  \underline 1 \succeq \underline \phi(\epsilon)  $. 
Now, assume correctness for some $ i\geq0 $. Then, 
\begin{align*}
\udel_{i+1}^{(1)} 
&\mathop{=}\Delta(\epsilon,\udel_{i}^{(1)}) \\
&\mathop{\succeq}\limits^{(1)} \Delta(\epsilon, \underline\phi(\epsilon)) \\
&\mathop{\succeq}\limits^{(2)} \phi(\epsilon),
\end{align*}
where $ (1) $ is the induction assumption combined with monotonicity of DE, and $ (2) $ follows from Proposition~\ref{Prop:delta(eps)}.
\end{proof}

The following proposition characterizes the SG thresholds defined in Definition~\ref{Def:esgj}.
\begin{proposition}\label{Prop:esgj}
		Let $ \tau_L,\tau_R\in\{0,1\} $ indicate termination SBs from left and right of the target SB, respectively. Then, the semi-global thresholds are given by
		\begin{align}\label{Eq:eps_SG}
		\esgj = \max\left \{\epsilon\colon\epsilon<\epsilon^*\left ( \dhat{\tau_L},\dhat{\tau_R}\right ) \right \},
		\end{align}
		where $ \epsilon^*(\cdot,\cdot) $ and $ \dhat{\tau} $ are the target threshold defined in Section~\ref{Sub:TargetTh}, and the limit incoming DE value to the target from Definition~\ref{Def:del_hat}, respectively.
\end{proposition}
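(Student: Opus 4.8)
The plan is to show the two-sided inequality implicit in \eqref{Eq:eps_SG}: namely that SG decoding with $d\to\infty$ helpers succeeds whenever $\epsilon<\epsilon^*\!\left(\dhat{\tau_L},\dhat{\tau_R}\right)$ and fails whenever $\epsilon>\epsilon^*\!\left(\dhat{\tau_L},\dhat{\tau_R}\right)$. The key structural observation, justified by the semi-global DE description in Section~\ref{Sub:SG_DE}, is that the inter-SB DE values reaching the target from the left (resp.\ right) are obtained by iterating $\Delta(\epsilon,\cdot)$ exactly $d/2$ (or thereabouts) times starting from $\udel_0^{(\tau_L)}$ (resp.\ $\udel_0^{(\tau_R)}$), where the starting vector is $\underline 0$ for a termination first helper and $\underline 1$ for a non-termination one. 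As $d\to\infty$ the number of helper iterations on each side tends to infinity, so by Definition~\ref{Def:del_hat} the incoming DE values to the target converge to $\dhat{\tau_L}$ and $\dhat{\tau_R}$.

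First I would make the monotonicity picture precise. Since $\Delta(\epsilon,\cdot)$ is non-decreasing (monotonicity of DE), the helper sequences $\{\udel_i^{(\tau)}(\epsilon)\}_{i\ge0}$ are monotone — non-increasing for $\tau=1$ (starting from $\underline 1$, the largest possible vector) and non-decreasing for $\tau=0$ (starting from $\underline 0$) — hence they converge monotonically to $\dhat{\tau}$, and in particular every finite iterate dominates (for $\tau=1$) or is dominated by (for $\tau=0$) the limit in the $\preceq$ order. This lets me sandwich the actual incoming DE values to the target after finitely many helpers between the limit $\dhat{\tau}$ and the extreme starting value, and pass to the limit using continuity of $\Delta$ in its second argument. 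Then, because the target threshold $\epsilon^*(\udel_L,\udel_R)$ is (by its definition in Section~\ref{Sub:TargetTh}) monotone non-increasing in each of $\udel_L,\udel_R$ with respect to $\preceq$, I can relate success of target decoding at finite $d$ to success at the limiting incoming values. For $\epsilon<\epsilon^*(\dhat{\tau_L},\dhat{\tau_R})$: the limiting incoming values already permit target decoding, and by monotone convergence together with the fact that this strict inequality leaves room, for $d$ large enough the true finite-$d$ incoming values are close enough to $\dhat{\tau_L},\dhat{\tau_R}$ that target DE still converges to $\underline 0$; hence $\epsilon\le\esgj$. For $\epsilon>\epsilon^*(\dhat{\tau_L},\dhat{\tau_R})$: here I would argue that the true incoming values, being dominated (when $\tau=0$) or dominating (when $\tau=1$) by the limit, are still "at least as bad" in the relevant sense, so target decoding fails for every $d$; hence $\epsilon>\esgj$. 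Combining the two gives $\esgj=\sup\{\epsilon:\epsilon<\epsilon^*(\dhat{\tau_L},\dhat{\tau_R})\}$, which is exactly \eqref{Eq:eps_SG}.

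The subtle point — and the step I expect to be the main obstacle — is the case $\tau=1$ (non-termination first helper). There the helper sequence decreases from $\underline 1$ toward $\dhat{1}$, so the finite-$d$ incoming values are \emph{larger} than $\dhat{1}$, i.e.\ \emph{worse} for the target. One must show this does not spoil the achievability direction: the target still decodes at finite $d$ when $\epsilon<\epsilon^*(\dhat{1},\dhat{1})$ (in the two-sided-$\tau=1$ case). This requires that $\epsilon^*(\udel_L,\udel_R)$ is not merely monotone but behaves well under limits — concretely, that for $\epsilon$ strictly below $\epsilon^*(\dhat{\tau_L},\dhat{\tau_R})$ the target DE fixed point is $\underline 0$ with a "gap," so that a sufficiently small perturbation of the incoming values (which is what large $d$ guarantees, by $\udel_d^{(1)}\to\dhat1$) keeps the fixed point at $\underline 0$. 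I would handle this via the standard continuity/limit argument for DE fixed points on the BEC: the map $(\epsilon,\udel_L,\udel_R)\mapsto$ (target DE limit) is upper semicontinuous and monotone, and a threshold is never attained as a maximum but as a supremum — which is precisely why the statement is phrased with $\max\{\epsilon:\epsilon<\epsilon^*(\cdots)\}$ rather than $\epsilon^*(\cdots)$ itself. The role of Lemma~\ref{Lemma:delhat} is to guarantee $\dhat1\succ\underline\phi(\epsilon)$, so that by Proposition~\ref{Prop:targetTh}-type reasoning $\epsilon^*(\dhat{\tau_L},\dhat{\tau_R})<\epsilon^*_{l,r}$; this keeps the SG threshold strictly below the global LDPC-ensemble threshold and rules out degenerate behavior at the relevant $\epsilon$ range.
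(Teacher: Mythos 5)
Your proposal is correct and follows essentially the same route as the paper: the paper's proof simply asserts that in the limit $d\to\infty$ the target ``sees'' the incoming DE values $\dhat{\tau_L}$ and $\dhat{\tau_R}$ and then invokes the definition of the target threshold $\epsilon^*(\cdot,\cdot)$, which is exactly your reduction (your monotone-convergence and semicontinuity considerations make rigorous what the paper leaves implicit). One small caution: in your converse direction for $\tau=0$ the finite-$d$ incoming values are \emph{dominated by} (hence better than) $\dhat{0}$, not ``at least as bad,'' so failure at every finite $d$ does not follow; the conclusion still stands because $\esgj$ is defined via the $d\to\infty$ limit, which is all the paper uses.
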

\begin{proof}
Since we consider the limit of many helper SBs, then for every channel parameter $ \epsilon\in[0,1] $, the target SB "sees" incoming  DE values given by $  \dhat{\tau_L} $ and $  \dhat{\tau_R} $ from the left and right, respectively. Consequently, the threshold at the target is $ \epsilon^*\left ( \dhat{\tau_L},\dhat{\tau_R}\right ) $. By definition, the target is decoded successfully if and only if the channel parameter is below the target threshold, i.e., $ \epsilon < \epsilon^*\left ( \dhat{\tau_L},\dhat{\tau_R}\right )$. Finally, the SG thresholds are defined as the largest such channel parameter.
\end{proof}

Clearly, starting from a termination SB -- as opposed to starting from a non-terminating SB -- can only help. Thus, from the symmetry-of-SBs assumption,
\begin{align}\label{Eq:order_SG}
\esg{1}{1}\leq\esg{0}{1}=\esg{1}{0}\leq\esg{0}{0} .
\end{align}

We now turn to evaluate $\esgj  $. Let us first address the non-termination case in which from both sides of the target the first helper is \emph{not} a termination SB, i.e., $ \udel_0=\underline 1 $.

\begin{lemma}\label{Lemma:esg11}
	$ \esg{1}{1}\geq\epsilon^*_2  $, where $ \epsilon^*_2 $ is given in \eqref{Eq:s2}. 
\end{lemma}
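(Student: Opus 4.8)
The plan is to show that if $\epsilon<\epsilon^*_2$, then the non-termination inter-SB DE sequence $\{\udel_i^{(1)}(\epsilon)\}_{i\geq 0}$ started at $\udel_0^{(1)}=\underline 1$ converges to $\underline 0$, which forces the limit incoming value $\dhat{1}=\underline 0$ at the target, and hence by Proposition~\ref{Prop:esgj} the SG threshold $\esg{1}{1}$ is at least $\epsilon^*_2$. The core of the argument is the definition of $\epsilon^*_2$ in \eqref{Eq:s2}: for every $\epsilon<\epsilon^*_2$ and every $\udel\in[0,1]^t$ we have $\Delta(\epsilon,\udel)\prec\udel$. In particular, applied to $\udel=\underline 1$, this already shows the sequence is strictly decreasing at the first step, and then monotonicity of DE (each $\Delta(\epsilon,\cdot)$ is non-decreasing in $\udel$) propagates strict decrease along the whole sequence, so $\{\udel_i^{(1)}(\epsilon)\}_{i\geq 0}$ is a coordinate-wise non-increasing sequence in the compact set $[0,1]^t$ and therefore converges to some limit $\dhat{1}=\dhat{1}(\epsilon)$.

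First I would invoke Definition~\ref{Def:del_hat} (or directly the fixed-point remark following it) to note that the limit $\dhat{1}$ satisfies $\Delta(\epsilon,\dhat{1})=\dhat{1}$. Then I would argue that the only such fixed point consistent with $\epsilon<\epsilon^*_2$ is $\underline 0$: if $\dhat{1}\neq\underline 0$ then \eqref{Eq:s2} applied with $\udel=\dhat{1}$ gives $\Delta(\epsilon,\dhat{1})\prec\dhat{1}$, contradicting $\Delta(\epsilon,\dhat{1})=\dhat{1}$. Hence $\dhat{1}=\underline 0$ for every $\epsilon<\epsilon^*_2$. (One should be slightly careful about whether the relation $\prec$ from \eqref{Eq:s2} is intended to hold for $\udel=\underline 0$ as well; since $\prec$ requires strict inequality in some coordinate, \eqref{Eq:s2} must implicitly be read as applying to nonzero $\udel$, or equivalently $\Delta(\epsilon,\underline 0)=\underline 0$ for $\epsilon<\epsilon^*_2$ by \eqref{Eq:order_eps*} since $\epsilon^*_2\leq\epsilon^*_3$. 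Either way the fixed point $\underline 0$ is the only candidate.)

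With $\dhat{1}=\underline 0$ established for $\epsilon<\epsilon^*_2$, I would finish by applying Proposition~\ref{Prop:esgj}: the SG threshold is $\esg{1}{1}=\max\{\epsilon\colon\epsilon<\epsilon^*(\dhat{1},\dhat{1})\}$, and since $\dhat{1}(\epsilon)=\underline 0$ for all such $\epsilon$, the target ``sees'' zero incoming DE values from both sides, so its threshold is $\epsilon^*(\underline 0,\underline 0)=\epsilon^*_4\geq\epsilon^*_2$ by \eqref{Eq:order_eps*}. Therefore every $\epsilon<\epsilon^*_2$ is below the target threshold and is successfully decoded in the $d\to\infty$ limit, giving $\esg{1}{1}\geq\epsilon^*_2$.

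The main obstacle I anticipate is a subtle point about order of limits and strictness. Proposition~\ref{Prop:esgj} phrases the SG threshold via $\epsilon^*(\dhat{\tau_L},\dhat{\tau_R})$, but $\dhat{1}$ itself depends on $\epsilon$; one needs the fact that $\dhat{1}(\epsilon)=\underline 0$ for the whole open interval $[0,\epsilon^*_2)$ uniformly, not merely for a single $\epsilon$, which the monotone-convergence plus fixed-point argument does supply. A secondary delicate issue is whether ``successfully decodes in the limit $d\to\infty$'' requires the inter-SB DE values to actually reach $\underline 0$ in finitely many steps (as in the definition of $q(\epsilon)$ in \eqref{Eq:q}) or merely converge to $\underline 0$; for the threshold statement convergence suffices, since the target's DE fixed point is continuous/monotone in the incoming values and a limit of $\underline 0$ incoming values yields successful target decoding whenever $\epsilon<\epsilon^*_4$. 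I would make this precise by a short continuity/monotonicity remark rather than a detailed computation.
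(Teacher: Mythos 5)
Your proof is correct and follows essentially the same route as the paper's: both arguments show that for $\epsilon<\epsilon^*_2$ the definition \eqref{Eq:s2} forces the limiting inter-SB values to satisfy $\dhat{1}=\underline 0$, so the target sees zero incoming DE values and succeeds because $\epsilon<\epsilon^*_2\leq\epsilon^*_4=\epsilon^*(\underline 0,\underline 0)$ by \eqref{Eq:order_eps*}. Your extra detail (the monotone-convergence and fixed-point argument ruling out any nonzero limit, and the remark about the strictness of $\prec$ at $\udel=\underline 0$) merely fills in what the paper compresses into one line.
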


\begin{proof}
	In view of \eqref{Eq:s2} and Definition~\ref{Def:del_hat}, 
	\begin{align*}
	\dhat{\tau}=\underline 0,\quad \forall \epsilon\in[0,\epsilon^*_2), \, \tau=0,1.
	\end{align*}
	Thus, for every $ \epsilon<\epsilon^*_2 $, the target receives full side information, i.e., $ \udel_L=\udel_R=\underline 0 $. From \eqref{Eq:s4}--\eqref{Eq:order_eps*} we have
	\begin{align*}
	\epsilon< \epsilon^*_4=\epsilon^*(0,0),
	\end{align*}
	so the target is successfully decoded; hence $ \epsilon<\esg{1}{1} $. Since this is true for all $ \epsilon<\epsilon^*_2 $, we conclude that $ \epsilon^*_2\leq \esg{1}{1} $.
\end{proof}

\begin{lemma} \label{Lemma:hat}
	$  \dhat{0} =\underline 0$ if and only if $ \epsilon<\epsilon^*_3$.
\end{lemma}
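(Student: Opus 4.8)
The plan is to prove both implications by relating the sequence $\{\underline\delta_i^{(0)}(\epsilon)\}_{i\geq 0}$ starting from $\underline\delta_0^{(0)}=\underline 0$ to the threshold $\epsilon^*_3$ defined by \eqref{Eq:s3}. Recall that $\epsilon^*_3$ is the largest channel parameter for which $\Delta(\epsilon,\underline 0)=\underline 0$, equivalently (by Proposition~\ref{Prop:del>0Gen}) the largest parameter for which a helper SB with zero incoming DE values decodes fully. So the engine of the argument is: zero input is preserved by $\Delta(\epsilon,\cdot)$ exactly when $\epsilon<\epsilon^*_3$, and the iteration in \eqref{Eq:Delta} for $\tau=0$ begins at $\underline 0$.

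For the ``if'' direction, suppose $\epsilon<\epsilon^*_3$. Then $\underline\delta_0^{(0)}=\underline 0$, and by \eqref{Eq:s3} we get $\underline\delta_1^{(0)}=\Delta(\epsilon,\underline 0)=\underline 0$. A trivial induction on $i$ (using that the iteration map is $\underline\delta\mapsto\Delta(\epsilon,\underline\delta)$ and the base value is preserved) gives $\underline\delta_i^{(0)}(\epsilon)=\underline 0$ for all $i\geq 0$, hence $\dhat{0}=\lim_{i\to\infty}\underline\delta_i^{(0)}(\epsilon)=\underline 0$.

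For the ``only if'' direction, I would argue the contrapositive: suppose $\epsilon\geq\epsilon^*_3$. By the definition of $\epsilon^*_3$ in \eqref{Eq:s3} (it is the \emph{largest} parameter for which the preservation property holds), for $\epsilon\geq\epsilon^*_3$ we have $\Delta(\epsilon,\underline 0)\neq\underline 0$, i.e., $\underline\delta_1^{(0)}(\epsilon)=\Delta(\epsilon,\underline 0)\succ\underline 0$ in at least one coordinate (note $\Delta$ outputs values in $[0,1]^t$, so ``$\neq\underline 0$'' means some coordinate is strictly positive). Now use monotonicity of $\Delta(\epsilon,\cdot)$ in its second argument (stated right before Definition~\ref{Def:del_hat}): since $\underline\delta_1^{(0)}\succeq\underline\delta_0^{(0)}=\underline 0$, applying $\Delta(\epsilon,\cdot)$ repeatedly gives $\underline\delta_{i+1}^{(0)}\succeq\underline\delta_i^{(0)}$ for all $i$, so the sequence is monotonically non-decreasing and in particular $\dhat{0}=\lim_i\underline\delta_i^{(0)}\succeq\underline\delta_1^{(0)}\succ\underline 0$, so $\dhat{0}\neq\underline 0$. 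This proves: $\dhat{0}=\underline 0$ implies $\epsilon<\epsilon^*_3$.

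The only subtlety — and the step I would be most careful about — is the ``only if'' direction at the boundary and the precise reading of how $\epsilon^*_3$ is defined as a supremum: one must ensure that $\epsilon\geq\epsilon^*_3$ genuinely forces $\Delta(\epsilon,\underline 0)\neq\underline 0$ (rather than merely ``not for all $\epsilon<\epsilon^*_3$''). This follows from the monotonicity of DE in the channel parameter $\epsilon$: the set of $\epsilon$ with $\Delta(\epsilon,\underline 0)=\underline 0$ is a down-set (if $\Delta(\epsilon,\underline 0)=\underline 0$ and $\epsilon'<\epsilon$ then $\Delta(\epsilon',\underline 0)=\underline 0$ too), whose supremum is exactly $\epsilon^*_3$; I may need to invoke, as is standard for DE fixed points on the BEC, that this supremum is not attained (open interval in \eqref{Eq:s3}), matching how $\epsilon^*_1,\epsilon^*_2$ are stated. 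With that, everything else is a routine two-line induction plus one application of monotonicity, so I do not anticipate a serious obstacle beyond pinning down this boundary convention consistently with the rest of Section~\ref{Sec:SingleSB}.
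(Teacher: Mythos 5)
Your proposal is correct and follows essentially the same route as the paper's proof: the forward direction is the same induction from $\underline\delta_0^{(0)}=\underline 0$ using \eqref{Eq:s3}, and the converse is the same monotonicity argument showing the sequence stays bounded away from $\underline 0$ once $\Delta(\epsilon,\underline 0)\succ\underline 0$. Your explicit attention to the boundary case $\epsilon=\epsilon^*_3$ is in fact slightly more careful than the paper, which only treats $\epsilon>\epsilon^*_3$ in its converse.
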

\begin{proof}
	Let $ \epsilon<\epsilon^*_3 $, and let $  \udel^{(0)}_0(\epsilon) =\underline 0$. In view of \eqref{Eq:s3}, the first helper outputs $ \udel^{(0)}_1=\underline 0 $. Applying this argument repeatedly, we get that $ \udel_i^{(0)}(\epsilon)=\underline 0$ for all $ i\geq 0 $. Consequently, $  \dhat{0} =\underline 0$.
	
	For the other direction, let $ \epsilon>\epsilon^*_3 $. By the definition of $ \epsilon^*_3 $ we have $ \udel_1^{(0)}(\epsilon)\succ \underline 0 = \udel_0^{(0)}(\epsilon)$. Since $ \Delta(\epsilon,\cdot) $ is monotonically non-decreasing, then 
	\begin{align*}
	\udel_2^{(0)}(\epsilon) 
	&= \Delta\left (\epsilon,\udel_1^{(0)}\right )\\
	&\succ \Delta\left (\epsilon,\udel_0^{(0)}\right )\\
	&= \udel_1^{(0)}.
	\end{align*}
	Repeating this argument for successive $ i>2  $ gives $ \udel_i^{(0)}(\epsilon)\succ\udel_1^{(0)}(\epsilon)\succ \underline 0$, for all $ i \geq 2$, and consequently$  \dhat{0} \succ \underline 0$.
\end{proof}

\begin{lemma}\label{Lemma:eps00}
	$\esg{0}{0} \leq  \epsilon^*_4$.
\end{lemma}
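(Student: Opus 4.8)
The plan is to derive Lemma~\ref{Lemma:eps00} from the characterization of $\esg{0}{0}$ in Proposition~\ref{Prop:esgj} together with the monotonicity of the target threshold in its incoming DE-value arguments (itself a consequence of the monotonicity of DE already used repeatedly above). Applying Proposition~\ref{Prop:esgj} with $\tau_L=\tau_R=0$ gives $\esg{0}{0}=\max\{\epsilon\colon \epsilon<\epsilon^*(\dhat{0},\dhat{0})\}$, so it suffices to show $\epsilon^*(\dhat{0},\dhat{0})\leq \epsilon^*_4$. Since $\epsilon^*_4=\epsilon^*(\underline 0,\underline 0)$ by \eqref{Eq:s4}, and $\dhat{0}\succeq\underline 0$ trivially, this in turn reduces to the claim that the target threshold $\epsilon^*(\udel_L,\udel_R)$ is element-wise non-increasing in $(\udel_L,\udel_R)$.

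I would establish that monotonicity claim by the standard DE domination argument: running the target-SB DE recursion with larger fixed incoming DE values dominates, iteration by iteration, the recursion with smaller incoming values (this is exactly the monotonicity property stated for $\Delf{\epsilon}$). Hence, for any $\epsilon$ at which the target decodes successfully (i.e., the target DE converges to the all-zero fixed point) with incoming values $(\dhat{0},\dhat{0})$, it also decodes successfully with the smaller incoming values $(\underline 0,\underline 0)$; consequently $\epsilon^*(\dhat{0},\dhat{0})\leq\epsilon^*(\underline 0,\underline 0)=\epsilon^*_4$. Chaining, $\esg{0}{0}\leq\epsilon^*(\dhat{0},\dhat{0})\leq\epsilon^*_4$.

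Equivalently, and more in the spirit of the neighboring lemmas, one can argue by contradiction through Lemma~\ref{Lemma:hat}: for any $\epsilon>\epsilon^*_4$ we have $\epsilon>\epsilon^*_3$ by the ordering \eqref{Eq:order_eps*}, so Lemma~\ref{Lemma:hat} gives $\dhat{0}\succ\underline 0$, i.e., the target sees strictly positive incoming DE values from both sides; by the monotonicity just described its threshold is $\epsilon^*(\dhat{0},\dhat{0})\leq\epsilon^*_4<\epsilon$, so SG decoding fails at $\epsilon$. Since this holds for every $\epsilon>\epsilon^*_4$, we conclude $\esg{0}{0}\leq\epsilon^*_4$.

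I do not expect a real obstacle here: the lemma is a short monotonicity corollary of Proposition~\ref{Prop:esgj} and DE monotonicity. The only point needing slight care is the meaning of the ``$\max$'' in the definition of $\esg{0}{0}$, since the set $\{\epsilon\colon\epsilon<\epsilon^*(\dhat{0},\dhat{0})\}$ need not attain its supremum; this is purely cosmetic and does not affect the stated inequality $\esg{0}{0}\leq\epsilon^*_4$.
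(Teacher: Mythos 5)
Your proposal is correct and follows essentially the same route as the paper: the paper's proof simply notes that for every $\epsilon>\epsilon^*_4$ the target fails even with the best-case incoming values $\udel_L=\udel_R=\underline 0$ (by the definition \eqref{Eq:s4}), hence SG decoding fails, which is exactly your monotonicity argument in compressed form. Your extra detour through Lemma~\ref{Lemma:hat} is unnecessary but harmless, and your remark about the supremum in Proposition~\ref{Prop:esgj} is indeed cosmetic.
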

\begin{proof}
	In view of \eqref{Eq:s4}, for every $ \epsilon> \epsilon^*_4 $ the target will fail to decode even if $ \udel_L=\udel_R=0 $. Thus, semi-global decoding will fail and $ \epsilon>\esg{0}{0} $. This is true for all $ \epsilon>  \epsilon^*_4 $, thus $ \esg{0}{0}\leq  \epsilon^*_4$.
\end{proof}

\begin{theorem}\label{Th:eps*SG}
	For every unit-memory binary-regular SC-LDPC protograph, 
	\begin{subequations}
		\begin{align}\label{Eq:epsSG1}
		 \esg{0}{1}\geq\epsilon^*_3 ,
		\end{align}
		and if $ \epsilon^*_{l,r}\leq \epsilon^*_3 $ and for every $ \epsilon>\epsilon^*_3  $, $ \dhat{0}=\dhat{1}$, then 
		\begin{align}\label{Eq:epsSG2}
		\esg{0}{0}=\esg{0}{1}=\epsilon^*_3. 
		\end{align}
	\end{subequations}
\end{theorem}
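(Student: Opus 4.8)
The plan is to establish the two parts of Theorem~\ref{Th:eps*SG} separately, using the chain of lemmas already in place together with Proposition~\ref{Prop:esgj}, which reduces everything to understanding $\dhat{\tau}$ and the target threshold $\epsilon^*(\cdot,\cdot)$.

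\textbf{Part 1: the bound \eqref{Eq:epsSG1}.} First I would fix $\epsilon<\epsilon^*_3$ and argue that $\esg{0}{1}\geq\epsilon^*_3$ by showing the target decodes successfully in the limit of many helpers. On the left side the first helper is a termination SB, so $\udel^{(0)}_0=\underline 0$; by Lemma~\ref{Lemma:hat}, since $\epsilon<\epsilon^*_3$, we get $\dhat{0}=\underline 0$, i.e.\ the target sees zero incoming DE values from the left. From the right side the incoming DE values are $\dhat{1}$, which in the worst case is $\underline 1$ (but in any case $\preceq\underline 1$). So the target threshold it must beat is $\epsilon^*(\underline 0,\dhat{1})\geq\epsilon^*(\underline 0,\underline 1)=\epsilon^*_3$ by monotonicity of DE and Proposition~\ref{Prop:s3}. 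Hence for every $\epsilon<\epsilon^*_3$ we have $\epsilon<\epsilon^*(\dhat{0},\dhat{1})$, so by Proposition~\ref{Prop:esgj}, $\epsilon<\esg{0}{1}$; since this holds for all such $\epsilon$, \eqref{Eq:epsSG1} follows.

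\textbf{Part 2: the equality \eqref{Eq:epsSG2}.} Under the hypotheses $\epsilon^*_{l,r}\leq\epsilon^*_3$ and $\dhat{0}=\dhat{1}$ for every $\epsilon>\epsilon^*_3$, I would first upper-bound $\esg{0}{0}$. For $\epsilon>\epsilon^*_3$, by Lemma~\ref{Lemma:hat} we have $\dhat{0}\succ\underline 0$, hence also $\dhat{1}=\dhat{0}\succ\underline 0$. Since $\epsilon>\epsilon^*_3\geq\epsilon^*_{l,r}$, we have $x(\epsilon)>0$, and the plan is to invoke the fixed-point argument underlying Proposition~\ref{Prop:delta(eps)}/Proposition~\ref{Prop:targetTh}: I want to show that with incoming values $\dhat{0}$ on both sides there is a nontrivial SG-DE fixed point, so target decoding fails. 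The clean route is to observe that $\dhat{\tau}\succeq\underline\phi(\epsilon)$ (this is Lemma~\ref{Lemma:delhat} for $\tau=1$, and for $\tau=0$ when $\epsilon>\epsilon^*_3$ one gets $\dhat{0}\succeq\underline\phi(\epsilon)$ by the same monotone-iteration argument starting from $\udel^{(0)}_1\succeq\Delta(\epsilon,\underline 0)$ and comparing with $\underline\phi$, using Proposition~\ref{Prop:delta(eps)} as the monotone base case). Since the incoming DE values dominate $\underline\phi(\epsilon)=\underline\psi(\epsilon)$ and $x(\epsilon)$ is a SG-DE fixed point under $(\underline\phi(\epsilon),\underline\psi(\epsilon))$, by monotonicity of DE the actual SG iteration is bounded below by $x(\epsilon)>0$, so it has a nontrivial fixed point and the target fails. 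Therefore $\epsilon>\esg{0}{0}$ for every $\epsilon>\epsilon^*_3$, giving $\esg{0}{0}\leq\epsilon^*_3$. Combined with the ordering \eqref{Eq:order_SG}, namely $\esg{0}{1}\leq\esg{0}{0}$, and with Part~1's bound $\esg{0}{1}\geq\epsilon^*_3$, we get $\epsilon^*_3\leq\esg{0}{1}\leq\esg{0}{0}\leq\epsilon^*_3$, forcing all three to equal $\epsilon^*_3$.

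\textbf{Main obstacle.} The delicate point is the $\epsilon>\epsilon^*_3$ direction in Part~2: I need that the limit value $\dhat{0}$ (not just $\dhat{1}$) dominates $\underline\phi(\epsilon)$, so that the fixed-point argument of Proposition~\ref{Prop:targetTh} applies to the termination-started case. This is where the hypothesis $\dhat{0}=\dhat{1}$ for $\epsilon>\epsilon^*_3$ does the work: it lets me transfer Lemma~\ref{Lemma:delhat} (proved only for $\tau=1$) to $\tau=0$ for free, so I can conclude $\dhat{0}=\dhat{1}\succeq\underline\phi(\epsilon)$ and the target sees incoming values $\succeq(\underline\phi(\epsilon),\underline\psi(\epsilon))$ on both sides. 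I would double-check that strictness is not needed anywhere — Proposition~\ref{Prop:targetTh} only needs $x(\epsilon)>0$ to be a fixed point, and monotonicity then yields failure — and that the use of $\epsilon^*_{l,r}\leq\epsilon^*_3$ is exactly to guarantee $x(\epsilon)>0$ on the relevant range $\epsilon>\epsilon^*_3$.
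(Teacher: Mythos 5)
Your proposal is correct and follows essentially the same route as the paper's proof: Part~1 is the identical argument via Lemma~\ref{Lemma:hat}, Proposition~\ref{Prop:s3}, monotonicity, and Proposition~\ref{Prop:esgj}, and Part~2 uses the hypothesis $\dhat{0}=\dhat{1}$ to transfer Lemma~\ref{Lemma:delhat} to the termination-started case, then applies Proposition~\ref{Prop:targetTh} and the ordering \eqref{Eq:order_SG} to sandwich the thresholds, exactly as in the paper. (Your parenthetical suggestion of a direct monotone-iteration proof of $\dhat{0}\succeq\underline\phi(\epsilon)$ would fail at the base case since $\udel_0^{(0)}=\underline 0\not\succeq\underline\phi(\epsilon)$, but you correctly fall back on the stated hypothesis, so nothing is lost.)
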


\begin{remark}
	The conditions $ \epsilon^*_{l,r}\leq \epsilon^*_3 $ and $ \dhat{0}=\dhat{1}$ for $\epsilon>\epsilon^*_3$ hold for many constructions. 
	$ \epsilon^*_{l,r}$ is the threshold of the $ l\times r $ all-ones matrix, while in view of \eqref{Eq:s3leftCalc}--\eqref{Eq:s3rightCalc}, for left helper SBs, $ \epsilon^*_3  $ equals to the threshold of the $ l\times r $ protomatrix $[B_{\text{left}};B_{\text{loc}}]  $. For many assignments of $ B_{\text{left}} $, we will get $ \epsilon^*_{l,r}\leq \epsilon^*_3 $ (see Table~\ref{Tbl:Bleft} in Section~\ref{Sub:code}). 
	In addition, the condition $ \dhat{0}=\dhat{1}$ holds whenever $ \Delf{\epsilon}=\udel $ for a unique $ \udel\in[0,1]^t $. In all of the DE enumerations that we have done, this was the case; see the orange curve in Figure~\ref{Fig:delta_O}
	for an example.
\end{remark}

\begin{proof}
	
	Let $ \epsilon<\epsilon^*_3 $ and assume semi-global decoding with exactly one side starting from termination; w.l.o.g assume termination on the left, i.e., $ \tau_L=0,\tau_R=1 $. If we mark the incoming DE values to the target from the left and right by $\udel_L  $ and $ \udel_R $, respectively, then $ \udel_R=\dhat{1}\preceq\underline 1 $, and in view of Lemma~\ref{Lemma:hat}, $ \udel_L=\dhat{0}=\underline 0 $.
	From monotonicity and from Proposition~\ref{Prop:s3},  
	\begin{align*}
	\epsilon^*\left (\dhat{0},\dhat{1}\right) >\epsilon^*(\underline 0,\underline 1)= \epsilon^*_3 >\epsilon,
	\end{align*}
	hence semi-global decoding succeeds. In view of \eqref{Eq:eps_SG}, we have
	$\epsilon<\esg{0}{1}$. Since this is true for all $ \epsilon<\epsilon^*_3 $, then \eqref{Eq:epsSG1} holds.
	
	To prove \eqref{Eq:epsSG2}, let $ \epsilon>\epsilon^*_3>\epslr $. From the added condition and from Lemma~\ref{Lemma:delhat}, $ \dhat{0}=\dhat{1}\succeq\underline \phi(\epsilon) $. Thus, if semi-global decoding starts from termination (on both sides), the target receives incoming DE values $ \udel_L=\udel_R\succeq\underline \phi(\epsilon) $. 
	From monotonicity and from Proposition~\ref{Prop:targetTh}, the target threshold is 
	\begin{align*}
	\epsilon^*\left (\udel_L,\udel_R\right ) 
	&\leq \epsilon^*\left (\underline \phi(\epsilon),\underline \phi(\epsilon)\right ) \\
	&\leq \epslr \\
	&< \epsilon.
	\end{align*}
	Thus, for every $ \epsilon>\epsilon^*_3$,  $ \epsilon>\esg{0}{0} $; hence, $ \esg{0}{0}\leq  \epsilon^*_3$. In view of \eqref{Eq:order_SG} and \eqref{Eq:epsSG1}, we complete the proof for \eqref{Eq:epsSG2}.
\end{proof}
The meaning of \eqref{Eq:epsSG2} is that for construction that satisfy the added conditions in Theorem~\ref{Th:eps*SG}, if one side starts from a termination, it does not help to start the other side from termination.
Table~\ref{Tbl:notations} and Figure~\ref{Fig:ThSummary} summarize the notations defined and results derived in Sections~\ref{Sec:SingleSB} and~\ref{Sec:Thrsholds}. 
\begin{table}
	\caption{\label{Tbl:notations}Notations used above}
	\begin{center}
		\begin{tabular}{c||p{8cm}||c}
			Notation 						& Meaning 																								& Reference				\\
			\hline
			\hline
			$ \Delta(\epsilon,\udel) $		& Helper outgoing DE values 																			& 	Definition~\ref{Def:Delta}		 			\\
			\hline
			$ \epsilon^*_1$				& Maximal channel parameter such that local decoding succeeds 	&\eqref{Eq:s1}\\
			\hline
			$ \epsilon^*_2 $				& Maximal channel parameter such that the outgoing erasure is always smaller than the incoming erasure 	&\eqref{Eq:s2}\\
			\hline
			$ \epsilon^*_3 $				& Maximal channel parameter such that a helper preserves zero incoming DE values 	&\eqref{Eq:s3}\\
			\hline
			$ \epsilon^*(\udel_L,\udel_R)$& Maximum channel parameter such that the target is successfully decoded given incoming DE values $ \udel_L $ and $ \udel_R $&
			Section~\ref{Sub:TargetTh} \\
			\hline
			$ \epsilon^*_4 = \epsilon^*(\underline 0,\underline 0)$				& Maximal channel parameter such that a target with full incoming information is successfully decoded 	&\eqref{Eq:s4}\\
			\hline
			$ \epslr $						& Threshold of the $ (l,r) $ LDPC ensemble														& \cite{RichUrb}\\
			\hline
			$ \dhat{\tau} $					& Input DE values into the target when semi-global starts from termination ($ \tau=0 $) and non-termination ($ \tau=1 $)																		&Definition~\ref{Def:del_hat}\\
			\hline
			$ \esg{\tau_L}{\tau_R} $				& Threshold of semi-global decoding 																				&\eqref{Eq:eps_SG}
		\end{tabular}
	\end{center}
\end{table}

\begin{figure}
	\begin{center}
		\begin{tikzpicture}
		\pgfmathsetmacro{\y}{15}
		\pgfmathsetmacro{\x}{5}
		\node(zero) [anchor=west,align=left]at (0,0) {$ 0 $};
		\node(epsL) [above = \y mm of zero.west,anchor=west,align=left] {$ \epsilon^*_1$};
		\node(tilde) [above = \y mm of epsL.west,anchor=west,align=left] {$ \epsilon^*_2$};
		\node(SG11) [above = \y mm of tilde.west,anchor=west,align=left] {$ \esg{1}{1}$};
		\node(eps01) [above = \y mm of SG11.west,anchor=west,align=left] {$ \epsilon^*_3$ ($=\esg{0}{0}=\esg{0}{1}$, Th.\ref{Th:eps*SG})};
		\node(eg) [above = \y mm of eps01.west,anchor=west,align=left] {$ \epsilon^*_G$};
		\node(eps00) [above = \y mm of eg.west,anchor=west,align=left] {$ \epsilon^*_4$};
		\node(one) [above = \y mm of eps00.west,anchor=west,align=left] {$ 1$};
		
		\node (st) [left = \x mm of zero.south] { };
		\node (end) at (st|-one.north) {};
		\draw [thick,|->,>=latex] (st)--(end);
		\node (e1) at (st|-epsL) {$ - $};\node (e2) at (st|-tilde) {$ - $};
		\node (e3) at (st|-SG11) {$ - $};\node (e4) at (st|-eps01) {$ - $};
		\node (e5) at (st|-eg) {$ - $};\node (e6) at (st|-eps00) {$ - $};
		
		\node (e11) [left = 3.5*\x mm of e1 ] { };\node (e00) at(e11|-st) { };
		\node (e33) [left = 3.5*\x mm of e3 ] { };\node (e44) [left = 3.5*\x mm of e4 ] { };
		\node (e55) [left = 3.5*\x mm of e5 ] { };		
		\draw [thick,decorate,decoration={brace,amplitude=5pt,raise=6pt}] (e00)--(e11.center) node [black,midway,xshift=-12mm,text width=2cm,align=center] {\footnotesize Local decoding};
		\draw [thick,decorate,decoration={brace,amplitude=5pt,raise=6pt}] (e1.center)--(e2.center) node [black,midway,xshift=-12mm,text width=2cm,align=center] {\footnotesize $ \dhat{1}=0 $};		
		
		\draw [thick,decorate,decoration={brace,amplitude=5pt,raise=6pt}] (e11.center)--(e33.center) node [black,midway,xshift=-12mm,text width=2cm,align=center] {\footnotesize Semi-global decoding not from termination};	
		\draw [thick,decorate,decoration={brace,amplitude=5pt,raise=6pt}] (e2.center)--(e4.center) node [black,midway,xshift=-12mm,text width=2cm,align=center] {\footnotesize 
			$ \dhat{0}=0$ $\dhat{1}>0 $};	
		
		\draw [thick,decorate,decoration={brace,amplitude=5pt,raise=6pt}] (e33.center)--(e44.center) node [black,midway,xshift=-15mm,text width=2cm,align=center] {\footnotesize Semi-global decoding from termination};	
		\draw [thick,decorate,decoration={brace,amplitude=5pt,raise=6pt}] (e4.center)--(e5.center) node [black,midway,xshift=-12mm,text width=2cm,align=center] {\footnotesize 
			$ \dhat{0}= \dhat{1}$ };	
		\draw [thick,decorate,decoration={brace,amplitude=5pt,raise=6pt}] (e44)--(e55.center) node [black,midway,xshift=-12mm,text width=2cm,align=center] {\footnotesize Global decoding};				
		\end{tikzpicture}
	\end{center}
	\caption{\label{Fig:ThSummary}Summary of the asymptotic results derived in Sections~\ref{Sec:SingleSB} and~\ref{Sec:Thrsholds}.}
\end{figure}
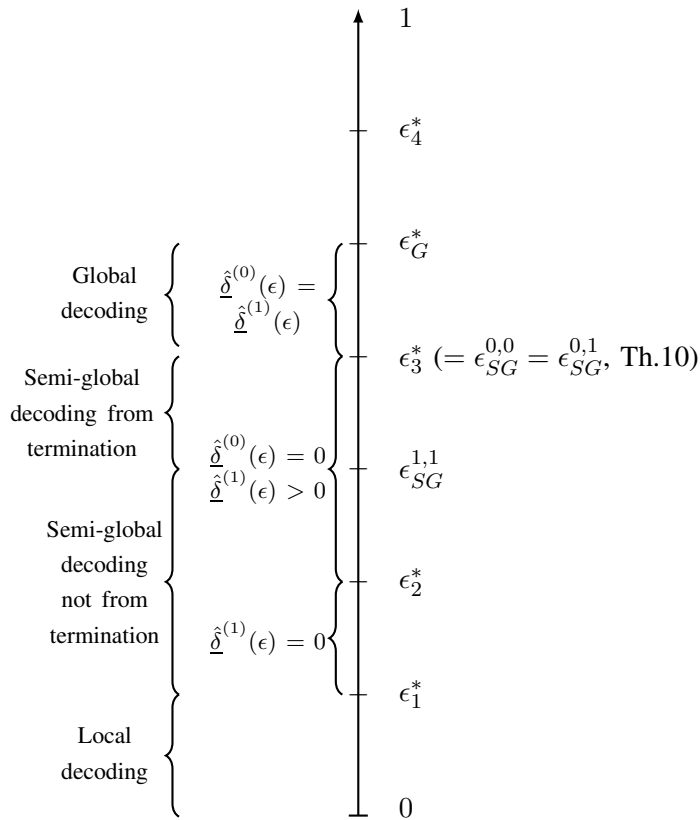

\section{Performance Over The Sub-Block Markov-Varying Channel}\label{Sec:SBMV}

As shown in Section~\ref{Sec:Thrsholds} (and previously reported in \cite{RamCass19}), starting semi-global decoding from a termination SB boosts the performance significantly, however, if the target SB is far away from termination SBs (i.e., the coupled chain is very long and the target is around the middle of it), then it is costly -- in terms of latency -- to start semi-global decoding from the endpoints of the chain.    

In some practically motivated channel models, like the SB-varying channel that we next study, we might get ``lucky" such that a non-termination helper SB is decoded successfully. Since the outgoing DE values from a successfully decoded SB equal $ 0 $ (see Proposition~\ref{Prop:del>0Gen}), then this SB acts like a termination SB. This happens, for example, if the SB erasure rate $ \epsilon $ is below the local threshold, as well as in other scenarios. 
To capture this favorable effect, we define \emph{pseudo-termination sequences}, whose qualitative definition is sequences of SBs (and their channel states) that output zero DE values in their SG density-evolution analysis.
Pseudo-termination sequences are a central component of the analysis in this section. 

To analyze SG decoding over channels with variability (and memory), in the following sub-sections we define three Markov chains: the first describes the channel (Section~\ref{Sub:MarkovChannel}), the second is simplified to fit the SB thresholds of the code ensemble (Section~\ref{Sub:BlockState}), and the third corresponds to the decoder states during semi-global decoding (Section~\ref{Sub:MarkovSG}).
\subsection{Channel Model}\label{Sub:MarkovChannel}

We consider a channel model, in which each SB suffers from a (possibly) different erasure rate. 
Furthermore, the channel state (erasure rate) between SBs forms a Markov chain. We call this channel the SB Markov-varying (SBMV) channel. 
The i.i.d. version of the SB varying channel was first introduced in \cite{McElStark84}, and was studied for SC-LDPCL codes in \cite{RamCass19}. 

Let $ \mathcal E =\{e_1,e_2,\ldots,e_{|\mathcal{E}|}\}$ be the possible channel states (erasure rates), and let $ \{E_m\}_{m=1}^M $ be a Markov chain describing the channel state of SBs $ m\in\{1,2,\ldots,M\} $, with the transition probabilities
\begin{align}\label{Eq:P}
\Pr\left (E_m =e_j \big| E_{m-1} =e_i\right ) = P_{i,j},\quad\; \forall \;2\leq m	 \leq M,
\end{align}
where $ P $ is a given $|\mathcal{E}|\times |\mathcal{E}|  $ transition matrix (non-negative elements with rows summing to $ 1 $).
We assume that this Markov chain has a unique stationary distribution $ \underline \nu = (\nu_1,\nu_2,\ldots,\nu_{|\mathcal{E}|}) $, and that all SBs of interest (i.e., target and helper SBs) are distributed according to $ \underline\nu $ (convergence to steady state irrespective of initial conditions lies on the assumption that these SBs are sufficiently far from the block boundaries). 
The \emph{expected erasure rate} of an SBMV channel is given by
\begin{align}\label{Eq:ExpEps}
\mathbb{E}\left[E_m\right ] = \sum_{i=1}^{|\mathcal{E}|}\nu_i e_i.
\end{align}
For example, the constant case, where the channel parameter is the same in all SBs and equals to some $ e\in[0,1] $, can be viewed as a SBMV channel with $ \mathcal E = \{e\} $, and trivial transition matrix and stationary distribution $ P=\nu=1 $. Another example is the SB i.i.d. channel \cite{RamCass18,McElStark84}, where each SB suffers from a channel parameter $ e_i\in\mathcal E $ with some probability $ p_i\in[0,1] $ where $ \sum_{i=0}^\mathcal{E} p_i=1 $, and the SBs' channel parameters are independent. In this case, for every $ i,j\in\{1,\ldots,\mathcal{E}\} $, $ P_{i,j} =p_j$, and $ \nu_i = p_i $.

\subsection{Sub-Block States}\label{Sub:BlockState}

When considering semi-global decoding over the SBMV channel, we assign to each SB a state according to the four intervals corresponding to the thresholds defined in \eqref{Eq:s1}--\eqref{Eq:s3}:
1) local decoding interval $ [0,\epsilon^*_1) $, for channel parameters in this interval, the SB is decodable locally; 2) error-reduction interval $ [\epsilon^*_1,\epsilon^*_2) $, where the inter-DE values between SBs decrease; 3) error-free-preservation interval  $ [\epsilon^*_2,\epsilon^*_3) $, where zero incoming DE values are preserved; 4) anti-termination interval  $ [\epsilon^*_3,1] $, where the outgoing DE values are arbitrarily high, regardless of the incoming DE values.
We therefore merge channel states that fall inside the same interval and define
\begin{align}\label{Eq:a_i}
a_i  = \{1\leq k \leq |\mathcal E| \colon e_k\in [\epsilon^*_{i-1},\epsilon^*_i) \}\;,\quad 1\leq i \leq 4,
\end{align}
where $ \epsilon_0^*=0 $ and $ \epsilon^*_4=1 $.

Let $ \{B_m\}_{m=1}^M $ be a Markov chain describing the state of SBs $ m\in\{1,2,\ldots,M\} $ with a state space $ \mathcal{S}=\{s_1,s_2,s_3,s_4\} $ that correspond to $ \{a_1,a_2,a_3,a_4\} $ in \eqref{Eq:a_i}, and a $ 4\times 4 $ transition matrix $ Q $. 
Recall that SBs in state $ s_1 $ are decodable locally, meaning that they output zero DE values, regardless of their incoming DE values. SBs in state $ s_2 $ are not decodable locally, but a sufficient number of consecutive $ s_2 $-blocks will output zero DE values. In general, the number of $s_2  $ SBs needed for this procedure, which we denote by $ q $, depends on the exact channel parameter in the interval $[\epsilon^*1,\epsilon^*2)$. However, for simplicity of the results, we take a worst-case approach and set 
\begin{align}\label{Eq:qmax}
q = \max\{q(e_k)\colon k\in a_2\},
\end{align}
where $ q(\cdot) $ is defined in \eqref{Eq:q}.

In addition, SBs in state $s_3 $ preserve zero incoming DE values, meaning that if the incoming DE values are zero, then so are the outgoing DE values. 
As another worst-case assumption, we assume that if in state $ s_3 $ the incoming DE values are not zero, then the outgoing DE values are arbitrarily high. 
Finally, SBs in state $s_4 $ output high DE values even if the incoming DE values are zero. If the semi-global decoder encounters a SB in state $ s_4 $, then all of the information gathered from previously decoded helper SBs becomes irrelevant for the rest of the decoding process.
	
Note that in general, SBs in states $ s_3 $ and $ s_4 $ could be helpful, i.e., produce smaller or equal outgoing DE values even if the incoming DE values are non-zero, however, it is not guaranteed (unlike $ s_1 $ or $ s_2 $ SBs). Thus, the real performance may actually be better, and the results we derive in the following sub-sections are lower bounds on the real performance. 

For every $ i\in \{1,2,3,4\} $, let $ \mu_i = \sum_{k\in a_i} \nu_{k} $, with the convention that an empty sum equals zero. If indeed $ a_i=\emptyset $ for some $ i\in\{1,2,3,4\} $, then we set $ Q(i,i)=1 $ and $ Q(i,j)=Q(j,i)=0 $ for every $ j\neq i $. Else, the transition probabilities are given by
\begin{align}\label{Eq:Q}
\begin{split}
Q_{i,j}
&\triangleq\Pr\left (B_m=s_j \big|B_{m-1}=s_i\right ) \\
&= \Pr\left (E_m\in a_j \big|E_{m-1}\in a_i\right ) \\
&= \sum_{i'\in a_i} \Pr\left ( E_{m-1} = e_{i'}\big|E_{m-1}\in a_i\right )\Pr\left (E_m\in a_j \big|E_{m-1} = e_{i'}\right ) \\
&= \sum_{i'\in a_i} \Pr\left ( E_{m-1} = e_{i'}\big|E_{m-1}\in a_i\right )\sum_{j'\in a_j} \Pr\left (E_m = e_{j'} \big|E_{m-1} = e_{i'}\right )\\
&= \frac{1}{\mu_i}\sum_{i'\in a_i}\sum_{j'\in a_j}  \nu_{i'} P_{i',j'},
\end{split}
\end{align}
where $ i,j\in\{1,2,3,4\} $, $s_i,s_j\in\mathcal S  $, and $ P $ is a given in \eqref{Eq:P}.
Figure~\ref{Fig:MarkocChannel}(a) and Figure~\ref{Fig:MarkocChannel}(b) illustrate the channel-state and SB-state diagrams, respectively, for the case of $ |\mathcal E|=4 $ and $ a_1=\{1,2\},a_2=\{3\},a_3=\{4\},a_4=\emptyset $.
\begin{lemma}\label{Lemma:StatB}
The stationary distribution of  $ \{B_m\}_{m=1}^M $  is given by $ \underline \mu = (\mu_1,\mu_2,\mu_3,\mu_4) $, where
$ \mu_i = \sum_{k\in a_i} \nu_{k}  $, $ a_i  $ is given in \eqref{Eq:a_i}, and $ \underline \nu $ is the stationary distribution of  $ \{E_m\}_{m=1}^M $.
\end{lemma}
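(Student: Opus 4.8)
The plan is to verify directly that $\underline\mu=(\mu_1,\mu_2,\mu_3,\mu_4)$ is a left eigenvector of the transition matrix $Q$ with eigenvalue $1$, and then invoke the characterization of stationary distributions recalled in Section~\ref{Sub:SCLDPCL} (a distribution is stationary iff it is a left eigenvector of the transition matrix with eigenvalue $1$). Concretely, I would fix $j\in\{1,2,3,4\}$ and compute $\sum_{i=1}^{4}\mu_i Q_{i,j}$, substituting the closed form for $Q_{i,j}$ from \eqref{Eq:Q}. In the generic case (no $a_i$ empty) this gives
\begin{align*}
\sum_{i=1}^{4}\mu_i Q_{i,j}
= \sum_{i=1}^{4}\mu_i\cdot\frac{1}{\mu_i}\sum_{i'\in a_i}\sum_{j'\in a_j}\nu_{i'}P_{i',j'}
= \sum_{i=1}^{4}\sum_{i'\in a_i}\sum_{j'\in a_j}\nu_{i'}P_{i',j'}.
\end{align*}
Since $\{a_1,a_2,a_3,a_4\}$ partitions $\{1,\dots,|\mathcal E|\}$, the double sum $\sum_{i=1}^4\sum_{i'\in a_i}$ collapses to a single sum $\sum_{i'=1}^{|\mathcal E|}$, so the expression equals $\sum_{j'\in a_j}\sum_{i'=1}^{|\mathcal E|}\nu_{i'}P_{i',j'}=\sum_{j'\in a_j}\nu_{j'}$, where the last equality uses that $\underline\nu$ is stationary for $P$, i.e. $\sum_{i'}\nu_{i'}P_{i',j'}=\nu_{j'}$. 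By definition this is exactly $\mu_j$, which establishes $\underline\mu Q=\underline\mu$.

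I would also check that $\underline\mu$ is a genuine probability distribution: its entries are nonnegative sums of the $\nu_k$, and $\sum_{j=1}^4\mu_j=\sum_{j=1}^4\sum_{k\in a_j}\nu_k=\sum_{k=1}^{|\mathcal E|}\nu_k=1$ again because $\{a_j\}$ is a partition. Together with $\underline\mu Q=\underline\mu$ this identifies $\underline\mu$ as a stationary distribution of $\{B_m\}$.

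The only real subtlety — and the place I would be careful — is the degenerate case in which some $a_i=\emptyset$. There the definition \eqref{Eq:Q} does not apply; instead the paper sets $Q_{i,i}=1$ and $Q_{i,j}=Q_{j,i}=0$ for $j\neq i$, and correspondingly $\mu_i=0$ (empty sum). One must check that the eigenvector computation still goes through: the term $i$ contributes $\mu_i Q_{i,j}=0\cdot Q_{i,j}=0$ to $\sum_i\mu_i Q_{i,j}$ for every $j$ (including $j=i$, where it is $0\cdot 1=0$), so empty blocks simply drop out of the sum and the partition argument on the surviving indices is unchanged; one should note that $\bigcup_{i:a_i\neq\emptyset}a_i$ still equals $\{1,\dots,|\mathcal E|\}$, so the collapse of the double sum and the use of stationarity of $\underline\nu$ are unaffected. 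I do not expect any further obstacle; this is essentially the standard "lumping"/aggregation fact for Markov chains, and the assumption that $\{E_m\}$ has a unique stationary distribution $\underline\nu$ is what makes $\underline\mu$ the stationary distribution of $\{B_m\}$ (uniqueness of $\underline\mu$ itself is not claimed in the statement, so I would not belabor it).
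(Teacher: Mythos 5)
Your proof is correct and follows essentially the same route as the paper's: both establish $\underline\mu Q=\underline\mu$ by substituting the definition of $Q_{i,j}$ from \eqref{Eq:Q}, collapsing the sum over the partition $\{a_i\}$ into a single sum over $\{1,\dots,|\mathcal E|\}$, and invoking stationarity of $\underline\nu$ for $P$ (the paper simply writes the chain of equalities starting from $\mu_j$ rather than from $\sum_i\mu_i Q_{i,j}$). Your explicit handling of the degenerate case $a_i=\emptyset$ and the check that $\underline\mu$ sums to one are welcome additions that the paper leaves implicit.
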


\begin{proof}
	For every $ j\in\{1,2,3,4\} $
\begin{subequations}
	\begin{align}
	\label{Eq:mu_ja}
	\mu_j 
	&= \sum_{j'\in a_j} \nu_{j'}\\
	\label{Eq:mu_jb}
	&=\sum_{j'\in a_j} \sum_{i=1}^{|\mathcal{E}|} \nu_i P_{i,j'}\\
	\label{Eq:mu_jc}
	&=\sum_{i=1}^{|\mathcal{E}|}\sum_{j'\in a_j}  \nu_i   P_{i,j'}\\
	\label{Eq:mu_jd}
	&=\sum_{i=1}^{4} \sum_{i'\in a_i}   \sum_{j'\in a_j} \nu_{i'}  P_{i',j'}\\
	\label{Eq:mu_je}
	&=\sum_{i=1}^{4} \mu_i Q_{i,j},
	\end{align}
\end{subequations}	
	where \eqref{Eq:mu_ja} is the definition of $ \mu_j $, \eqref{Eq:mu_jb} follows since $ \underline \nu $ is left eigenvector of $ P $ with a unit eigenvalue, \eqref{Eq:mu_jc} is a change in summation order, \eqref{Eq:mu_jd} is summing over the partition of $ \{1,2,\ldots,|\mathcal E|\} $ into the sets $ \{a_i\}_{i=1}^4 $, \eqref{Eq:mu_je} is due to \eqref{Eq:Q}.
\end{proof}

%

\begin{figure}
	\begin{center}
		\begin{tikzpicture}[>=latex,el/.style = {inner sep=1pt}]
		\tikzstyle{state}=[circle,draw,thick,minimum size=10mm]
		\pgfmathsetmacro{\x}{28}
		\pgfmathsetmacro{\y}{10}
		\pgfmathsetmacro{\z}{13}
		
		\node (s1) [state] {\textcolor{black}{$ e_1 $}};\node (s2) [state,right = \x mm of s1] {\textcolor{black}{$ e_2 $}};
		\node (s3) [state,below = \x mm of s1] {\textcolor{black}{$ e_3 $}};\node (s4) [state,right = \x mm of s3] {\textcolor{black}{$ e_4 $}};
%
%
		\path 
		(s1) edge [loop ,min distance=\y mm,in=135,out=90,->,thick] node [above,el]{\footnotesize $ P_{1,1} $} (s1)
		(s2.north) edge [loop ,min distance=\y mm,in=45,out=90,->,thick] node[above,el]{\footnotesize $ P_{2,2} $} (s2.north east)
		(s3.south) edge [loop ,min distance=\y mm,in=225,out=270,->,thick] node [below,el] {\footnotesize $ P_{3,3} $} (s3.south west)
		(s4.south) edge [loop ,min distance=\y mm,in=315,out=270,->,thick] node [below,el] {\footnotesize $ P_{4,4} $} (s4.south east)
		
		(s1) edge [->,thick,bend left=\z] node [above,el] {\footnotesize $ P_{1,2} $} (s2)
		(s1) edge [<-,thick,bend right=\z] node [above,el] {\footnotesize $ P_{2,1} $} (s2)
		
		(s3) edge [->,thick,bend left=\z] node [below,el] {\footnotesize $ P_{3,4} $} (s4)
		(s3) edge [<-,thick,bend right=\z] node (x) [below,el] {\footnotesize $ P_{4,3} $} (s4)
		
		(s1) edge [->,thick,bend right=\z] node [left,el] {\footnotesize $ P_{1,3} $} (s3)
		(s1) edge [<-,thick,bend left=\z] node [left,el] {\footnotesize $ P_{3,1} $} (s3)
		
		(s2) edge [->,thick,bend right=\z] node [right,el] {\footnotesize $ P_{2,4} $} (s4)
		(s2) edge [<-,thick,bend left=\z] node [right,el] {\footnotesize $ P_{4,2} $} (s4)
		
		(s1) edge [->,thick,bend right=\z] node [right,el,pos=0.7] {\footnotesize $ P_{1,4} $} (s4)
		(s1) edge [<-,thick,bend left=\z] node [right,el,pos=0.2] {\footnotesize $ P_{4,1} $} (s4)
		
		(s2) edge [->,thick,bend right=\z] node [right,el,pos=0.8] {\footnotesize $ P_{2,3} $} (s3)
		(s2) edge [<-,thick,bend left=\z] node [right,el,pos=0.3] {\footnotesize $ P_{3,2} $} (s3)
		;		
		
		\node (a) [below = 1cm of x] {(a)};\node (b) [right = 3*\x mm of a] {(b)}; \node (c) at(s1-|b) {};

		\node (s1) [state,left=0.5*\x mm of c] {\textcolor{black}{$ e_1 $}};\node (s2) [state,right = \x mm of s1] {\textcolor{black}{$ e_2 $}};
		\node (s3) [state,below = \x mm of s1] {\textcolor{black}{$ e_3 $}};\node (s4) [state,right = \x mm of s3] {\textcolor{black}{$ e_4$}};	
		
		\node (s12) [thick,rectangle,rounded corners, fit=(s1)(s2),blue,draw] {}; 
		\node (x) [above =1mm of s12] {};\node  [right=1mm of x] {$ s_1 $};
		\node (s33) [thick,rectangle,rounded corners, fit=(s3),red,draw] {}; \node [left=1mm of s33] {$ s_2 $};
		\node (s44) [thick,rectangle,rounded corners, fit=(s4),cyan,draw]{}; \node [right=1mm of s44] {$ s_3 $};
			
		\path 
		(s12) edge [loop ,min distance=\y mm,in=135,out=90,->,thick] node [above,el]{\footnotesize $ Q_{1,1} $} (s12)
		(s33.south) edge [loop ,min distance=\y mm,in=225,out=270,->,thick] node [above,el]{\footnotesize $ Q_{2,2} $} (s33.south west)
		(s44.south) edge [loop ,min distance=\y mm,in=315,out=270,->,thick]  node [above,el]{\footnotesize $ Q_{3,3} $} (s44.south east)
		
		(s33) edge [<-,thick,bend right=\z] node (x) [below,el] {\footnotesize $ Q_{3,2} $} (s44)
		(s33) edge [->,thick,bend left=\z] node [below,el] {\footnotesize $ Q_{2,3} $} (s44)
		(s12) edge [->,thick,bend right=\z] node [left,el] {\footnotesize $ Q_{1,2} $} (s33)
		(s12) edge [<-,thick,bend left=\z] node [right,el] {\footnotesize $ Q_{2,1} $} (s33)
		(s12) edge [->,thick,bend right=\z] node [right,el,pos=0.7] {\footnotesize $ Q_{1,3} $} (s44)
		(s12) edge [<-,thick,bend left=\z] node [right,el,pos=0.2] {\footnotesize $ Q_{3,2} $} (s44)
		;
		\end{tikzpicture}
	\end{center}
	\caption{\label{Fig:MarkocChannel}(a): a state diagram representing an SBMV channel with state space $ \mathcal{E}=\{e_1,e_2,e_3,e_4\} $, and transition matrix $ P $; (b) a state diagram representing the block states with $ a_1 = \{1,2\},a_2 = \{3\},a_3=\{4\},a_4=\emptyset $, and transition matrix $ Q$ given in \eqref{Eq:Q}. For $ i\in\{1,2,3,4\} $, each state $s_i$ is marked in a color corresponding to the curve in Figure~\ref{Fig:delta_O} that depicts its erasure-transfer behavior.}
\end{figure}
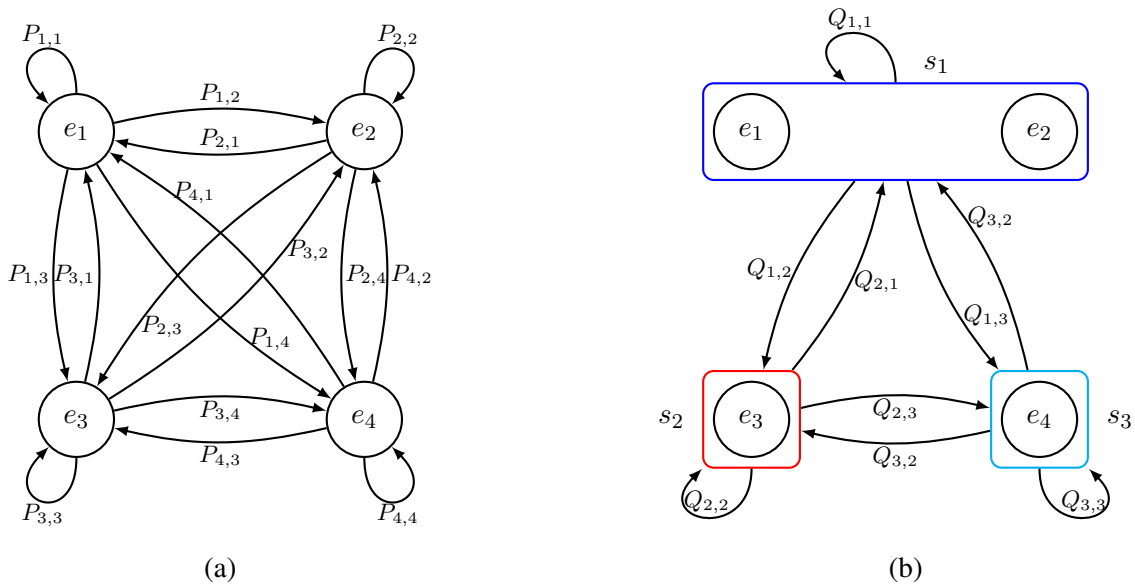

\subsection{Pseudo-Termination Sequences}\label{Sub:MarkovSG}
Consider a sequence of $ c $ helper SBs
$ (b_1,b_2,\ldots,b_c)$,
where for every $ i\in \{1,2,\ldots,c\} $, $ b_i\in \mathcal S $, and suppose they are accessed, during semi-global decoding, in a descending order (first $ b_c $, and last $ b_1 $). 
Recall that for SBs in state $ s_2 $, \eqref{Eq:s2} holds, which induces an integer $ q\geq 2 $ defined in \eqref{Eq:qmax} as the length of a sequence of consecutive SBs in state $s_2$ that suffices to produce zero outgoing DE values. 
Let $ \Delta_{c,q}(b_1,b_2,\ldots,b_c) $ denote the outgoing DE values from SB $ b_1 $ after decoding the sequence $ b_c,\ldots,b_2,b_1 $. When $ \Delta_{c,q}(b_1,b_2,\ldots,b_c) $ is all zero, we say that $b_1,b_2,\ldots,b_c$ is a pseudo-termination sequence.

\begin{definition}\label{Def:pseudo-term}
	For integers $ c,q $, the set of all pseudo-termination sequences is defined by
	\begin{align*}
	\mathcal T(c,q) = \left \{ (b_1,b_2,\ldots,b_c)\in \mathcal S^c \colon \Delta_{c,q}(b_1,b_2,\ldots,b_c)=\underline 0 \right \}.
	\end{align*}
\end{definition}

\begin{example}\label{Ex:T}
	Let $ c=3,q=2 $. Then,
	\begin{align*}
	\begin{array}{lllllllll}
	\mathcal{T}(3,2)
	&=&\{(s_1,b_2,b_3),\forall b_2,b_3\in \mathcal S\}
	&\cup&\{(s_2,s_1,b_3),\forall b_3\in \mathcal S\}	
	&\cup& \{(s_2,s_2,b_3),\forall b_3\in \mathcal S\}
	&\cup& \{(s_2,s_3,s_1)\}\\
	&\cup& \{(s_3,s_2,s_1)\}
	&\cup& \{(s_3,s_1,b_3),\forall b_3\in \mathcal S\}
	&\cup& \{(s_3,s_2,s_2)\}
	&\cup& \{(s_3,s_3,s_1)\},
	\end{array}
	\end{align*}
	with a total of $ |\mathcal{T}(3,2)|=32 $ pseudo-termination sequences, out of $ 4^3=64 $ possible sequences.
\end{example}

In what follows we answer the following question: under the SBMV channel model, what is the probability that a sequence of $ c $ helper SBs will be a pseudo termination? 
The key to answering this question is defining a third Markov chain evolving backward from $b_1$ (decoded last) to $b_c$ (decoded first).  
Clearly, if $ b_1 $ is in state $ s_1 $, then the sequence is a pseudo-termination sequence regardless of the states of $ b_2,\ldots ,b_c $. Similarly, if $ b_1 $ is in state $ s_4 $, then the sequence is \emph{not} a pseudo-termination sequence, regardless to the states of $ b_2,\ldots, b_c $. 
If $b_1$ is in state $ s_2 $, than we have to check $ b_2 $: if it is in state $ s_1 $ or it is in state $ s_2 $ and $ q=2 $, then the sequence is a pseudo-termination sequence; if $ b_1 $ is in state $ s_3 $, then we check if $b_2,\ldots,b_c$ is pseudo termination; and so forth, until we reach $ c $ SBs. 
The logic of determining pseudo-termination sequences is illustrated in Example~\ref{Ex:T}, and specified in full precision by the Markov chain presented next. 
		
Define a Markov chain with $ q+2 $ states $\tilde{\mathcal S}_q=  \{\tilde s_1,\tilde s_2^{(1)}, ,\tilde s_2^{(2)}, \ldots,\tilde s_2^{(q-1)},\tilde s_3,\tilde s_4  \}$. 
This Markov chain will be used to analyze the semi-global decoder in reverse order as described above.
The operational meanings of the states in $ \tilde{\mathcal S}_q $ are: 1) $ \tilde s_1 $ is the pseudo-termination state, which is an absorbing state because if reached, pseudo-termination is guaranteed irrespective of the subsequent SB states; 2) $ \tilde s_2^{(1)},\tilde s_2^{(2)},\ldots,\tilde s_2^{(q-1)} $ are error-reduction states, and they represent a sequence of $ q-1 $ consecutive SBs in state $ s_2 $; 3) $ \tilde s_3 $ is the reset state, as it resets the error-reduction stage; 4) $ \tilde s_4 $ is the anti-termination state, which like $\tilde{s}_1$ is also an absorbing state because if reached, then the corresponding sequence is not pseudo termination regardless of the subsequent SB states.
The transition probabilities between the states in $ \tilde{\mathcal S}_q $ are given by the following matrix
\begin{align}\label{Eq:Qq}
		Q_q = \left (
		\begin{array}{c|c|c|c|c}
		1				&	0	&	0					&	0	&	0	\\
		\hline	
		Q_{2,1}			&	0	& \begin{array}{ccc} &&\\&Q_{2,2}\cdot I_{q-2}&\\&&\\\end{array} 	&Q_{2,3}&Q_{2,4}\\
		\hline
		Q_{2,1}+Q_{2,2}	&	0	&	0					&Q_{2,3}&Q_{2,4}\\
		\hline
		Q_{3,1}			&Q_{3,2}& 	0 					&Q_{3,3}&Q_{3,4}\\
		\hline
		0				&	0	&	0					&	0	&	1
		\end{array}
		\right ).
\end{align}
where $ I_{q-2} $ is the $ (q-2)\times (q-2) $ identity matrix. The initial distribution of this Markov chain (i.e., the distribution of the first SB in the sequence) is given by
$	\underline v =  (\mu_1,\mu_2,	0,0,\ldots,0,\mu_3,\mu_4 )$, where 	$ \underline\mu=(\mu_1,\mu_2,\mu_3,\mu_4) $ is the stationary distribution of the SB-state Markov chain given in Lemma~\ref{Lemma:StatB}. Figure~\ref{Fig:Induced} illustrates the state diagram of this induced Markov chain.
		
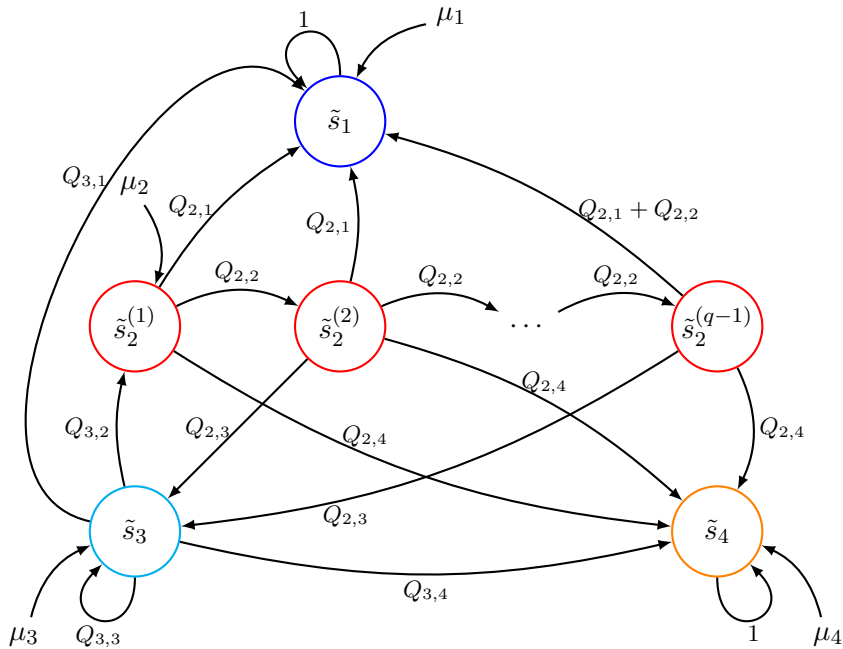
\begin{figure}
		\begin{center}
			\begin{tikzpicture}[>=latex,el/.style = {inner sep=2pt}]
			\tikzstyle{state}=[circle,draw,thick,minimum size=12mm,inner sep=1pt]
			\pgfmathsetmacro{\x}{15}
			\pgfmathsetmacro{\y}{10}
			\pgfmathsetmacro{\z}{13}
			
			\node (s1) [state,blue] {\textcolor{black}{$ \tilde s_1 $}};\node (s22) [state,below = \x mm of s1,red] {\textcolor{black}{$ \tilde s_2^{(2)} $}};
			\node (s21) [state,left = \x mm of s22,red] {\textcolor{black}{$ \tilde s_2^{(1)} $}};
			\node (dots) [right = \x mm of s22] {$ \cdots $};\node (s2qm1) [state,right = \x mm of dots,red] {\textcolor{black}{$ \tilde s_2^{(q-1)} $}};
			\node (s3) [cyan,state,below = \x mm of s21] {\textcolor{black}{$ \tilde s_3 $}};\node (s4) [orange,state,below = \x mm of s2qm1] {\textcolor{black}{$ \tilde s_4 $}};
			
			\node (n1) [above right=\y mm of s1] {$ \mu_1 $}; \node (n2) [above =\y mm of s21] {$ \mu_2 $};
			\node (n3) [below left=\y mm of s3] {$ \mu_3 $}; \node (n4) [below right=\y mm of s4] {$ \mu_4 $};
			
			\path 
			(s1) edge [loop ,min distance=\y mm,in=135,out=90,->,thick] node [above,el]{\footnotesize $ 1$} (s1)
			(s3.south) edge [loop ,min distance=\y mm,in=225,out=270,->,thick] node [below,el] {\footnotesize $ Q_{3,3} $} (s3.south west)
			(s4.south) edge [loop ,min distance=\y mm,in=315,out=270,->,thick] node [below,el] {\footnotesize $ 1 $} (s4.south east)

			(s1) edge [<-,thick,bend right=\z] node [left,el] {\footnotesize $ Q_{2,1} $} (s21)
			(s1) edge [<-,thick,bend left=\z] node [left,el] {\footnotesize $ Q_{2,1} $} (s22)
			(s1) edge [<-,thick,bend left=\z] node [right,el,pos=0.6] {\footnotesize $ Q_{2,1}+ Q_{2,2} $} (s2qm1)
			
			(s21) edge [->,thick,bend left=2*\z] node [above,el] {\footnotesize $ Q_{2,2} $} (s22)
			(s22) edge [->,thick,bend left=2*\z] node [above,el] {\footnotesize $ Q_{2,2} $} (dots)
			(dots) edge [->,thick,bend left=2*\z] node [above,el] {\footnotesize $ Q_{2,2} $} (s2qm1)
			
			(s3) edge [->,thick,bend right=\z] node [below,el] {\footnotesize $ Q_{3,4} $} (s4)
			(s3) edge [->,thick,bend left=\z] node [left,el] {\footnotesize $ Q_{3,2} $} (s21)
			(s3) edge [<-,thick] node [left,el] {\footnotesize $ Q_{2,3} $} (s22)
			(s3) edge [<-,thick,bend right=\z] node [below,el,pos=0.3] {\footnotesize $ Q_{2,3} $} (s2qm1)
			(s3) edge [->,thick,bend left=8*\z] node [left,el,pos=0.6] {\footnotesize $ Q_{3,1} $} (s1)
			
			(s21) edge [->,thick,bend right=\z] node [above,el,pos=0.4] {\footnotesize $ Q_{2,4} $} (s4)
			(s22) edge [->,thick,bend left=\z] node [above,el] {\footnotesize $ Q_{2,4} $} (s4)
			(s2qm1) edge [->,thick,bend left=2*\z] node [right,el] {\footnotesize $ Q_{2,4} $} (s4)
			(n1) edge [->,thick,bend right=2*\z] (s1) (n2) edge [->,thick,bend left=2*\z] (s21) 
			(n3) edge [->,thick,bend left=2*\z] (s3) (n4) edge [->,thick,bend right=2*\z] (s4)
			;
			\end{tikzpicture}
		\end{center}
		\caption{\label{Fig:Induced}The state diagram representing the induced Markov chain  $ \{ X_i\}_{i=1}^c   $ of the SG decoder.}
	\end{figure}
	
\begin{theorem}\label{Th:SBMV}
	Let $ c,q $ be integers such that $ q\geq 2 $, and let $ \underline B =(B_1,B_2,\ldots,B_c) $ be the SB-state Markov chain described in Section~\ref{Sub:BlockState}. Then, 
	\begin{align*}
	\Pr \left (\underline B\in \mathcal T(c,q) \right ) =  \underline v Q_q^{c-1} \underline u^T,
	\end{align*}
	where $ \underline v,\underline u $ are $ 1\times (q+2) $ vectors given by
	\begin{align*}
	{\arraycolsep=1pt\begin{array}{cclccccccr}
	\underline v &=&  (\mu_1,	&\mu_2,	&0,&0,&\ldots,&0,&\mu_3,&\mu_4 )\\
	\underline u &=&  (\; 1,		&0,		&0,&0,&\ldots,&0,&0,	&0\;)
	\end{array}},
	\end{align*}
	$ \underline\mu=(\mu_1,\mu_2,\mu_3,\mu_4) $ is the stationary distribution given in Lemma~\ref{Lemma:StatB}, and $ Q_q $ is given in \eqref{Eq:Qq}.

\end{theorem}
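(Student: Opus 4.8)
The plan is to recode the block-state sequence $\underline B$ by a deterministic ``decoder-state'' process evolving from $b_1$ (decoded last) to $b_c$ (decoded first), to verify that this process is precisely the Markov chain with transition matrix $Q_q$ of \eqref{Eq:Qq} and initial law $\underline v$, and to identify the event $\{\underline B\in\mathcal T(c,q)\}$ with the event that this process has been absorbed in the pseudo-termination state by step $c$. The claimed formula then reads off as the absorption probability $\underline v Q_q^{c-1}\underline u^T$.

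First I would make the deterministic reduction precise. Set $X_1=\iota(B_1)$ with $\iota(s_1)=\tilde s_1$, $\iota(s_2)=\tilde s_2^{(1)}$, $\iota(s_3)=\tilde s_3$, $\iota(s_4)=\tilde s_4$, and $X_{i+1}=F(X_i,B_{i+1})$, where $F$ is the update rule read row-by-row from $Q_q$: the states $\tilde s_1,\tilde s_4$ are absorbing; from $\tilde s_2^{(j)}$ the input $s_1$ leads to $\tilde s_1$, the input $s_2$ leads to $\tilde s_2^{(j+1)}$ when $j<q-1$ and to $\tilde s_1$ when $j=q-1$, the input $s_3$ leads to $\tilde s_3$, and the input $s_4$ leads to $\tilde s_4$; and from $\tilde s_3$ one has simply $F(\tilde s_3,\cdot)=\iota(\cdot)$. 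The core of the proof is the deterministic claim: for every $(b_1,\ldots,b_c)\in\mathcal S^c$, one has $\Delta_{c,q}(b_1,\ldots,b_c)=\underline 0$ if and only if the resulting value $X_c$ equals $\tilde s_1$. I would prove this by strong induction on $c$. For $c=1$, $\Delta_{1,q}(b_1)$ is the output of $b_1$ when fed the incoming value $\underline 1$, which is $\underline 0$ exactly when $b_1\in s_1$, i.e.\ when $X_1=\tilde s_1$. For the inductive step I would peel off $b_1$. If $b_1\in s_1$ or $b_1\in s_4$, both sides are simultaneously true or simultaneously false by the absorbing property. If $b_1\in s_3$, its output is $\underline 0$ iff its incoming value $\Delta_{c-1,q}(b_2,\ldots,b_c)$ is $\underline 0$; since $F(\tilde s_3,b_2)=\iota(b_2)$, the process for $(b_1,\ldots,b_c)$ from step $2$ onward coincides with the length-$(c-1)$ process for $(b_2,\ldots,b_c)$, and the induction hypothesis closes the case. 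If $b_1\in s_2$, I would look at the maximal run $b_1,\ldots,b_k\in s_2$: when $k\ge q$, \eqref{Eq:qmax} gives $\Delta_{c,q}=\underline 0$ while the chain walks $\tilde s_2^{(1)},\ldots,\tilde s_2^{(q-1)}$ and then enters $\tilde s_1$; when $k<q$, the worst-case rules say $\underline 0$ is preserved through an $s_2$-block whereas a nonzero input survives a run shorter than $q$, so $\Delta_{c,q}=\underline 0$ precisely when the block $b_{k+1}$ terminating the run has type $s_1$, or has type $s_3$ with $\Delta_{c-k-1,q}(b_{k+2},\ldots,b_c)=\underline 0$ (if $b_{k+1}$ has type $s_4$, or the run exhausts the sequence, then $\Delta_{c,q}\ne\underline 0$), which is exactly the state reached after the chain climbs $\tilde s_2^{(1)},\ldots,\tilde s_2^{(k)}$, invoking the induction hypothesis for the $s_3$ sub-case again through $F(\tilde s_3,\cdot)=\iota(\cdot)$.

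With the deterministic claim established I would carry out the probabilistic bookkeeping. Since $B_1$ follows the stationary law $\underline\mu$ of Lemma~\ref{Lemma:StatB}, $X_1=\iota(B_1)$ has law $\underline v$. For the Markov property: on the absorbing states $\tilde s_1,\tilde s_4$ the update is deterministic and matches the corresponding rows of $Q_q$; on each transient state ($\tilde s_2^{(j)}$ or $\tilde s_3$), the value of $X_i$ pins down the $\mathcal S$-state of $B_i$, so by the Markov property of $\{B_m\}$ together with the tower rule the conditional law of $B_{i+1}$ given the entire history $X_1,\ldots,X_i$ equals the corresponding row of $Q$, and grouping the masses of the $\mathcal S$-states that $F$ maps to a common target state reproduces exactly the corresponding row of $Q_q$ (notably the entry $Q_{2,1}+Q_{2,2}$ in the $\tilde s_2^{(q-1)}$-row is the sum of the two relevant masses). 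Hence $\{X_i\}_{i=1}^{c}$ is the Markov chain with transition matrix $Q_q$ and initial law $\underline v$, so $X_c$ has law $\underline v Q_q^{c-1}$ and $\Pr(X_c=\tilde s_1)=\underline v Q_q^{c-1}\underline u^T$; by the deterministic claim this equals $\Pr(\underline B\in\mathcal T(c,q))$, as asserted.

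The one genuinely delicate point I expect is the $b_1\in s_2$ case of the deterministic claim: it requires tracking the length of the current $s_2$-run, using that $\underline 0$ propagates through both $s_2$- and $s_3$-blocks while a nonzero value is cleared only by a full run of $q$ consecutive $s_2$-blocks, and handling the ``reset'' triggered by an intervening $s_3$-block --- all of which must align with the staircase block $Q_{2,2}I_{q-2}$ and the collapsed $\tilde s_2^{(q-1)}$-row of $Q_q$. Once this correspondence is in place, the initial-distribution, Markov-property, and matrix-power steps are routine.
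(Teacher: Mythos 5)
Your proof is correct and follows essentially the same route as the paper: both construct the backward-evolving decoder-state Markov chain on $\tilde{\mathcal S}_q$ with transition matrix $Q_q$ and initial law $\underline v$, and read off the claimed formula as the probability of absorption in the state $\tilde s_1$ after $c-1$ steps. The paper's own proof simply takes the identification $\Pr(\underline B\in\mathcal T(c,q))=\Pr(X_c=\tilde s_1)$ as immediate from the construction described in Section~\ref{Sub:MarkovSG}, whereas you prove it by induction on $c$ and verify the Markov property of the induced process explicitly; that added rigor is welcome but does not change the argument.
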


\begin{proof}
	Let  $ \{ X_i\}_{i=1}^c $ be a Markov chain over the state space $\tilde{\mathcal S}_q$, with $ Q_q $ as a transition matrix, and $ \underline v $ as the initial probability (see Figure~\ref{Fig:Induced} for the corresponding state diagram). 
	By marginalizing over the state of the first SB and expanding by \eqref{Eq:CahpKol},
	\begin{align*}
	\Pr \left (\underline b\in \mathcal T(c,q) \right )
	&\triangleq  \Pr\left ( X_c = \tilde s_1 \right )\\
	&=  \mu_1\Pr\left ( X_c = \tilde s_1 \big | X_1=\tilde s_1 \right )\\
	&+  \mu_2\Pr\left (  X_c = \tilde s_1 \big | X_1=\tilde s_2^{(1)} \right )\\
	&+  \mu_3\Pr\left ( X_c = \tilde s_1 \big | X_1=\tilde s_3 \right )\\
	&+  \mu_4\Pr\left ( X_c = \tilde s_1 \big | X_1=\tilde s_4 \right )\\
	&=  \underline v Q_q^{c-1} \underline u^T.
	\end{align*}
\end{proof}

\begin{corollary}\label{Coro:unif}
	Let $ c,q $ be integers such that $ q\geq 2 $. Then,
	\begin{align*}
	|\mathcal T (c,q)| =  \underline b A_q^{c-1} \underline u^T,
	\end{align*}
	where 
	\begin{align*}
	\begin{array}{llr}
	\underline b &=&\left (1,1,0,0,\ldots,0,1,1\right )\\
	\underline u &=& \left (1,0,0,0,\ldots,0,0,0\right )
	\end{array},
	\end{align*}
	and
	\begin{align*}
	A_q =\left (
	\begin{array}{c|c|c|c|c}
	4				&	0	&	0					&	0	&	0	\\
	\hline	
	1			&	0	& \begin{array}{ccc} &&\\& I_{q-2}&\\&&\\\end{array} 	&1&1\\
	\hline
	2	&	0	&	0					&1&1\\
	\hline
	1			&1& 	0 					&1&1\\
	\hline
	0				&	0	&	0					&	0	&	4
	\end{array}
	\right ).
	\end{align*}
\end{corollary}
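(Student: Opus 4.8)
The plan is to obtain this counting identity as the uniform-distribution specialization of Theorem~\ref{Th:SBMV}, using the observation that counting sequences in $\mathcal S^c$ is the same as computing a probability when all $4^c$ such sequences are equally likely. Concretely, I would first exhibit an SBMV channel whose induced SB-state chain is uniform: take $\mathcal E=\{e_1,e_2,e_3,e_4\}$ with $e_i$ in the $i$-th threshold interval $[\epsilon^*_{i-1},\epsilon^*_i)$ and a memoryless channel law assigning probability $1/4$ to each $e_i$. By the i.i.d.\ specialization described in Section~\ref{Sub:MarkovChannel} one has $P_{i,j}=1/4$ and $\nu_i=1/4$; since the intervals are disjoint, $a_i=\{i\}$ and $\mu_i=1/4$, so \eqref{Eq:Q} gives $Q_{i,j}=\mu_j=1/4$ for all $i,j\in\{1,2,3,4\}$, and (consistently with Lemma~\ref{Lemma:StatB}) the stationary distribution is $\underline\mu=(1/4,1/4,1/4,1/4)$. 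For this channel, $\Pr(\underline B=(b_1,\dots,b_c))=4^{-c}$ for every fixed $(b_1,\dots,b_c)\in\mathcal S^c$, hence $\Pr(\underline B\in\mathcal T(c,q))=|\mathcal T(c,q)|/4^c$.

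Next I would substitute $Q_{i,j}=1/4$ into the matrix $Q_q$ of \eqref{Eq:Qq} and into the vector $\underline v$ of Theorem~\ref{Th:SBMV}, and verify the elementary identities $Q_q=\tfrac14 A_q$ and $\underline v=\tfrac14\,\underline b$. Row-by-row this is immediate: the error-reduction rows and the reset row of $Q_q$ are obtained from the corresponding integer rows of $A_q$ by replacing each integer entry (which counts the number of SB-states in $\mathcal S$ triggering that automaton transition -- e.g.\ $2$ for the merged entry $Q_{2,1}+Q_{2,2}$ in the $\tilde s_2^{(q-1)}$ row, $1$ for $Q_{2,2}I_{q-2}$, and so on) by $1/4$; and the two absorbing rows $(1,0,\dots,0)$ and $(0,\dots,0,1)$ of $Q_q$ are exactly $\tfrac14$ times the rows $(4,0,\dots,0)$ and $(0,\dots,0,4)$ of $A_q$. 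The acceptance vector $\underline u$ is unchanged, since both statements ask for the probability/count of landing in the absorbing pseudo-termination state $\tilde s_1$.

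Finally I would combine the two expressions. By Theorem~\ref{Th:SBMV},
\begin{align*}
\frac{|\mathcal T(c,q)|}{4^c}=\Pr(\underline B\in\mathcal T(c,q))=\underline v\,Q_q^{\,c-1}\,\underline u^T=\Bigl(\tfrac14\,\underline b\Bigr)\Bigl(\tfrac14 A_q\Bigr)^{c-1}\underline u^T=\frac{1}{4^c}\,\underline b\,A_q^{\,c-1}\,\underline u^T ,
\end{align*}
and cancelling the common factor $4^{-c}$ yields $|\mathcal T(c,q)|=\underline b\,A_q^{\,c-1}\,\underline u^T$, as claimed.

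The computation is routine; the only thing to watch is the bookkeeping in verifying $Q_q=\tfrac14 A_q$ across all $q+2$ rows, in particular that the absorbing rows carry no $1/4$ factor (so that $A_q$ must carry a $4$ there) and that the $\tilde s_2^{(q-1)}$ row merges two SB-states into the entry $2$. If one prefers to avoid routing through the channel at all, the same result follows by reading the automaton of Figure~\ref{Fig:Induced} directly as a transfer matrix over the alphabet $\mathcal S$ and counting length-$c$ accepting walks, with $A_q$ the (unnormalized) transfer matrix and $\underline b$ the vector counting which SB-states initialize which automaton state; this is a self-contained alternative that I would note as a remark.
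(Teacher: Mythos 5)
Your proposal is correct and follows essentially the same route as the paper: specialize Theorem~\ref{Th:SBMV} to the uniform i.i.d.\ SBMV channel with $Q_{i,j}=\mu_i=1/4$, note that $\Pr(\underline B\in\mathcal T(c,q))=|\mathcal T(c,q)|/4^c$, and cancel the powers of $4$ after observing $Q_q=\tfrac14 A_q$ and $\underline v=\tfrac14\underline b$. Your row-by-row verification of $Q_q=\tfrac14 A_q$ is just a more explicit spelling-out of what the paper leaves implicit.
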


\begin{proof}
	Follows by applying Theorem~\ref{Th:SBMV} with an i.i.d. SBMV channel. i.e, $ Q_{i,j}=\mu_i=0.25 $ for every $ i,j\in \{1,2,3,4\} $, and by the observation that for the i.i.d. case 
	\begin{align*}
	\Pr \left (\underline b\in \mathcal T(c,q) \right )= \frac{	|\mathcal T (c,q)| }{|\mathcal S^c|}=\frac{	|\mathcal T (c,q)| }{4^c}.
	\end{align*}
	
\end{proof}

\begin{example}\label{Ex:unif}
	For $ c=3 $ and $ q=2 $ we have 
	\begin{align*}
	|\mathcal T (3,2)|
	&= \begin{pmatrix}
	1&1&1&1
	\end{pmatrix}
	\begin{pmatrix}
	4&0&0&0\\
	2&0&1&1\\
	1&1&1&1\\
	0&0&0&4
	\end{pmatrix}^2 
	\begin{pmatrix}
	1\\0\\0\\0
	\end{pmatrix}\\
	&=\begin{pmatrix}
	1&1&1&1
	\end{pmatrix}
	\begin{pmatrix}
	16&0&0&0\\
	9 &1&1&5\\
	7 &1&2&6\\
	0 &0&0&16
	\end{pmatrix}
	\begin{pmatrix}
	1\\0\\0\\0
	\end{pmatrix}\\
	&= 32,
	\end{align*}
	which agrees with Example~\ref{Ex:T}.
\end{example}

\subsection{Semi-Global Decoding Performance}\label{Sub:SBMVPerf}
	
We now state lower bounds on the probability of semi-global-decoding success over the SBMV channel for two schemes: 1) one-sided: all of the $ d $ helper SBs are on one side of the target; 2) two-sided: $ d/2 $ (for even $ d $) helper SBs are on each side of the target. Semi-global-decoding success is defined as the event that the target SB is decoded successfully (in DE terms) following semi-global decoding.  

\begin{proposition}\label{Prop:1SideSBMV}
	Let $ d,q $ be integers such that $ q\geq 2 $. Then, the success probability of the one-sided semi-global decoding over the SBMV channel with a transition matrix $ P $ is lower bounded by
	\begin{subequations}
	\begin{align}\label{Eq:Suc1RProb}
	p_{\text{R}}(d) \geq \underline v Q_q^{d} \underline u^T,
	\end{align} 
	if the helper SBs are on the right and 
	\begin{align}\label{Eq:Suc1LProb}
	p_{\text{L}}(d) \geq \underline v \hat Q_q^{d} \underline u^T,
	\end{align} 
	\end{subequations}
	where $ \underline v,\underline u,Q_q $ are the same as in Theorem~\ref{Th:SBMV}, and $ \hat Q_q $ is constructed as in \eqref{Eq:Qq} with the substitution
	\begin{align*}
	\hat Q_{i,j} = \frac{\mu_j}{\mu_i}Q_{j,i},\quad 1\leq i,j\leq 4.
	\end{align*}.
\end{proposition}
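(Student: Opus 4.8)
The plan is to recognize that one-sided semi-global decoding is, in density-evolution terms, exactly the computation of the worst-case erasure-transfer map $\Delta_{d+1,q}$ of Section~\ref{Sub:MarkovSG} applied to the length-$(d+1)$ sequence of sub-block states consisting of the target followed by its $d$ helpers, and then to invoke Theorem~\ref{Th:SBMV}. Concretely, for the right-helper scheme place the target at index $m$ and the helpers at indices $m+1,\ldots,m+d$. The helper phase decodes $B_{m+d}$ first with incoming DE value $\underline 1$ on its outer (non-termination) coupling checks, propagates the resulting outgoing DE values inward through $B_{m+d-1},\ldots,B_{m+1}$, and the target phase then decodes $B_m$ using the DE value arriving from $B_{m+1}$ on its helper-facing coupling checks and $\underline 1$ on the opposite (empty) side. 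This is precisely the process defining $\Delta_{d+1,q}(B_m,B_{m+1},\ldots,B_{m+d})$ in Definition~\ref{Def:pseudo-term}: the target is treated as one more helper sub-block, and the worst-case quantities built into the subscript $q$ (from \eqref{Eq:qmax}) and into the $s_3$/$s_4$ behavior only weaken the analysis.

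The crucial link is Proposition~\ref{Prop:del>0Gen}: for unit-memory binary-regular SC-LDPCL protographs, the outgoing DE values of a sub-block vanish if and only if all of its variable nodes become error-free. Applied to the target, this says that its worst-case outgoing value $\Delta_{d+1,q}(B_m,\ldots,B_{m+d})=\underline 0$ implies --- via monotonicity of density evolution, under which the true transfer is dominated by the worst-case one --- that the true outgoing value of the target is $\underline 0$, i.e., that semi-global decoding succeeds. Hence membership in $\mathcal T(d+1,q)$ is a sufficient (though in general not necessary) condition for success, so $p_{\text{R}}(d)\ge \Pr\!\big((B_m,B_{m+1},\ldots,B_{m+d})\in\mathcal T(d+1,q)\big)$. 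Since, under the standing assumption that the accessed sub-blocks are far from the chain boundaries, the state sequence $B_m,B_{m+1},\ldots$ is a realization of the sub-block-state Markov chain of Section~\ref{Sub:BlockState} with transition matrix $Q$ and stationary initial law $\underline\mu$, Theorem~\ref{Th:SBMV} with $c=d+1$ gives $\Pr\!\big((B_m,\ldots,B_{m+d})\in\mathcal T(d+1,q)\big)=\underline v\,Q_q^{(d+1)-1}\,\underline u^{T}=\underline v\,Q_q^{d}\,\underline u^{T}$, which is \eqref{Eq:Suc1RProb}.

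For the left-helper scheme the argument is the same, applied to the state sequence $B_m,B_{m-1},\ldots,B_{m-d}$, which runs in the direction of decreasing index. Because the forward chain is started from its stationary law $\underline\mu$, the time-reversal identity \eqref{Eq:ReverseMarkov} (with $P,\underline\nu$ replaced by $Q,\underline\mu$) shows that this sequence is a Markov chain with transition matrix $\hat Q_{i,j}=\frac{\mu_j}{\mu_i}Q_{j,i}$ and the same stationary distribution $\underline\mu$. Forming $\hat Q_q$ from $\hat Q$ exactly as $Q_q$ is formed from $Q$, and repeating the proof of Theorem~\ref{Th:SBMV} verbatim (it uses only the Markov property, the state-space structure, and Chapman--Kolmogorov \eqref{Eq:CahpKol}), yields $p_{\text{L}}(d)\ge \underline v\,\hat Q_q^{d}\,\underline u^{T}$, which is \eqref{Eq:Suc1LProb}.

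I expect the first step --- justifying that the target may be appended to the helper sequence as its final element --- to require the most care: one must check that throughout the target phase the target behaves in the density-evolution recursion exactly as one further (appropriately handed) helper sub-block, receiving propagated DE values on the coupling side facing the helpers and $\underline 1$ on the empty side, so that every branch of the recursion encoded in $Q_q$ (the absorbing pseudo-termination state, the reset state $\tilde s_3$ when the target is in state $s_3$, and an $s_2$-run possibly extended by the target itself) really does correspond to sufficiency for target decoding. The remaining ingredients --- the worst-case domination from monotonicity of density evolution and the Markov-chain reversal --- are routine given the results already in hand.
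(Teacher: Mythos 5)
Your proposal is correct and follows essentially the same route as the paper: treat the target as the $(d+1)$-th sub-block of the pseudo-termination sequence, use Proposition~\ref{Prop:del>0Gen} to equate success with zero outgoing DE values at the target, apply Theorem~\ref{Th:SBMV} with $c=d+1$, and handle the left-helper case via the time-reversed chain $\hat Q$ from \eqref{Eq:ReverseMarkov}. Your write-up is more explicit than the paper's about where the inequality (rather than equality) comes from --- the worst-case choices of $q$ and the $s_3$/$s_4$ behavior --- but the argument is the same.
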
 

\begin{proof}
In view of Proposition~\ref{Prop:del>0Gen}, semi-global decoding with all $ d $ helper SBs on one side succeeds, if and only if the target outputs zero DE values. The proof of \eqref{Eq:Suc1RProb} follows by applying Theorem~\ref{Th:SBMV} with $ d+1 $ SBs ($ 1 $ target $+d$ helper SBs). The proof of \eqref{Eq:Suc1LProb} follows similarly combined with the observation that we should consider the reverse Markov chain, i.e., a Markov chain with $ \hat Q $ as the transition matrix (see \eqref{Eq:ReverseMarkov}).
\end{proof}

\begin{proposition}\label{Prop:2SideSBMV}
	Let $ d,q $ be integers such that $ q\geq 2 $ and $ d $ is even. Then, the success probability of the two-sided semi-global decoding over the SBMV channel with a transition matrix $ P $ is lower bounded by
	\begin{align}\label{Eq:Suc2Prob}
	p_{\text{2}}(d) \geq \underline v_2 \left (Q_q^{d/2} \otimes \hat Q_q^{d/2}\right ) \underline u_2^T,
	\end{align} 
	where $ \otimes $ is the Kronecker product, $ Q_q, \hat Q_q  $ are the same as in Proposition~\ref{Prop:1SideSBMV}, and $ \underline v_2=(v_{2,1},v_{2,2},\ldots,v_{2,(q+2)^2}),\;\underline u_2=(u_{2,1},u_{2,2},\ldots,u_{2,(q+2)^2})$ are given by
	\begin{align*}
	{\arraycolsep=1pt\begin{array}{lll}
	v_{2,j+(q+2)(i-1)}&=&\left \{
	{\arraycolsep=3pt\begin{array}{ll}
	\mu_1 & j=i=1\\
	\mu_2 & j=i=2\\
	\mu_3 & j=i=q+1\\
	\mu_4 & j=i=q+2\\
	0 & \text{otherwise} 
	\end{array}}\right. \\[12mm]
	u_{2,j+(q+2)(i-1)}&=&\left \{
	{\arraycolsep=3pt\begin{array}{ll}
	1 & i=1\text{ or } j=1\\
	0 & \text{otherwise} 
	\end{array}}\right .
	\end{array},\qquad 1\leq i,j\leq q+2}.
	\end{align*}
\end{proposition}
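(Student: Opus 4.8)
The plan is to combine the one-sided analysis of Proposition~\ref{Prop:1SideSBMV} with the reverse-Markov-chain/Kronecker-product machinery recalled in the preliminaries. The first step is to isolate a \emph{sufficient} condition for two-sided decoding success. Index the participating SBs so that the target is $B_m$, the right helpers are $B_{m+1},\ldots,B_{m+d/2}$, and the left helpers are $B_{m-1},\ldots,B_{m-d/2}$. If the length-$(d/2+1)$ sequence $(B_m, B_{m+1},\ldots,B_{m+d/2})$ is a pseudo-termination sequence in the sense of Definition~\ref{Def:pseudo-term} (the target, processed last, outputs $\underline 0$ DE values when helped only from the right, even though that side is not a true termination), then by Proposition~\ref{Prop:del>0Gen} every VN of the target is fully decoded; the same conclusion holds if $(B_m, B_{m-1},\ldots,B_{m-d/2})$ is a pseudo-termination sequence. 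Since supplying the (ignored) opposite side's incoming DE values can, by monotonicity of DE, only decrease the target's erasure values, two-sided decoding succeeds whenever at least one of these two half-chains — each of length $d/2+1$ and sharing the target as its ``$b_1$'' — is a pseudo-termination sequence. Hence $p_2(d)\geq \Pr(A_R\cup A_L)$, where $A_R,A_L$ are these two events; it is only a lower bound because the target may also succeed by genuinely combining both sides.

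Next I would reuse the induced-Markov-chain construction from the proof of Theorem~\ref{Th:SBMV}, one copy per side. The ``right'' chain $\{X_i^R\}_{i=1}^{d/2+1}$ starts at the target ($i=1$), proceeds outward to the right, and — as in the proof of Proposition~\ref{Prop:1SideSBMV} for right-side helpers — has transition matrix $Q_q$ given in \eqref{Eq:Qq}. The ``left'' chain $\{X_i^L\}_{i=1}^{d/2+1}$ starts at the target, proceeds outward to the left, and, exactly as in the left-side case of Proposition~\ref{Prop:1SideSBMV}, has transition matrix $\hat Q_q$ (built from the reversed SB-state chain, $\hat Q_{i,j}=\tfrac{\mu_j}{\mu_i}Q_{j,i}$). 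By construction $A_R=\{X_{d/2+1}^R=\tilde s_1\}$ and $A_L=\{X_{d/2+1}^L=\tilde s_1\}$.

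The crux is the joint law of the two induced chains, which are coupled because they share the target SB. Here I invoke precisely the conditional-independence fact used in the preliminaries to derive \eqref{Eq:ReverseMarkov} and the $P\otimes\hat P$ transition: conditioned on the SB-state $B_m$ of the target, the right sequence $B_{m+1},B_{m+2},\ldots$ and the left sequence $B_{m-1},B_{m-2},\ldots$ are independent, the former Markov with transition $Q$ and the latter Markov with transition $\hat Q$; passing to the induced states, the paired process $W_i=(X_i^R,X_i^L)$ is therefore a Markov chain with transition matrix $Q_q\otimes\hat Q_q$. Its initial distribution at $i=1$ is concentrated on the ``diagonal'' pairs $(\tilde s_1,\tilde s_1),(\tilde s_2^{(1)},\tilde s_2^{(1)}),(\tilde s_3,\tilde s_3),(\tilde s_4,\tilde s_4)$ with probabilities $\mu_1,\mu_2,\mu_3,\mu_4$ — the stationary distribution of Lemma~\ref{Lemma:StatB}, since the target is in steady state — which is exactly the vector $\underline v_2$. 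Letting $\underline u_2$ be the indicator of ``first component $=\tilde s_1$ or second component $=\tilde s_1$'' (equivalently $A_R\cup A_L$), Chapman--Kolmogorov \eqref{Eq:CahpKol} over the $d/2$ steps from $i=1$ to $i=d/2+1$ gives $\Pr(A_R\cup A_L)=\underline v_2\,(Q_q\otimes\hat Q_q)^{d/2}\,\underline u_2^T$, and the mixed-product identity $(Q_q\otimes\hat Q_q)^{d/2}=Q_q^{d/2}\otimes\hat Q_q^{d/2}$ yields \eqref{Eq:Suc2Prob}.

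The main obstacle I anticipate is the bookkeeping around the shared target SB: one must verify that the two induced chains are coupled \emph{only} through their common $i=1$ state — hence the diagonal initial vector $\underline v_2$ rather than a product of two copies of $\underline v$ — and that thereafter they evolve independently with the correct forward ($Q_q$) and reverse ($\hat Q_q$) matrices. One must also be careful that the event selected by $\underline u_2$ (at least one side delivering $\underline 0$ to the target) is genuinely a sufficient, not necessary, condition for target success, so that the stated inequality is a valid lower bound. The conditional independence doing the real work is already established in the Markov-chains preliminaries, so beyond this careful alignment the argument is essentially organizational.
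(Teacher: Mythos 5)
Your proposal is correct and follows essentially the same route as the paper's proof: two induced chains (forward with $Q_q$, reversed with $\hat Q_q$) coupled only through the shared target state, combined into a product chain with transition $Q_q\otimes\hat Q_q$, diagonal initial vector $\underline v_2$ from the stationary distribution, and the selector $\underline u_2$ for the union event, finished off with the mixed-product identity. If anything, you are slightly more careful than the paper in flagging that the union of the two one-sided pseudo-termination events is only a sufficient condition for target success (hence a lower bound), whereas the paper's proof states this as an equivalence.
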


\begin{proof}
Consider SBs $ {\bar m-d/2},\ldots,{\bar m-1},{\bar m},{\bar m+1},\ldots,{\bar m+d/2} $, i.e., $\bar  m $ is the index of the target SB. Let $ \{B_m\}_{m=1}^M $ be the Markov chain describing the state of each SB. For every $ w\in\{0,1,\ldots,d/2\} $, let $  B_{R,w}\triangleq B_{\bar m+w} $ and  $  B_{L,w}\triangleq B_{\bar m-w} $. Then, $\{B_{L,w}\}_{w=0}^{d/2} $ and $\{B_{R,w}\}_{w=0}^{d/2} $ are two Markov chains with transition matrices $ Q $ and $ \hat Q $, respectively.
The states of the right and left helper SBs correspond to $ \{B_{R,w}\}_{w=1}^{d/2} $ and  $ \{B_{L,w}\}_{w=1}^{d/2} $, respectively, and the target-SB state is $ B_{L,0}=B_{R,0}=B_{\bar m} $.

Similar to the SG-decoder Markov chain in Figure~\ref{Fig:Induced}, we define the two-dimensional Markov chain $ \{\underline X_{w}=\left (X_{L,w},X_{R,w}\right )\}_{w=0}^{d/2} $, with $ (q+1)^2 $ states $ \tilde{ \mathcal S}_q\times\tilde{ \mathcal S}_q  $, the transition matrix 
$ Q_q\otimes \hat Q_q $, and an initial distribution 
\begin{align*}
\Pr\left (\underline X_{0}=(x_L,x_R)\right ) = \left \{
\begin{array}{ll}
\mu_i & x_L=x_R=\tilde s_i,\; i=1,3,4\\
\mu_2 & x_L=x_R=\tilde s^{(1)}_2\\
0 & \text{otherwise}
\end{array}
\right ..
\end{align*}
Success in the two-sided semi-global decoding with even $ d $ helper SBs occurs if and only if $  X_{L,d/2}=\tilde s_1 $ or  $  X_{R,d/2}=\tilde s_1 $, which in view of \eqref{Eq:Suc2Prob} and the fact that $ \left (Q_q \otimes \hat Q_q\right )^{d/2} =\left (Q_q^{d/2} \otimes \hat Q_q^{d/2}\right )  $ completes the proof.
\end{proof}

\begin{example}\label{Ex:GEC}
	Consider an SBMV channel with $ \mathcal E=\{e_1=0.33,e_2=0.42\} $, and consider two transition matrices 
	\begin{align}\label{Eq:P1P2}
	P_1=\begin{pmatrix}
	\alpha& 1-\alpha\\
	1-\alpha& \alpha
	\end{pmatrix},\quad
	P_2=\begin{pmatrix}
	a& 1-a\\
	a& 1-a
	\end{pmatrix},
	\end{align}
	for some parameters $ 0<\alpha,a<1 $.
	Note that $ P_1 $ represents a SB Gilbert-Elliot channel, while $ P_2 $ represents a SB i.i.d. channel (like in \cite{RamCass19}) where each SB's channel parameter is independent of the other SBs, and for each SB $ m $, $ \Pr(E_m=e_1) = 1-\Pr(E_m=e_2)=a $. 
	In order to compare semi-global-decoding performance over these channels, we should equalize their expected erasure rates (such that the comparison is fair). In view of \eqref{Eq:ExpEps}, we set $ \alpha$ and $a $ such that the stationary distributions of both Markov chains ($ P_1 $ and $ P_2 $) coincide. 
	From symmetry, for every $ \alpha\in[0,1] $, the stationary distribution corresponding to $ P_1 $ is the uniform distribution $ (0.5,0.5) $, thus we set $ a=0.5 $.
		
	Assume we use the $ (l=3,r=6,t=1) $ SC-LDPCL protograph from Example~\ref{Ex:361_SCLDPCL} over these channels. Recall that the thresholds induced by this protograph are given by (see Example~\ref{Ex:thresholds}) $ \epsilon^*_1=0.2,\epsilon^*_2=0.3719,\epsilon^*_3=0.4297 $. We calculate $ q(0.33)=3 $ in \eqref{Eq:q}. In addition,
	\[
	a_1=\emptyset,\,a_2=\{1\},\,a_3=\{2\},a_4=\emptyset,\qquad 
	\underline \mu = (0,0.5,0.5,0),
	\]
	and
	\[
	Q_1=\begin{pmatrix}
	1	&	0	&	0	&	0\\
	0	&\alpha	&1-\alpha&	0\\
	0	&1-\alpha&\alpha	&	0\\
	0	&	0	&	0	&	1\\
	\end{pmatrix},\quad
	Q_2=\begin{pmatrix}
	1	&	0	&	0	&	0\\
	0	&	0.5	&0.5	&	0\\
	0	& 	0.5	&0.5	&	0\\
	0	&	0	&	0	&	1\\
	\end{pmatrix},
	\]
	corresponding to $ P_1,P_2 $, respectively.
	
	Figure~\ref{Fig:GEC} plots the decoding success lower bound of \eqref{Eq:Suc1RProb} and \eqref{Eq:Suc2Prob} as a function of $ d $ for the two transition matrices $ P_1 ,P_2$ in \eqref{Eq:P1P2}. For the non-i.i.d. channel $ P_1 $, we calculated the success-probability lower bound for $ \alpha=0.9 $ (blue) and $ \alpha=0.1 $ (green). The former is a model in which the channel parameter tends to stay in the same state between two neighbor SBs, while in the latter the channel model tends to change states. As seen in the plots, the more-realistic channel model, with a positive correlation between SBs ($ \alpha=0.9 $, blue), shows better performance than the i.i.d. model (black) and the $ \alpha=0.1 $ channel model (green) for all $ d $. This can be explained by the observation that in the $ \alpha=0.9 $ model, for every $ d $, there is a higher chance to see a pseudo-termination sequence ($ q $ consecutive SBs in state $ s_2 $). It is also observed that in this setting the one-sided decoder performs better than the two-sided, with generally a small advantage except in low d values where it is more significant.
	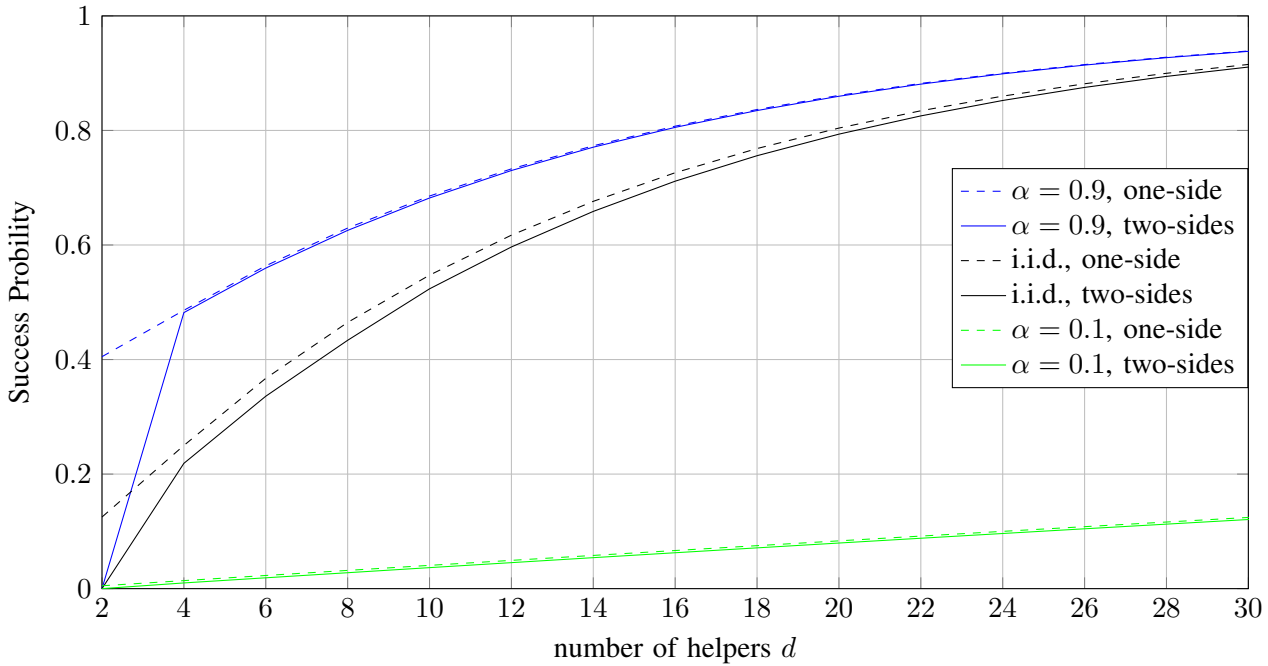
\begin{figure}
		\begin{tikzpicture}

\begin{axis}[%
width=6in,
height=3in,
at={(0,0)},
scale only axis,
xmin=2,
xmax=30,
xlabel={number of helpers $ d $},
ymin=0,
ymax=1,
ylabel={Success Probility},
xmajorgrids,
ymajorgrids,
legend style={
	legend cell align=left, 
	align=left, 
	at={(1,0.35)}, 
	anchor=south east, 
}
]
\addplot [color=blue, dashed]
  table[row sep=crcr]{%
2	0.405\\
4	0.486\\
6	0.5637195\\
8	0.62956359\\
10	0.68547202965\\
12	0.732942419688\\
14	0.773248302669195\\
16	0.80747098740681\\
18	0.836528585556621\\
20	0.861200642021762\\
22	0.882149048255504\\
24	0.899935799204049\\
26	0.915038070268278\\
28	0.927861018762764\\
30	0.938748653304149\\
32	0.947993062728216\\
34	0.955842252125136\\
36	0.962506796214705\\
38	0.968165488555501\\
40	0.972970138142544\\
42	0.977049642074549\\
44	0.980513443550543\\
46	0.983454467965541\\
48	0.985951615873571\\
50	0.988071879698237\\
52	0.989872140977006\\
54	0.991400696354942\\
56	0.992698553266588\\
58	0.993800530066001\\
60	0.994736190118778\\
62	0.995530634915463\\
64	0.996205177483682\\
66	0.996777914164995\\
68	0.997264210095861\\
70	0.997677111416997\\
72	0.998027695269698\\
74	0.998325366968681\\
76	0.998578112323872\\
78	0.998792711880327\\
80	0.99897492282381\\
82	0.999129633432134\\
84	0.999260994215797\\
86	0.999372529266117\\
88	0.99946723079806\\
90	0.999547639424106\\
92	0.999615912312727\\
94	0.999673881059986\\
96	0.999723100826817\\
98	0.999764892060221\\
100	0.999800375917661\\
};
\addlegendentry{$\alpha=0.9$, one-side}

\addplot [color=blue]
  table[row sep=crcr]{%
2	0\\
4	0.48195\\
6	0.5596695\\
8	0.62551359\\
10	0.6820515576\\
12	0.7300430845875\\
14	0.770786304014475\\
16	0.80538053219321\\
18	0.834753636763244\\
20	0.859693579063274\\
22	0.880869439974694\\
24	0.898849316844532\\
26	0.914115566144686\\
28	0.927077744325936\\
30	0.938083595200119\\
32	0.947428379060604\\
34	0.955362793800314\\
36	0.962099700528347\\
38	0.967819834117417\\
40	0.972676651882261\\
42	0.976800450467309\\
44	0.980301861389735\\
46	0.983274819014806\\
48	0.985799080587152\\
50	0.987942365924247\\
52	0.989762174174918\\
54	0.99130732638209\\
56	0.992619275232991\\
58	0.993733217134243\\
60	0.99467903644611\\
62	0.995482107207426\\
64	0.996163973859533\\
66	0.996742929231405\\
68	0.997234505291888\\
70	0.997651889834775\\
72	0.998006280275331\\
74	0.998307184049793\\
76	0.998562673676838\\
78	0.998779603323679\\
80	0.998963792686759\\
82	0.999120183120089\\
84	0.9992529701998\\
86	0.999365716281275\\
88	0.999461446068508\\
90	0.999542727759576\\
92	0.99961174194517\\
94	0.999670340108552\\
96	0.999720094296364\\
98	0.999762339292827\\
100	0.999798208428767\\
};
\addlegendentry{$\alpha=0.9$, two-sides}

\addplot [color=black, dashed]
  table[row sep=crcr]{%
2	0.125\\
4	0.25\\
6	0.3671875\\
8	0.46484375\\
10	0.54736328125\\
12	0.6171875\\
14	0.676239013671875\\
16	0.726181030273438\\
18	0.76841926574707\\
20	0.804141998291016\\
22	0.834354281425476\\
24	0.859906136989594\\
26	0.881516464054585\\
28	0.899793267250061\\
30	0.91525076283142\\
32	0.928323846077546\\
34	0.939380327035906\\
36	0.948731278826017\\
38	0.956639789654218\\
40	0.963328364777226\\
42	0.968985186667965\\
44	0.973769409512897\\
46	0.977815637001065\\
48	0.981237709387042\\
50	0.984131906376487\\
52	0.986579655947197\\
54	0.988649825318101\\
56	0.990400658522408\\
58	0.991881415098362\\
60	0.993133756002017\\
62	0.994192915734573\\
64	0.995088693661968\\
66	0.995846292417419\\
68	0.996487026975289\\
70	0.997028924346024\\
72	0.997487230764496\\
74	0.997874840641487\\
76	0.998202659346803\\
78	0.998479910030891\\
80	0.998714393117366\\
82	0.998912705767249\\
84	0.999080427489505\\
86	0.999222277120041\\
88	0.999342245585738\\
90	0.999443708188829\\
92	0.999529519570731\\
94	0.999602094027127\\
96	0.999663473433967\\
98	0.999715384695463\\
100	0.999759288330394\\
};
\addlegendentry{i.i.d., one-side}

\addplot [color=black]
  table[row sep=crcr]{%
2	0\\
4	0.21875\\
6	0.3359375\\
8	0.43359375\\
10	0.5234375\\
12	0.5965576171875\\
14	0.658660888671875\\
16	0.711410522460938\\
18	0.755905151367188\\
20	0.793555736541748\\
22	0.825404524803162\\
24	0.852335870265961\\
26	0.875113964080811\\
28	0.894378511235118\\
30	0.910671218764037\\
32	0.924450728809461\\
34	0.936104665277526\\
36	0.945960905803076\\
38	0.954296763773527\\
40	0.961346764931477\\
42	0.967309260528054\\
44	0.972352005150441\\
46	0.97661687585002\\
48	0.980223864402941\\
50	0.983274453123922\\
52	0.985854470053301\\
54	0.988036503741439\\
56	0.9898819455145\\
58	0.991442716714296\\
60	0.992762729501335\\
62	0.99387912231931\\
64	0.994823304781977\\
66	0.995621841379898\\
68	0.996297198870036\\
70	0.996868378375989\\
72	0.99735144998293\\
74	0.997760004868041\\
76	0.998105537687089\\
78	0.998397769975553\\
80	0.998644923663171\\
82	0.998853952397211\\
84	0.999030737182724\\
86	0.999180251844106\\
88	0.999306702963206\\
90	0.999413648231141\\
92	0.99950409654362\\
94	0.999580592655962\\
96	0.999645288779559\\
98	0.999700005134161\\
100	0.999746281159593\\
};
\addlegendentry{i.i.d., two-sides}

\addplot [color=green, dashed]
  table[row sep=crcr]{%
2	0.005\\
4	0.014\\
6	0.0229595\\
8	0.03183719\\
10	0.04063368965\\
12	0.049349933528\\
14	0.057986777447195\\
16	0.0665450224121543\\
18	0.0750254325572287\\
20	0.0834287464735762\\
22	0.0917556843506048\\
24	0.100006952486052\\
26	0.108183246141926\\
28	0.11628525136054\\
30	0.124313646126722\\
32	0.132269101118884\\
34	0.140152280201469\\
36	0.147963840754657\\
38	0.155704433901613\\
40	0.163374704671125\\
42	0.170975292119484\\
44	0.178506829426534\\
46	0.18596994397534\\
48	0.193365257421361\\
50	0.200693385754865\\
52	0.207954939358924\\
54	0.21515052306445\\
56	0.222280736203215\\
58	0.229346172659433\\
60	0.236347420920264\\
62	0.243285064125489\\
64	0.250159680116488\\
66	0.256971841484634\\
68	0.263722115619139\\
70	0.270411064754424\\
72	0.277039246017005\\
74	0.283607211471939\\
76	0.290115508168833\\
78	0.296564678187422\\
80	0.302955258682729\\
82	0.309287781929812\\
84	0.315562775368103\\
86	0.321780761645339\\
88	0.327942258661095\\
90	0.334047779609922\\
92	0.340097833024094\\
94	0.346092922815959\\
96	0.352033548319917\\
98	0.357920204334003\\
100	0.363753381161105\\
};
\addlegendentry{$\alpha=0.1$, one-side}

\addplot [color=green]
  table[row sep=crcr]{%
2	0\\
4	0.00995\\
6	0.0189095\\
8	0.02778719\\
10	0.0366562616\\
12	0.0453797265875\\
14	0.054075536123675\\
16	0.0626512417881215\\
18	0.0711812929557757\\
20	0.0796082728727249\\
22	0.0879788046809135\\
24	0.0962573733489448\\
26	0.104473265438987\\
28	0.112604607878743\\
30	0.120669887483258\\
32	0.128655725205424\\
34	0.136573877972471\\
36	0.144416253420574\\
38	0.1521904128697\\
40	0.159891552632824\\
42	0.167524616688367\\
44	0.175086845497959\\
46	0.182581548679204\\
48	0.190007235711041\\
50	0.197366194125651\\
52	0.204657719685941\\
54	0.211883459247537\\
56	0.219043194255752\\
58	0.226138168514063\\
60	0.233168461992592\\
62	0.240135063514875\\
64	0.247038235076883\\
66	0.253878802817915\\
68	0.260657138269572\\
70	0.267373962442077\\
72	0.274029711321979\\
74	0.280625036706736\\
76	0.287160411027828\\
78	0.293636439307564\\
80	0.300053613043469\\
82	0.306412504524057\\
84	0.312713613554323\\
86	0.318957488499618\\
88	0.325144630836053\\
90	0.331275570557374\\
92	0.337350806740779\\
94	0.343370854528929\\
96	0.34933620812734\\
98	0.355247370081958\\
100	0.36110482824757\\
};
\addlegendentry{$\alpha=0.1$, two-sides}

\end{axis}

\end{tikzpicture}%
		\caption{\label{Fig:GEC}Plots corresponding to the one-sided and two-sided semi-global-decoding success probabilities in \eqref{Eq:Suc1LProb} and \eqref{Eq:Suc2Prob} with $ q=3 $ for the SB Gilbert-Elliot channel.}
	\end{figure}
\end{example}

\begin{example}\label{Ex:4States}
	Consider a generalization of the SB Gilbert-Elliot channel from Example~\ref{Ex:GEC}, where $ \mathcal E=\{e_1=0.175,e_2=0.35,e_3=0.42,e_4=0.47\} $, and 
	\begin{align}\label{Eq:P1}
	P_1=\begin{pmatrix}
	0		&	0.5			& 0.5				&	0\\
	\beta	&\alpha			& 1-\alpha-2\beta	&	\beta\\
	\beta	&1-\alpha-2\beta&\alpha				& 	\beta\\
	0		&	0.5			& 0.5				&	0
	\end{pmatrix}
	\end{align}
	for some parameters $ 0<\alpha,\beta<1 $ such that $ \alpha+2\beta\leq 1 $. Like in Example~\ref{Ex:GEC}, we use the same $ (l=3,r=6,t=1) $ SC-LDPCL protograph, whose thresholds imply that for every $ i\in\{1,2,3,4\} $, $a_i=\{i\}$; $ q=3 $ remains for the value $e_2=0.35$, as before.
	The states $ s_1 $ and $ s_4 $ (corresponding to $ a_1 $ and $ a_4 $, respectively) are added to the channel model as extreme states, where $ s_1 $ is a very good state (local decoding/termination), and $ s_4 $ is a very bad state (anti-termination). To reflect the fact that these extreme states are likely rare, we set 
	\[
	\beta = 0.01.
	\]
	It can be verified that for $ \beta=0.01$ and every $ 0\leq \alpha\leq 0.98 $, the stationary distribution of $ P_1 $ is given by $ \tfrac1{2\beta+1}\cdot(\beta,0.5,0.5,\beta)=(0.0098,0.4902,0.4902,0.0098)$. Hence, for the i.i.d. channel model we set 
	\begin{align}\label{Eq:P2}
	P_2=\begin{pmatrix}
	0.0098&0.4902&0.4902&0.0098\\
	0.0098&0.4902&0.4902&0.0098\\
	0.0098&0.4902&0.4902&0.0098\\
	0.0098&0.4902&0.4902&0.0098
	\end{pmatrix}.
	\end{align}
	
	Figure~\ref{Fig:4States} plots the decoding success lower bound of \eqref{Eq:Suc1RProb} and \eqref{Eq:Suc2Prob} as a function of $ d $ for  $ P_1$ and $P_2$ in \eqref{Eq:P1} and \eqref{Eq:P2}, respectively. 
	Similarly to Figure~\ref{Fig:GEC}, the more-realistic channel model, with a positive correlation between SBs ($ \alpha=0.9 $, blue), shows better performance than the i.i.d. model (black) and the $ \alpha=0.1 $ channel model (green) for all $ d$.
	In contrast to Figure~\ref{Fig:GEC}, here the two-sided decoder outperforms the one-sided when $ d $ is large. A possible reason is that the addition of the states $ \tilde{s}_1,\tilde{s}_4  $ introduces a benefit for the two attempts at pseudo-termination offered by the two-sided decoder. 
	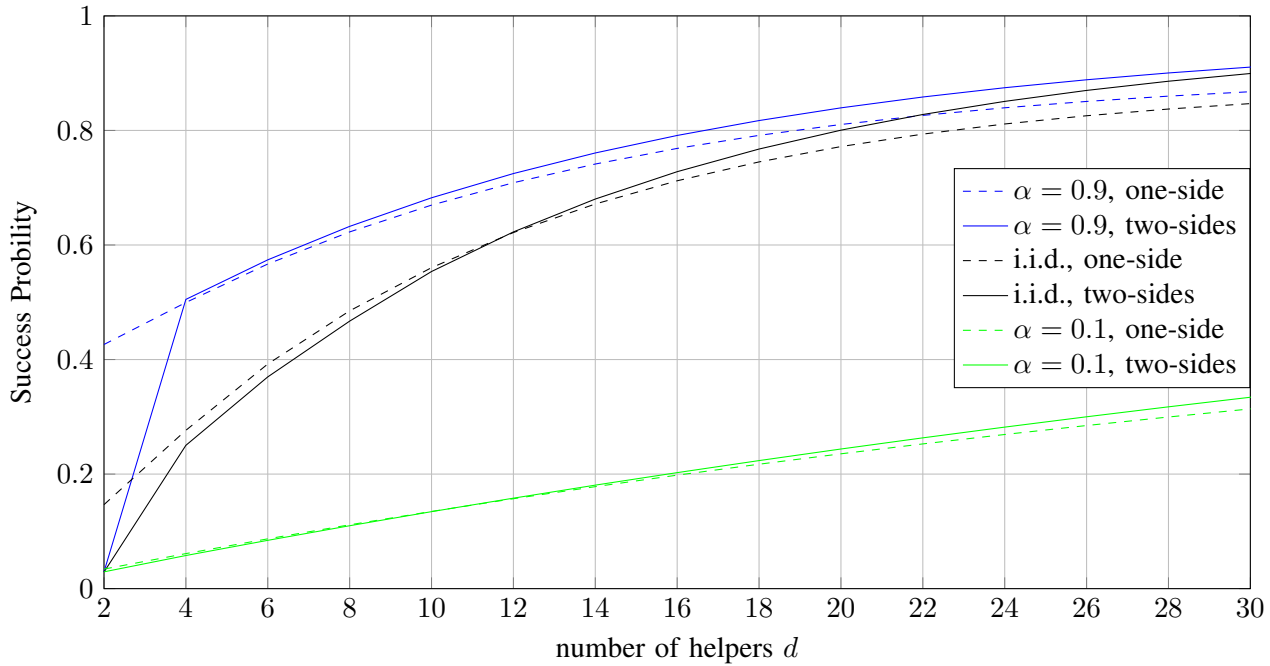
\begin{figure}
		\begin{tikzpicture}

\begin{axis}[%
width=6in,
height=3in,
at={(0,0)},
scale only axis,
xmin=2,
xmax=30,
xlabel={number of helpers $ d $},
ymin=0,
ymax=1,
ylabel={Success Probility},
xmajorgrids,
ymajorgrids,
legend style={
	legend cell align=left, 
	align=left, 
	at={(1,0.35)}, 
	anchor=south east, 
}
]
\addplot [color=blue, dashed]
  table[row sep=crcr]{%
2	0.426274509803921\\
4	0.499632274509803\\
6	0.566793994086274\\
8	0.62282604301881\\
10	0.669625228479611\\
12	0.708712471752033\\
14	0.74135861158571\\
16	0.768625063650278\\
18	0.791398335221836\\
20	0.810418847932612\\
22	0.826305010912928\\
24	0.839573326971212\\
26	0.850655185519475\\
28	0.859910888228413\\
30	0.867641363424712\\
32	0.874097950097801\\
34	0.879490569620991\\
36	0.883994550872123\\
38	0.887756330657125\\
40	0.890898214772937\\
42	0.893522354505027\\
44	0.895714067846355\\
46	0.897544613419753\\
48	0.899073507291548\\
50	0.900350458002407\\
52	0.901416982728581\\
54	0.902307757119375\\
56	0.903051742697789\\
58	0.903673128479203\\
60	0.904192117422735\\
62	0.904625583284949\\
64	0.904987619232047\\
66	0.905289996047424\\
68	0.905542544832181\\
70	0.90575347664123\\
72	0.905929649447254\\
74	0.906076791112243\\
76	0.906199685616022\\
78	0.906302328596583\\
80	0.906388057259251\\
82	0.906459658878405\\
84	0.906519461419422\\
86	0.906569409227227\\
88	0.906611126242291\\
90	0.9066459687994\\
92	0.906675069725821\\
94	0.906699375172632\\
96	0.906719675376692\\
98	0.906736630353412\\
100	0.906750791355673\\
};
\addlegendentry{$\alpha=0.9$, one-side}

\addplot [color=blue]
  table[row sep=crcr]{%
2	0.0293137254901961\\
4	0.50501956862745\\
6	0.574056502925489\\
8	0.632502797790189\\
10	0.682414241997006\\
12	0.724743306443847\\
14	0.760687286124973\\
16	0.791249554533367\\
18	0.817270291895932\\
20	0.839455277623416\\
22	0.858397667038957\\
24	0.87459622222846\\
26	0.8884705786043\\
28	0.900374020332802\\
30	0.91060417316684\\
32	0.91941195598577\\
34	0.927009076197722\\
36	0.933574307324618\\
38	0.93925874795896\\
40	0.944190228585995\\
42	0.948477005446165\\
44	0.95221085778575\\
46	0.955469685768976\\
48	0.958319690385027\\
50	0.960817203362587\\
52	0.963010223971729\\
54	0.964939710288304\\
56	0.966640664718684\\
58	0.968143047081694\\
60	0.969472543109835\\
62	0.970651211688339\\
64	0.971698030351569\\
66	0.972629355379582\\
68	0.973459310181014\\
70	0.974200113426401\\
72	0.974862356537366\\
74	0.975455238582027\\
76	0.975986765325806\\
78	0.976463918097867\\
80	0.976892797221881\\
82	0.97727874399672\\
84	0.977626444573594\\
86	0.977940018540913\\
88	0.978223094579684\\
90	0.978478875176419\\
92	0.97871019206548\\
94	0.978919553808576\\
96	0.979109186697458\\
98	0.979281069979885\\
100	0.979436966252747\\
};
\addlegendentry{$\alpha=0.9$, two-sides}

\addplot [color=black, dashed]
  table[row sep=crcr]{%
2	0.146629124544858\\
4	0.276420349593485\\
6	0.392172207342438\\
8	0.485007589230803\\
10	0.560414789111791\\
12	0.621740425617048\\
14	0.671590727796619\\
16	0.712113847668071\\
18	0.745055255110135\\
20	0.771833413778511\\
22	0.793601449582003\\
24	0.811296744175491\\
26	0.825681296476039\\
28	0.837374537502465\\
30	0.846880004400028\\
32	0.854607023858224\\
34	0.860888338637444\\
36	0.865994436217387\\
38	0.870145196593567\\
40	0.873519360802286\\
42	0.876262227909897\\
44	0.87849191191596\\
46	0.88030442800655\\
48	0.881777827183167\\
50	0.882975557313633\\
52	0.883949195339358\\
54	0.884740668293928\\
56	0.885384058775056\\
58	0.885907072617559\\
60	0.886332231968642\\
62	0.886677845141992\\
64	0.886958795014759\\
66	0.887187179917578\\
68	0.887372834615769\\
70	0.88752375381636\\
72	0.88764643643807\\
74	0.887746165469305\\
76	0.887827235465467\\
78	0.887893137482124\\
80	0.887946709407683\\
82	0.887990258169218\\
84	0.888025659073918\\
86	0.888054436564008\\
88	0.888077829862652\\
90	0.888096846337672\\
92	0.888112304881065\\
94	0.888124871172324\\
96	0.888135086344091\\
98	0.888143390284535\\
100	0.888150140579926\\
};
\addlegendentry{i.i.d., one-side}

\addplot [color=black]
  table[row sep=crcr]{%
2	0.0289330649599325\\
4	0.25020672472462\\
6	0.369676261835534\\
8	0.467060446212284\\
10	0.553693694842037\\
12	0.62281695903099\\
14	0.680233974313184\\
16	0.727996118295151\\
18	0.767546455629149\\
20	0.800449465712742\\
22	0.827865577024302\\
24	0.850748002502905\\
26	0.869890259288517\\
28	0.885938246305614\\
30	0.899422765249232\\
32	0.910780381240209\\
34	0.920370034485095\\
36	0.928487376083498\\
38	0.935376206817402\\
40	0.94123782153696\\
42	0.946238664638477\\
44	0.950516572394401\\
46	0.954185875805846\\
48	0.957341574038244\\
50	0.960062748672721\\
52	0.962415359176626\\
54	0.964454533341559\\
56	0.966226445371931\\
58	0.967769857202011\\
60	0.969117384608871\\
62	0.970296538314153\\
64	0.971330581006145\\
66	0.972239233669233\\
68	0.973039258465071\\
70	0.973744940405959\\
72	0.974368485983996\\
74	0.974920353597209\\
76	0.97540952790559\\
78	0.975843748041705\\
80	0.976229697799323\\
82	0.976573164453851\\
84	0.976879171668719\\
86	0.977152090962194\\
88	0.977395735408756\\
90	0.977613438594898\\
92	0.977808121314135\\
94	0.977982348048069\\
96	0.978138374921694\\
98	0.978278190527224\\
100	0.978403550769601\\
};
\addlegendentry{i.i.d., two-sides}

\addplot [color=green, dashed]
  table[row sep=crcr]{%
2	0.0341176470588235\\
4	0.0611616862745098\\
6	0.0869294136941176\\
8	0.111442628725478\\
10	0.134762314155686\\
12	0.156946767031424\\
14	0.178051364926043\\
16	0.198128738548086\\
18	0.217228926264788\\
20	0.235399512732435\\
22	0.252685755754747\\
24	0.269130703967499\\
26	0.284775307005084\\
28	0.299658519236687\\
30	0.31381739781586\\
32	0.327287195577438\\
34	0.340101449186057\\
36	0.352292062859016\\
38	0.363889387933539\\
40	0.374922298513356\\
42	0.385418263404878\\
44	0.395403414535188\\
46	0.404902612030006\\
48	0.413939506118332\\
50	0.422536596020664\\
52	0.430715285969093\\
54	0.438495938499716\\
56	0.44589792515066\\
58	0.452939674692288\\
60	0.459638719009879\\
62	0.466011736753138\\
64	0.47207459486129\\
66	0.477842388067193\\
68	0.483329476478834\\
70	0.488549521331812\\
72	0.493515519001813\\
74	0.498239833361774\\
76	0.502734226564297\\
78	0.507009888325952\\
80	0.51107746378638\\
82	0.514947080011562\\
84	0.518628371207231\\
86	0.522130502705215\\
88	0.525462193782411\\
90	0.528631739369223\\
92	0.531647030701496\\
94	0.534515574967371\\
96	0.537244513997976\\
98	0.539840642048484\\
100	0.542310422713806\\
};
\addlegendentry{$\alpha=0.1$, one-side}

\addplot [color=green]
  table[row sep=crcr]{%
2	0.0293137254901961\\
4	0.0578038823529412\\
6	0.0843194362588235\\
8	0.109813572183915\\
10	0.134394225669291\\
12	0.15797693092698\\
14	0.180704955713044\\
16	0.202536234231215\\
18	0.22356766857528\\
20	0.243787247138646\\
22	0.26326319403767\\
24	0.281999933878527\\
26	0.300048017798203\\
28	0.317420434686644\\
30	0.334156538822393\\
32	0.350273402292324\\
34	0.365803022537918\\
36	0.380764033256872\\
38	0.395183380166141\\
40	0.409079866555177\\
42	0.422476794542083\\
44	0.435392393950394\\
46	0.447847204565683\\
48	0.459858510492346\\
50	0.471444658741292\\
52	0.482621825402086\\
54	0.493406547914383\\
56	0.503813849143576\\
58	0.51385872706803\\
60	0.523555069722281\\
62	0.532916535641344\\
64	0.541955929359104\\
66	0.550685725290136\\
68	0.55911771128083\\
70	0.567263303388027\\
72	0.575133345254062\\
74	0.582738299917241\\
76	0.590088139567335\\
78	0.597192464757525\\
80	0.604060446378932\\
82	0.610700901769392\\
84	0.617122266662228\\
86	0.623332645494471\\
88	0.629339800374715\\
90	0.635151185765279\\
92	0.640773946937302\\
94	0.646214945029846\\
96	0.651480760626826\\
98	0.65657771275179\\
100	0.661511865049646\\
};
\addlegendentry{$\alpha=0.1$, two-sides}

\end{axis}

\end{tikzpicture}%
		\caption{\label{Fig:4States}Plots corresponding to the one-sided and two-sided semi-global-decoding success probabilities in \eqref{Eq:Suc1LProb} and \eqref{Eq:Suc2Prob} for the SB generalized Gilbert-Elliot channel.}
	\end{figure}
\end{example}

\subsection{Code Design}\label{Sub:code}

We now perform a threshold analysis and performance evaluation of semi-global decoding over the SB Markov-varying channel for a family of SC-LDPCL protographs sharing the same code rate and node degrees, however, differ in edge spreading. 
We focus on protographs in which the SBs are symmetric (see Definition~\ref{Def:SymSB}), which have the convenient property in which left and right helper SBs have identical structure. We first establish a connection between symmetric SBs and their degree profile.

\begin{definition}\label{Def:symdeg}
	A degree profile  $ \udc \in \mathbb N^{a},\, \udv \in \mathbb N^{b}$ is said to be \emph{symmetric} if by possibly reordering the elements of $  \udc $ and $  \udv $ we get vectors $ \tilde{\underline d}_C$ and $ \tilde{\underline d}_V $, respectively, such that
	\begin{align*}
	\tilde{d}_{C,i} + \tilde{d}_{C,a+1-i} = b,\quad \forall 1\leq i \leq a,\\
	\tilde{d}_{V,j} + \tilde{d}_{V,b+1-j} = a,\quad \forall 1\leq j \leq b.
	\end{align*}
\end{definition}

\begin{example}
	The protomatrix in Example~\ref{Ex:361_SCLDPCL} has a symmetric degree profile.
\end{example}
\begin{example}
	The following protomatrix has a symmetric degree profile:
	\begin{align*}
	A = \begin{pmatrix}
	0&1&0&1&1&1&0&0\\
	0&0&0&0&1&1&1&0\\
	0&1&0&0&1&1&1&1
	\end{pmatrix}.
	\end{align*}
	Indeed, $a=3,b=8$, $ \udc(A)=(4,3,5)$ and $\udv(A) = (0,2,0,1,3,3,2,1)$ and after sorting  $ \udc(A)$ and  $ \udv(A)$ in ascending and descending orders (Definition~\ref{Def:symdeg} allows arbitrary reordering), respectively, we get $ \tilde{\underline d}_C(A) = (3,4,5),\,\tilde{\underline d}_V(A) =(3,3,2,2,1,1,0,0)$. 
\end{example}

\begin{lemma}\label{Lemma:symdegsum}
	For every symmetric degree profile $ \udc \in \mathbb N^{a},\, \udv \in \mathbb N^{b}$ we have
	\[
	\|\udc\|_1 = \frac{ab}{2} = \|\udv\|_1.
	\]
\end{lemma}

\begin{proof}
	Let $ \tilde{\underline d}_C=(\tilde{d}_{C,1},\tilde{d}_{C,2},\ldots,\tilde{d}_{C,a}) $ be the sorted version of $ \udc $. Then,
	\begin{align*}
	2\|\udv\|_1
	&= 2\|\udc\|_1 \\
	&= 2\|\tilde{\underline d}_C\|_1\\
	&= \sum_{i=1}^a \tilde{d}_{C,i}+\tilde{d}_{C,a+1-i}\\
	&= ab .
	\end{align*}
\end{proof}

\begin{corollary}\label{Coro:nec_symdeg}
	If $ a $ and $ b $ are both odd, then no symmetric degree profile $ \udc \in \mathbb N^{a},\, \udv \in \mathbb N^{b}$ exists.
\end{corollary}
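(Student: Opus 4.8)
The plan is to argue by contradiction, using Lemma~\ref{Lemma:symdegsum} as the single workhorse. So suppose, toward a contradiction, that $a$ and $b$ are both odd yet a symmetric degree profile $\udc \in \mathbb{N}^{a},\, \udv \in \mathbb{N}^{b}$ does exist. By Lemma~\ref{Lemma:symdegsum} this forces $\|\udc\|_1 = ab/2$.

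The next step is the parity observation. Since $\udc$ has entries in $\mathbb{N}$, the quantity $\|\udc\|_1 = \sum_{i=1}^{a} d_{C,i}$ is a non-negative integer; hence $ab/2 \in \mathbb{N}$, i.e., $ab$ is even. But the product of two odd integers is odd, contradicting $ab$ even. Therefore no symmetric degree profile on an odd-by-odd shape can exist. This is essentially the whole argument — there is no real obstacle, the only care needed is to note that $a,b\ge 1$ so the vectors are nonempty and the $l_1$-norm is genuinely an integer.

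For the write-up I would also point out, as a short remark, the alternative self-contained route that bypasses the lemma: when $b$ is odd the middle index $j=(b+1)/2$ in Definition~\ref{Def:symdeg} is paired with itself, so $2\tilde d_{V,(b+1)/2}=a$, which is impossible for odd $a$ since $\tilde d_{V,(b+1)/2}$ is an integer (symmetrically one may instead use $i=(a+1)/2$ on the check side). I would nonetheless keep the main proof phrased via Lemma~\ref{Lemma:symdegsum} for continuity with the preceding development, since the corollary is stated immediately after it.
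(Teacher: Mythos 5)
Your proof is correct and matches the paper's intended argument: the corollary is stated immediately after Lemma~\ref{Lemma:symdegsum} precisely because it follows from the parity contradiction $\|\udc\|_1 = ab/2 \notin \mathbb{N}$ when $a$ and $b$ are both odd. The alternative middle-index remark is a valid bonus but not needed.
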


We now connect the definition of symmetric SBs in Definition~\ref{Def:SymSB} with symmetric degree profiles from Definition~\ref{Def:symdeg}. Recall that symmetric SBs simplify the analysis since left and right helper SBs have an identical structure. This leads to a simpler description of the semi-global-decoding process (e.g., unified left and right SB thresholds). The following proposition shows that if we are interested in symmetric SBs we only need to consider symmetric degree profiles.
\begin{proposition}\label{Prop:symSB}
	If a SB $\left (B_\mathrm{left}\;;\;B_\mathrm{loc}\;;\;B_\mathrm{right}\right )\in\{0,1\}^{(t+l)\times r}$ is symmetric then the degree-profile of $ B_\mathrm{left} $ is symmetric. In addition, if $ t\leq 2 $, then the converse holds as well.
\end{proposition}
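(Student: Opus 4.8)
The plan is to prove the two directions separately. For the forward direction, suppose the SB $(\Bleft;\Bloc;\Bright)$ is symmetric, meaning there exist a row permutation $\pi$ and a column permutation $\rho$ with $\pi\cdot\Bleft\cdot\rho = \Bright$. First I would observe that $\Bloc$ is invariant under column permutations up to its (trivial) all-ones structure, and more to the point, use the structural identity $\Bleft + \Bright = 1^{t\times r}$ from the construction (equation~\eqref{Eq:dBU=dBD2}). Applying the column permutation $\rho$ to this identity gives $\Bleft\rho + \Bright\rho = 1^{t\times r}$, and applying $\pi$ on the left to the relation $\Bright = \pi\Bleft\rho$ lets me express the degree profiles of $\Bright$ in terms of those of $\Bleft$. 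Concretely, row/column permutations merely reorder the entries of $\udc(\cdot)$ and $\udv(\cdot)$, so $\udc(\Bright)$ is a reordering of $\udc(\Bleft)$ and $\udv(\Bright)$ is a reordering of $\udv(\Bleft)$. Combining this with \eqref{Eq:dBU=dBD1}, $\udc(\Bleft)+\udc(\Bright) = r^{1\times t}$, I get that the multiset $\{d_i\}$ of row-degrees of $\Bleft$ satisfies: its reordering $\{d_{\pi^{-1}(i)}\}$ obeys $d_i + d_{\pi^{-1}(i)} = r$. Sorting $\udc(\Bleft)$ ascending and noting the pairing induced by $\pi$ must pair the smallest with the largest (since the sums are constant $=r$), I obtain $\tilde d_{C,i} + \tilde d_{C,t+1-i} = r$ for all $i$. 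The analogous argument with \eqref{Eq:dBU=dBD2} and the column permutation $\rho$ gives $\tilde d_{V,j} + \tilde d_{V,r+1-j} = t$. Since $\Bleft$ is $t\times r$, these are exactly the conditions in Definition~\ref{Def:symdeg} with $a=t$, $b=r$, so the degree profile of $\Bleft$ is symmetric.

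For the converse under the restriction $t\le 2$, the plan is to show that a symmetric degree profile forces enough rigidity that $\Bleft$ can be permuted into $\Bright = 1^{t\times r} - \Bleft$. The case $t=1$ is essentially immediate: $\Bleft$ is a single row of length $r$ with some number $d_1$ of ones, $\Bright$ is the complementary row with $r - d_1$ ones, and symmetry of the degree profile forces $d_1 + d_1 = r$ wait — with $a=t=1$ the condition reads $\tilde d_{C,1} + \tilde d_{C,1} = r$, i.e. $d_1 = r/2$, so $\Bleft$ and $\Bright$ have the same number of ones and a single column permutation maps one to the other (no row permutation needed since $t=1$). The case $t=2$ is the substantive one: here $\Bleft$ has rows of degrees $d_1, d_2$ with $d_1 + d_2 = r$ (from the $a=2$ symmetry condition), so $\Bright$ (complement) has row degrees $r - d_1 = d_2$ and $r - d_2 = d_1$; thus after swapping the two rows of $\Bright$ we match the row-degree sequence of $\Bleft$. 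I would then argue that with only two rows, the column types are just the four patterns $\binom{0}{0},\binom{1}{0},\binom{0}{1},\binom{1}{1}$, and the column-degree symmetry condition $\tilde d_{V,j} + \tilde d_{V,r+1-j} = 2$ pins down the counts of each column type in $\Bleft$ versus $\Bright$ so that a column permutation completes the transformation.

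The main obstacle I anticipate is the $t=2$ converse: one has to carefully count columns of each of the four types. Let $n_{00}, n_{10}, n_{01}, n_{11}$ be the number of columns of $\Bleft$ with the respective patterns; then $\Bright$ (the complement) has $n_{11}$ columns of type $\binom{0}{0}$, $n_{01}$ of type $\binom{1}{0}$, $n_{10}$ of type $\binom{0}{1}$, and $n_{00}$ of type $\binom{1}{1}$. After swapping the two rows of $\Bright$, its type counts become $n_{11}, n_{10}, n_{01}, n_{00}$ — so to match $\Bleft$ via a column permutation I need $n_{00} = n_{11}$ and $n_{01} = n_{10}$ is not required; rather I need the swapped-$\Bright$ type-multiset to equal the $\Bleft$ type-multiset, i.e. $(n_{00},n_{10},n_{01},n_{11}) = (n_{11},n_{10},n_{01},n_{00})$, which is exactly $n_{00} = n_{11}$. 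The remaining work is to show the symmetric-degree-profile hypothesis implies $n_{00} = n_{11}$: summing the VN degrees, $\sum_j \tilde d_{V,j} = \|\udv(\Bleft)\|_1 = tr/2 = r$ by Lemma~\ref{Lemma:symdegsum}, and on the other hand $\|\udv(\Bleft)\|_1 = n_{10} + n_{01} + 2n_{11}$ while $n_{00}+n_{10}+n_{01}+n_{11} = r$; subtracting gives $n_{11} = n_{00}$, as needed. So the obstacle is really just bookkeeping, and the proof goes through; I would also remark (as the statement implicitly concedes by the $t\le2$ hypothesis) that for $t\ge 3$ the converse can fail because the degree profile no longer determines the matrix up to row/column permutation.
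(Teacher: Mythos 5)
Your proposal is correct and follows essentially the same route as the paper's proof: the forward direction rests on the complement identity $\Bleft+\Bright=1^{t\times r}$ together with the fact that row/column permutations merely reorder degree profiles, and the converse handles $t=1$ directly and $t=2$ by a row swap plus a column permutation matching the all-ones columns of $\Bleft$ with its all-zeros columns. Your explicit column-type count $(n_{00},n_{10},n_{01},n_{11})$ and the $\ell_1$-norm argument for $n_{00}=n_{11}$ just spell out the step the paper leaves as ``it can be verified,'' so there is nothing to fix.
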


\begin{proof}
	See Appendix~\ref{App:symSB}.	
\end{proof}

We now proceed to performance comparison between different codes over the SB Markov-varying channel. We focus on protographs with a coupling parameter $  t\leq 2$ and symmetric SBs.
Let $ l=4$ and $r=6 $ be the VN and CN degrees, respectively, of the base matrix $ B $. 
We consider $ t\in\{0,1,2\} $, where $ t=0 $ corresponds to an isolated SB (i.e., not coupled to its neighbors), and $ t\in\{1,2\} $ corresponds to a SB in a proper SC-LDPC code with SB locality.

In the case of $ t=0 $, the protograph consists of isolated SBs, each being a $ (l,r) $ code. Since there are no coupling checks, all of the SB's thresholds coincide
$\epsilon^*_3=\epsilon^*_2=\epsilon^*_1=0.5061$.

For $ t=1 $, we have only a single edge-spreading rule that induces symmetric SBs
	\begin{align}\label{Eq:edgespread1_t=1}
	\left (
	\begin{array}{c}
	\phantom{
		\begin{array}{cc}
		1&1
		\end{array}
	}
	B_\mathrm{left} 
	\phantom{
		\begin{array}{cc}
		1&1
		\end{array}
	}
	\\
	\hline
	\phantom{
		\begin{array}{c}
		1\\1\\1
		\end{array}
	}
	B_\mathrm{loc} 
	\phantom{
		\begin{array}{c}
		1\\1\\1
		\end{array}
	} \\
	\hline
	B_\mathrm{right}
	\end{array}\right )=
	\begin{pmatrix}
	1&1&1&0&0&0\\
	\hline
	1&1&1&1&1&1\\
	1&1&1&1&1&1\\
	1&1&1&1&1&1\\
	\hline
	0&0&0&1&1&1
	\end{pmatrix}.
	\end{align}
All other assignments to $ B_{\mathrm{left}} $ with $ t=1 $ and $ r=6 $ violate the necessary condition for symmetric SBs in Proposition~\ref{Prop:symSB}. The thresholds for the SB in \eqref{Eq:edgespread1_t=1} are 
$\epsilon^*_1=0.4294,\; \epsilon^*_2=0.4788,\;\epsilon^*_3=0.5474$.

If we use $ t=2 $ coupling check nodes, then more edge-spreading rules are possible. Let $ (d_{C,1},d_{C,2})=\udc(B_{\mathrm{left}})$ be the check degree profile of $ B_{\mathrm{left}} $. 
In view of Proposition~\ref{Prop:symSB}, a SB is symmetric if and only if $d_{C,1}+d_{C,2}=r=6  $. Thus, we set $ d_{C,1}\in \{1,2,3\} $ and $  d_{C,2} = 6-d_{C,1}$. We can assume w.l.o.g. that the first row in $ B_{\mathrm{left}} $ is left aligned, meaning it has $d_{C,1}  $ consecutive ones followed by $  6-d_{C,1} $ zeros.
After setting the first row of $ B_{\mathrm{left}} $, we set the overlap between the first an second rows of $ B_{\mathrm{left}} $. Let $ j\in\{1,2,\ldots,r-(d_{C,2}-1)\} $. The second row of $ B_{\mathrm{left}} $ starts with $ j-1 $ zeros, followed by $d_{C,2}  $ ones, and additional $ r-(d_{C,2}-j+1) $ zeros.
In all of the $ t=2 $ designs we have $B_{\mathrm{loc}}=1^{2\times6}$, while $ B_{\mathrm{left}}  $ is given by Table~\ref{Tbl:Bleft}.
All other assignments to $	B_{\mathrm{left}} $  with $ r=6,t=2 $ (i.e., not in Table~\ref{Tbl:Bleft}) either induce non-symmetric SBs or are permuted versions of those in Table~\ref{Tbl:Bleft} and thus are equivalent. 
The local threshold of all $ t=2 $ SBs is $ \epsilon^*_1=0.2 $ (the threshold of the $(l=2,r=6)  $ LDPC ensemble); the other thresholds, alongside with their $q$ value (see \eqref{Eq:q}) for erasure parameter $0.435$, are given in Table~\ref{Tbl:Bleft}. Note that for $ \udc=(3,3) $ and $ j=1 $ (line 8 in Table~\ref{Tbl:Bleft}) we have $ 0.435>\epsilon^*_2 $, and thus $ q(0.435) =\infty$. For all other designs  $ 0.435<\epsilon^*_2 $ and $ q(0.435) <\infty$. Also shown are the global thresholds $ \epsilon_G^* $ (i.e., the threshold of the coupled protograph \cite{RamCass18a}) for protographs with $ M=50 $ SBs.

\begin{table}
	\caption{\label{Tbl:Bleft}All possible edge connections for the coupling check nodes in $l=4,r=6$ (rate $ 0.33 $) SC-LDPCL protographs with symmetric SBs.}
	\begin{center}
		\[\begin{array}{c|c|c|c|c|c|c|c|c|c}
		\#&t	&\udc	&	j	&		B_{\mathrm{left}} 									&\epsilon^*_1&\epsilon^*_2&\epsilon^*_3&\epsilon^*_G&q(0.435)	\\
				\hline
		1&0		&		&		&															&0.5061		&0.5061		&0.5061		&0.5061		&1	\\[3mm]
		2&1		&3		&	1	&	\begin{pmatrix}1&1&1&0&0&0\end{pmatrix}					&0.4294		&0.4788		&0.5474		&0.5564		&2	\\[3mm]
		3&2		&(1,5)	&	1	&	\begin{pmatrix}1&0&0&0&0&0\\1&1&1&1&1&0	\end{pmatrix}	&0.2		&0.4368		&0.5320		&0.5836		&10	\\[3mm]
		4&2		&(1,5)	&	2	&	\begin{pmatrix}1&0&0&0&0&0\\0&1&1&1&1&1	\end{pmatrix}	&0.2		&0.4423		&0.5175		&0.5655		&5	\\[3mm]
		5&2		&(2,4)	&	1	&	\begin{pmatrix}1&1&0&0&0&0\\1&1&1&1&0&0	\end{pmatrix}	&0.2		&0.4386		&0.5918		&0.6114		&11	\\[3mm]
		6&2		&(2,4)	&	2	&	\begin{pmatrix}1&1&0&0&0&0\\0&1&1&1&1&0	\end{pmatrix}	&0.2		&0.4442		&0.5722		&0.5966		&7	\\[3mm]
		7&2		&(2,4)	&	3	&	\begin{pmatrix}1&1&0&0&0&0\\0&0&1&1&1&1	\end{pmatrix}	&0.2		&0.4488		&0.5594		&0.5863		&5	\\[3mm]
		8&2		&(3,3)	&	1	&	\begin{pmatrix}1&1&1&0&0&0\\1&1&1&0&0&0	\end{pmatrix}	&0.2		&0.4345		&0.4991		&0.6338		&\infty \\[3mm]
		9&2		&(3,3)	&	2	&	\begin{pmatrix}1&1&1&0&0&0\\0&1&1&1&0&0	\end{pmatrix}	&0.2		&0.4411		&0.5974		&0.6104		&10	\\[3mm]
		10&2	&(3,3)	&	3	&	\begin{pmatrix}1&1&1&0&0&0\\0&0&1&1&1&0	\end{pmatrix}	&0.2		&0.4463		&0.5802		&0.5982		&7	\\[3mm]
		11&2	&(3,3)	&	4	&	\begin{pmatrix}1&1&1&0&0&0\\0&0&0&1&1&1	\end{pmatrix}	&0.2		&0.4507		&0.5688		&0.5900		&6	\\[3mm]
		\end{array}.\]
	\end{center}
\end{table}

In what follows, we evaluate the semi-global performance of the above codes over the SBMV channel. We consider two SB Gilbert-Elliot channels, both with the transition matrix $ P_1 $ in \eqref{Eq:P1P2} with $ \alpha=0.9 $. For the first one $ \mathcal E_1 = \{0.435,0.54\} $ and for the second $   \mathcal E_2 = \{0.435,0.57\}$. 
Note that in some of the possible designs in Table~\ref{Tbl:Bleft}, we have that the erasure rate in the "bad" state ($ 0.54 $ and $ 0.57 $ for $ \mathcal E_1 $ and $ \mathcal E_2 $, receptively) is greater than $\epsilon^*_3 $. In view of Section~\ref{Sub:SBMVPerf}, this means that in these particular cases, "bad" SBs are anti-termination SBs (i.e, in state $ s_4 $, see Section~\ref{Sub:BlockState}). 
For these cases, \eqref{Eq:Suc1RProb}, \eqref{Eq:Suc1LProb} simplify for one-sided decoding to
\begin{subequations}
\begin{align*}
p_R(d) = p_L(d) = \left \{\begin{array}{ll}
0.5\cdot\alpha^{q-1} & d\geq q-1\\
0			& \text{otherwise}
\end{array}\right .,
\end{align*}
and for the two-sided mode \eqref{Eq:Suc2Prob} simplifies to
\begin{align*}
p_2(d)= \left \{\begin{array}{ll}
0.5\cdot(2\alpha^{q-1}-\alpha^{2(q-1)}) & \tfrac{d}{2}\geq q-1\\
0			& \text{otherwise}
\end{array}\right ..
\end{align*}
\end{subequations}
In view of Section~\ref{Sub:SBMVPerf}, for a given SBMV channel, the performance of semi-global decoding is determined by: 1) the code thresholds, and in particular the induced $ Q $ matrix in \eqref{Eq:Q}; 2) the parameter $ q(e) $ which is the minimal number of SBs with erasure rate $ e<\epsilon^*_2 $ needed for pseudo-termination. Hence, the performance of some of the possibilities in Table~\ref{Tbl:Bleft} coincide. 
Further, some of the designs have properties (thresholds and $ q $) that are dominated by others, e.g., the design in line 3 is dominated by the design in line 7 (worse threshold and $ q $).

In Figure~\ref{Fig:CodeDesign} we eliminate these multiplicities and dominated cases and plot the performance of three designs corresponding to lines 1,2, and 6 in Table~\ref{Tbl:Bleft}. 
The success probabilities of the two-sided semi-global decoding mode over the SB Gilbert-Elliot channel (with $ \mathcal E_1 $ and $ \mathcal E_2 $) are plotted for these three design options. The plots show the trade-offs between the different designs. For example, the uncoupled protograph ($ t=0 $) has the highest success probability for no helper SBs ($ d=0 $), however, since the SBs are uncoupled, then adding helper SBs does not improve the performance.
In addition, on one hand, the $ t=1 $ design shows high success probabilities for the $ \mathcal E_1 $ channel, and on the other hand the performance is poor for the $ \mathcal E_2 $ channel. This degradation is a result of a relatively small $ \epsilon_3^* $ threshold ($ 0.5474 $, see line 2 in Table~\ref{Tbl:Bleft}). 
Furthermore, the $ t=2 $ design with $ (d_{C,1},d_{C,2})=(2,4),\,j=2 $ (line 6 in Table~\ref{Tbl:Bleft}) shows the best performance for the $ \mathcal{E}_2 $ channel (for $ d\geq 18 $), while it requires high values for $ d $ (see $ q=7 $ in Table~\ref{Tbl:Bleft}).
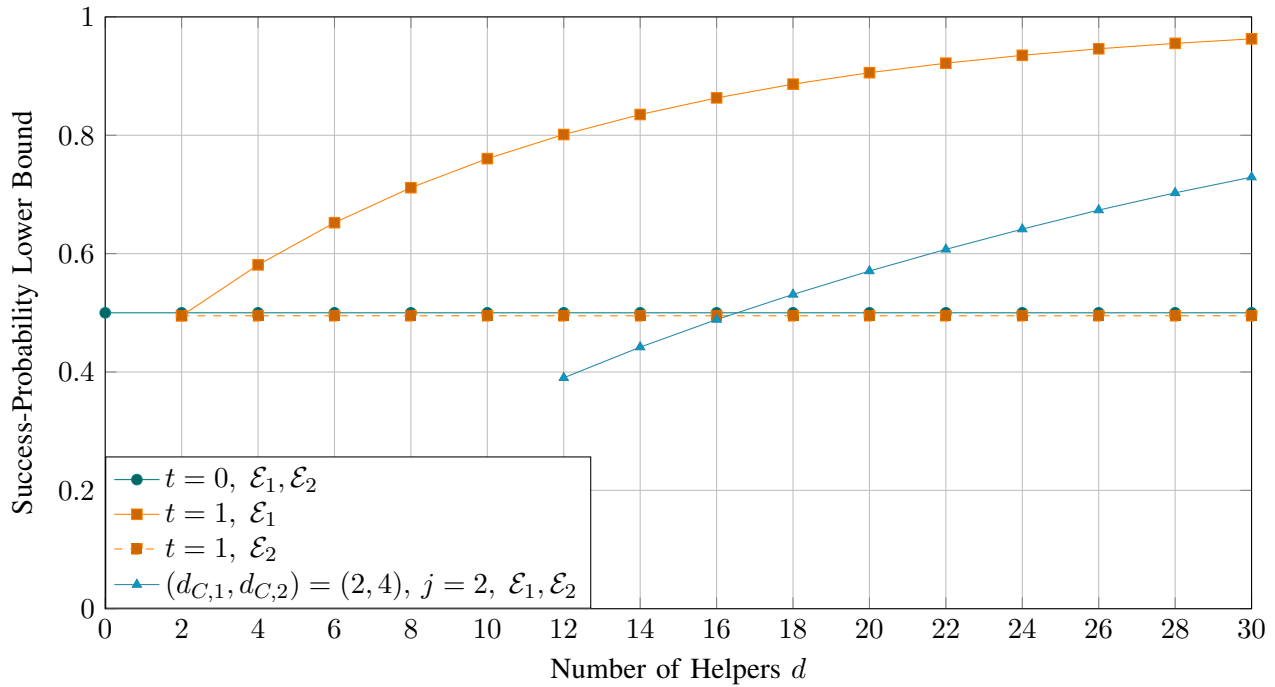
\begin{figure}
	\begin{center}
		\begin{tikzpicture}

\pgfplotscreateplotcyclelist{Mylist}{%
	teal,every mark/.append style={fill=teal!80!black},mark=*\\
	orange,every mark/.append style={fill=orange!80!black},mark=square*\\
	orange,every mark/.append style={fill=orange!80!black},mark=square*,dashed\\
	cyan!60!black,every mark/.append style={fill=cyan!80!black},mark=triangle*\\
	cyan!60!black,every mark/.append style={fill=cyan!80!black},mark=triangle*,dashed\\
	red!70!white,mark=star\\
	lime!80!black,every mark/.append style={fill=lime},mark=diamond*\\
	red,densely dashed,every mark/.append style={solid,fill=red!80!black},mark=*\\%
	yellow!60!black,densely dashed,
	every mark/.append style={solid,fill=yellow!80!black},mark=square*\\%
	black,every mark/.append style={solid,fill=gray},mark=otimes*\\%
	blue,densely dashed,mark=star,every mark/.append style=solid\\%
	red,densely dashed,every mark/.append style={solid,fill=red!80!black},mark=diamond*\\%
}
\begin{axis}[%
width=6in,
height=3.1in,
at={(0,0)},
scale only axis,
xmin=0,
xmax=30,
xlabel={Number of Helpers $ d $},
ymin=0,
ymax=1,
ylabel={Success-Probability Lower Bound},
xmajorgrids,
ymajorgrids,
cycle list name=Mylist,
legend style={
	legend cell align=left, 
	align=left,
	at={(0,0)}, 
	anchor=south west
}
]
\addplot
table[row sep=crcr]{%
	0   0.5\\
	2	0.5\\
	4	0.5\\
	6	0.5\\
	8	0.5\\
	10	0.5\\
	12	0.5\\
	14	0.5\\
	16	0.5\\
	18	0.5\\
	20	0.5\\
	22	0.5\\
	24	0.5\\
	26	0.5\\
	28	0.5\\
	30	0.5\\
};
\addlegendentry{$t=0 ,\;\mathcal{E}_1,\mathcal E_2$}

\addplot
  table[row sep=crcr]{%
2	0.495\\
4	0.58095\\
6	0.652239\\
8	0.711400275\\
10	0.76049700435\\
12	0.801241373583\\
14	0.835054290373455\\
16	0.86311493687261\\
18	0.886401892175229\\
20	0.905727259011752\\
22	0.921764984790537\\
24	0.935074364650245\\
26	0.946119546161224\\
28	0.955285715877351\\
30	0.962892532223585\\
};
\addlegendentry{$ t=1,\; \mathcal{E}_1$}
\addplot 
table[row sep=crcr]{%
	2	0.495\\
	4	0.495\\
	6	0.495\\
	8	0.495\\
	10	0.495\\
	12	0.495\\
	14	0.495\\
	16	0.495\\
	18	0.495\\
	20	0.495\\
	22	0.495\\
	24	0.495\\
	26	0.495\\
	28	0.495\\
	30	0.495\\
};
\addlegendentry{$ t=1, \;\mathcal E_2$}

\addplot 
table[row sep=crcr]{%
	12	0.3902\\
	14	0.442\\
	16	0.4886\\
	18	0.531\\
	20	0.5706\\
	22	0.6072\\
	24	0.6414\\
	26	0.6736\\
	28	0.7027\\
	30	0.7292\\
};
\addlegendentry{$ (d_{C,1},d_{C,2})=(2,4),\,j=2,\; \mathcal{E}_1,\mathcal{E}_2$}



\end{axis}

\end{tikzpicture}%
		\caption{\label{Fig:CodeDesign}The success probabilities of the two-sided semi-global decoding mode over the SB Gilbert-Elliot channel for the designs in lines 1,2, and 6 in Table~\ref{Tbl:Bleft}.}
	\end{center}
\end{figure}

\begin{example}\label{Ex:l4r16}
	Table~\ref{Tbl:l4r16} lists the thresholds of different designs for high-rate SC-LDPCL protographs with $l=4 $ and $ r=16 $ (rate $ 0.75 $) (all designs induce symmetric SBs). The global thresholds are calculated for protographs with $ M=8 $ SBs.
	\begin{table}
		\caption{\label{Tbl:l4r16}Thresholds of SC-LDPCL designs with $l=4,r=16$.}
		\begin{center}
			\[\begin{array}{c|c|c|c|c|c|c|c|c}
			\#&t	&\udc	&	j	&\epsilon^*_1&\epsilon^*_2&\epsilon^*_3&\epsilon^*_G&q(0.16)	\\
			\hline
			1&0		&		&		&0.1931		&0.1931		&0.1931		&0.1931		&1	\\[3mm]
			2&1		&8		&	1	&0.1568		&0.1794		&0.2036		&0.2119		&2	\\[3mm]
			3&2		&(5,11)	&	1	&0.0667		&0.1601		&0.2082		&0.2288		&33	\\[3mm]
			4&2		&(6,10)	&	2	&0.0667		&0.1609		&0.2121		&0.2298		&15	\\[3mm]
			5&2		&(7,9)	&	3	&0.0667		&0.1613		&0.2142		&0.2304		&13	\\[3mm]
			6&2		&(8,8)	&	4	&0.0667		&0.1615		&0.2149		&0.2306		&12	\\[3mm]
			7&2		&(6,10)	&	1	&0.0667		&0.1598		&0.1999		&0.2326		&\infty	\\[3mm]
			8&2		&(7,9)	&	2	&0.0667		&0.1604		&0.1999		&0.2330		&25 \\[3mm]
			9&2		&(8,8)	&	3	&0.0667		&0.1605		&0.1999		&0.2331		&21	\\[3mm]
			10&2	&(7,9)	&	1	&0.0667		&0.1592		&0.1667		&0.2368		&\infty		\\[3mm]
			11&2	&(8,8)	&	2	&0.0667		&0.1594		&0.1667		&0.2368		&\infty		\\[3mm]
			12&2	&(8,8)	&	1	&0.0667		&0.1428		&0.1429		&0.2431		&\infty		\\[3mm]
			\end{array}.\]
		\end{center}
	\end{table}
	
\end{example}

\section{Summary}\label{Sec:Summ}

This paper analyzes and characterizes semi-global decoding of SC-LDPC protographs with SB locality. 
We perform a two-step analysis of the decoder: 1) from the perspective of a single SB decoded in the process, 2) the full decoding process with $ d $ helper SBs. In the former, we define SB thresholds and prove theoretical results about the incoming and outgoing DE values to and from SBs. In the latter, we use the single-SB results to prove characterization results about semi-global thresholds over memoryless channels, and to derive lower bounds on the semi-global performance over the SB Markov-varying channel.

The most interesting line of future work is to find relations between the semi-global thresholds studied here and the classical global threshold. These may lead to a systematic way to design protographs with improved global thresholds. We find encouragement for this direction in a variation of the SG decoder that empirically approaches the performance of the global decoder. This mode progresses similarly to the SG decoder, but instead of stopping at the target (Phase 1 in Figure~\ref{Fig:BFSG}), it proceeds outward toward the termination SBs (Phase 2), and then returns to the target again (Phase 3).

\begin{figure}
	\begin{center}
		\begin{tikzpicture}[>=latex]\label{Tikz:BFSG decoding}
		
		\tikzstyle{SB}=[rectangle,very thick,draw, rounded corners, minimum width=1.2cm,minimum height=0.5cm,fill=white]
		\tikzstyle{W}=[rectangle,opacity=0.2,fill=blue!50!white,draw,dashed,blue,thick, minimum width=1.2cm,minimum height=8mm]
		\pgfmathsetmacro{\x}{1}
		\pgfmathsetmacro{\y}{1.1}
		
		\begin{scope}[yshift=-12*\y mm]
		\node (p1) {Phase 1:};
		\node (SB1) [SB,right=\x mm of p1]  {\footnotesize $1$};
		
		\node (dots1) [right=0 mm of SB1 ]  {$ \cdots$};
		\node (SB2) [SB,right=0 mm of dots1 ]  {\footnotesize $m-2$};
		\node (SBc1) [SB,right=0 mm of SB2 ]  {\footnotesize $m$$-1$};
		\node (SBc2) [SB,right=0 mm of SBc1,fill=red!50!white  ]  {\footnotesize $m$};
		\node (SBc3) [SB,right=0 mm of SBc2 ]  {\footnotesize $m$$+1$};
		\node (SBMm1) [SB,right=0 mm of SBc3 ]  {\footnotesize $m+2$};
		\node (dots2) [right=0 mm of SBMm1 ]  {$ \cdots$};
		\node (SBM) [SB,right=0 mm of dots2 ]  {\footnotesize $M$};
		\node (W1) [W, minimum width=7mm] at(dots1) {};\node (W2) [W, minimum width=7mm] at(dots2) {};
		\node (x1) at(W1.north east) {};\node (x2) [right =5*\x mm of x1] {};
		\node (xM) at(W2.north west) {};\node (xm1) [left =5*\x mm of xM] {};
		\draw [thick, ->,>=latex,blue] (x1)--(x2);\draw [thick, ->,>=latex,blue] (xM)--(xm1);
		\end{scope}
		
		\begin{scope}[yshift=-2*12*\y mm]
		\node (p2) {Phase 2:};
		\node (SB1) [SB,right=\x mm of p2]  {\footnotesize $1$};
		
		\node (dots1) [right=0 mm of SB1 ]  {$ \cdots$};
		\node (SB2) [SB,right=0 mm of dots1 ]  {\footnotesize $m-2$};
		\node (SBc1) [SB,right=0 mm of SB2 ]  {\footnotesize $m$$-1$};
		\node (SBc2) [SB,right=0 mm of SBc1,fill=red!50!white ]  {\footnotesize $m$};
		\node (SBc3) [SB,right=0 mm of SBc2 ]  {\footnotesize $m$$+1$};
		\node (SBMm1) [SB,right=0 mm of SBc3 ]  {\footnotesize $m+2$};
		\node (dots2) [right=0 mm of SBMm1 ]  {$ \cdots$};
		\node (SBM) [SB,right=0 mm of dots2 ]  {\footnotesize $M$};
		\node (W1) [W] at(SBc2) {};
		\node (x1) at(W1.north east) {};\node (x2) [right =5*\x mm of x1] {};
		\node (x3) at(W1.north west) {};\node (x4) [left =5*\x mm of x3] {};
		\draw [thick, ->,>=latex,blue] (x1)--(x2);\draw [thick, ->,>=latex,blue] (x3)--(x4);
		\end{scope}
		
		\begin{scope}[yshift=-3*12*\y mm]
		\node (p3) {Phase 3:};
		\node (SB1) [SB,right=\x mm of p3]  {\footnotesize $1$};
		
		\node (dots1) [right=0 mm of SB1 ]  {$ \cdots$};
		\node (SB2) [SB,right=0 mm of dots1 ]  {\footnotesize $m-2$};
		\node (SBc1) [SB,right=0 mm of SB2 ]  {\footnotesize $m$$-1$};
		\node (SBc2) [SB,right=0 mm of SBc1,fill=red!50!white  ]  {\footnotesize $m$};
		\node (SBc3) [SB,right=0 mm of SBc2 ]  {\footnotesize $m$$+1$};
		\node (SBMm1) [SB,right=0 mm of SBc3 ]  {\footnotesize $m+2$};
		\node (dots2) [right=0 mm of SBMm1 ]  {$ \cdots$};
		\node (SBM) [SB,right=0 mm of dots2 ]  {\footnotesize $M$};
		\node (W1) [W, minimum width=7mm] at(dots1) {};\node (W2) [W, minimum width=7mm] at(dots2) {};
		\node (x1) at(W1.north east) {};\node (x2) [right =5*\x mm of x1] {};
		\node (xM) at(W2.north west) {};\node (xm1) [left =5*\x mm of xM] {};
		\draw [thick, ->,>=latex,blue] (x1)--(x2);\draw [thick, ->,>=latex,blue] (xM)--(xm1);
		\end{scope}
		
		\end{tikzpicture}
		
	\end{center}
	\caption{\label{Fig:BFSG} The three phases in the modified semi-global decoder.}
\end{figure}
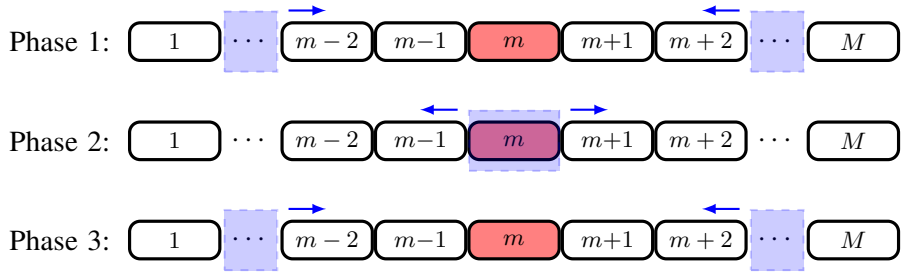

\section{Acknowledgments} \label{Sec:Ack}
This work was supported in part by the Israel Science Foundation and in part by the US-Israel Binational Science Foundation.

\appendices

\section{Proof of Lemma~\ref{Prop:symSB}}\label{App:symSB}
We can assume w.l.o.g. that $ \udc\left (B_\mathrm{left}\right ) $ and $ \udv\left (B_\mathrm{left}\right ) $ are sorted in ascending order (if not, apply a sorting permutation).
Consider the row (CN) and column (VN) permutations $ \pi_C $ and $ \pi_V $, respectively, defined by
\begin{align*}
\begin{array}{llll}
\pi_C(i) &= &t+1-i,&1\leq i\leq t,\\
\pi_V(j) &= &r+1-j,&1\leq j\leq r.
\end{array}
\end{align*}
Since we assumed that  $ \udc\left (B_\mathrm{left}\right ) $ and $ \udv\left (B_\mathrm{left}\right ) $ are sorted in ascending order, then the row and column permutations that transform $ B_\mathrm{left} $ into $ B_\mathrm{left} $ are given by $ \pi_C $ and $  \pi_V$, respectively. Thus, in view of \eqref{Eq:dBU=dBD1}, for every $ i\in \{1,\ldots,t\} $ we have
\begin{align*}
r-\left [ \udc\left (B_\mathrm{left}\right )\right ]_{i}
&=\left [ \udc\left (B_\mathrm{right}\right )\right ]_{i}\\
&=\left [ \pi_C\left (\udc\left (B_\mathrm{left}\right )\right )\right ]_i \\
&=\left [ \udc\left (B_\mathrm{left}\right )\right ]_{\pi_C\left (i \right )}\\
&=\left [ \udc\left (B_\mathrm{left}\right )\right ]_{t+1-i}
\end{align*}
Similarly, it can be shown that for every $ j\in\{1,\ldots,r\} $, $ t-\left [ \udv\left (B_\mathrm{left}\right )\right ]_{j}=\left [ \udv\left (B_\mathrm{left}\right )\right ]_{r+1-j} $, which in view of Definition~\ref{Def:symdeg} implies that the degree-profile of $ B_\mathrm{left} $ is symmetric.

For the converse, assume that $ t\leq 2 $ and that the degree-profile of $ B_\mathrm{left} $ is symmetric. If $ t=1 $, then from Corollary~\ref{Coro:nec_symdeg}, $ r $ is even. We get that $  B_\mathrm{left} $ is a single-row matrix with $ r/2 $ ones and $ r/2 $ zeros. Since $  B_\mathrm{right}=1- B_\mathrm{left}$, then $  B_\mathrm{left} $  also has $ r/2 $ ones and $ r/2 $ zeros. The permutation that swaps between ones and zeros will transform $ B_\mathrm{left} $ into $ B_\mathrm{right} $. In view of Definition~\ref{Def:SymSB}, the SB is symmetric.

Now assume $ t=2 $. Let $ \udv\left(B_\mathrm{left}\right )=\left (d_1,\ldots,d_r\right )$ be the VN degree profile of $ B_\mathrm{left} $. For $ i\in\{0,1,2\} $, let $ \mathcal J_i =\{1\leq j \leq r\colon  d_j = i\}$. $ \mathcal J_2 $ and $ \mathcal J_0 $ are the column indices in which $ B_\mathrm{left} $ has only ones and zeros, respectively.
Consider the row-swap permutation $ \pi_C(1) =2,\pi_C(2) =1$ and any column (VN) permutation $ \pi_V $ such that $\pi_V(\mathcal J_2) = \mathcal J_0$.
Since $ \underline d $ is symmetric, then $ |I_2|=|I_0| $ and such a column permutation exists. It can be verified that these permutations transform $ B_\mathrm{left} $ into $ B_\mathrm{right} $, and this the SB is symmetric.

\bibliographystyle{IEEEtran}
\bibliography{SG_Dec_Refs}
\end{document}